%% file: ICDCN-ByzLinSyncRegs.tex
\documentclass[sigconf]{acmart}
\AtBeginDocument{%
  }

\setcopyright{acmlicensed}
\copyrightyear{2027}
\acmYear{2027}
\acmDOI{ }
\acmConference[ICDCN 2027]{ }
\acmISBN{ }

\usepackage{wrapfig} 
\usepackage{algorithm}
\usepackage{algpseudocode}
\usepackage{xspace}
\usepackage{todonotes}
\usepackage{multicol}
\usepackage{multirow}
\usepackage{enumitem}
\theoremstyle{definition}
\newtheorem{definition}{Definition}
\usepackage{graphicx}
\usepackage{hyperref}
\input{macros}
\newtheorem{observation}{Observation}

\begin{document}

\title{Byzantine-linearizable Single-Writer Multi-Reader Register in the GPU-Inspired Shared Memory Model}

\author{Chryssis Georgiou}
\affiliation{%
  \institution{University of Cyprus}
  \country{Cyprus}}
\email{}

\author{Sathya Peri}
\affiliation{%
  \institution{IIT Hyderabad}
  \city{Hyderabad}
  \country{India}}
\email{}

\author{Manaswani Piduguralla}
\affiliation{%
  \institution{IIT Hyderabad}
  \city{Hyderabad}
  \country{India}}
\email{}

\author{Yashi Rastogi}
\affiliation{%
  \institution{IIT Hyderabad}
  \city{Hyderabad}
  \country{India}}
\email{}
\renewcommand{\shortauthors}{Georgiou et al.}

\renewcommand{\shortauthors}{A.A}

\begin{abstract}
The construction of fault-tolerant atomic registers is a fundamental problem in distributed computing. This paper studies the problem in the context of GPUs, which have become a primary platform for scientific computing and machine learning, yet increasingly face Byzantine thread behaviour arising from multi-tenant execution. A central challenge in this setting is the absence of access control. Byzantine-tolerant registers have been studied extensively in both shared-memory and message-passing systems, and most of this work assumes that only the designated writer may modify the register. GPU shared memory offers no such ownership; threads within a kernel may freely access shared objects, so a Byzantine thread may invoke write operations on any shared object, including the register itself. This raises a fundamental question: {\em if any thread can overwrite shared memory, and the writer itself may be Byzantine, can atomic register semantics still be enforced?}

We study this question in the GPU-inspired shared-memory model of Georgiou \emph{et al.}, which captures a single thread block on one streaming multiprocessor under phase-based, lock-step execution, and present a family of implementations of a Byzantine-linearizable single-writer multi-reader (SWMR) register. \AReg layers authentication over a crash-tolerant register using only constant memory, at the cost of a minimal, quantifiable probability that a read observes an invalid value. \snapqueue instead retains a bounded window of recent authenticated writes, driving this probability to zero while tolerating up to $n-1$ Byzantine processes, in both lock-free and wait-free variants. An idealized signature-aware memory, which verifies signatures within the object itself, marks the upper bound of this design space. Our results show that the GPU architectural features of bounded phases and lock-step execution can partially substitute for the absence of access control.
\end{abstract}
\begin{CCSXML}
<ccs2012>
 <concept>
  <concept_id>10003752.10003809.10010172</concept_id>
  <concept_desc>Theory of computation~Distributed algorithms</concept_desc>
  <concept_significance>500</concept_significance>
 </concept>
 <concept>
  <concept_id>10003752.10003809.10011778</concept_id>
  <concept_desc>Theory of computation~Concurrent algorithms</concept_desc>
  <concept_significance>300</concept_significance>
 </concept>
</ccs2012>
\end{CCSXML}

\ccsdesc[500]{Theory of computation~Distributed algorithms}
\ccsdesc[300]{Theory of computation~Concurrent algorithms}

\keywords{Byzantine linearizability, SWMR register, GPU-inspired shared memory, Byzantine fault tolerance, wait-freedom, authentication}
\maketitle
\vspace{-0.2cm}
\section{Introduction}
\label{sec:Intro}

\input{Introduction}
\vspace{-0.5em}

\section{Related work}
\label{sec:RelatedWork}
\input{RelatedWork}
\vspace{-0.5em}

\section{System Model}
\label{sec:model}

We follow the GPU-inspired shared-memory model introduced by Georgiou et al.~\cite{DBLP:conf/europar/GeorgiouPP25}; we do not amend the model, but restate it here to introduce notation and clarify the assumptions used in our analysis. Justification of the model's assumptions can be found in~\cite{DBLP:conf/europar/GeorgiouPP25}. The model is grounded in the behavior of real GPU hardware, capturing key execution features such as warp scheduling and lock-step execution by focusing on a single Streaming Multiprocessor (SM) executing a single thread block comprising multiple concurrent processes that share on-chip memory.

\input{SystemModel}

\section{Byzantine-Linearizable SWMR Register Using a Crash-tolerant Atomic Register with Authentication}
\label{sec:atReg}
\input{AReg}
\vspace{-0.5em}

\section{Byzantine-Linearizable SWMR Register Using a \snapqueue}
\label{sec:queue}
\input{CirclularQue}
\footnotetext{$P(\bot)$ when Crash-Tolerant Atomic Register and the scheduler are unbiased.}


\input{conclusion}

\bibliographystyle{ACM-Reference-Format}
\bibliography{references}

\newpage
\appendix
\section{Appendix}
\label{sec:app}

\subsection{Authentication In GPU Systems}
\label{sec:appAuth}
\subsubsection{Authentication using Signatures}
Digital signatures are cryptographic primitives used to authenticate digital messages and ensure their integrity. In a public-key signature scheme, each authorized writer process is provisioned with a key pair consisting of a private signing key and a corresponding public verification key. The private key is kept secret by the writer and is used to generate a signature over a message, while the public key is distributed to readers for verification.

A signature is cryptographically bound to the specific message being signed: even a single-bit modification to the message results in a different verification outcome, causing the signature check to fail. Consequently, Byzantine readers cannot alter a valid message or fabricate a new authenticated message without knowledge of the writer's private key. Under standard security assumptions, forging a valid signature without the private key is computationally infeasible. Digital signatures can be realized using well-established public-key cryptographic schemes such as RSA~\cite{rivest+:RSA:ACM} or DSA~\cite{menezes+:cryptoBook:1996}.

\subsubsection{Authentication with the Help of CUDA Kernel}
An alternative lightweight authentication mechanism leverages CUDA's built-in thread indexing variables. Within a CUDA kernel, each thread can be uniquely identified using a combination of special registers such as \texttt{blockIdx}, \texttt{blockDim}, and \texttt{threadIdx}~\cite{david+:PMPP:science:2017}. For example, a globally unique thread identifier can be computed as \texttt{tid = blockIdx.x * blockDim.x + threadIdx.x}. These built-in variables are initialized by the CUDA runtime at kernel launch and are exposed as read-only to the executing thread; they cannot be modified by user code~\cite{david+:PMPP:science:2017}.

Given that object's write method implementation itself cannot be altered by Byzantine threads, the method can append the computed unique thread identifier to the value before writing it to the register. Because threads cannot forge or alter their hardware-assigned indices within a given kernel execution, a malicious thread cannot impersonate another thread's identifier. Readers can therefore verify that the stored identifier matches the pre-designated writer thread ID and discard any value that does not match.

\subsection{SWMR Byzantine Linearizability in the GPU-inspired Model}
\begin{table*}[h]
\centering
\small
\begin{tabular}{|c|p{10cm}|}
\hline
\textbf{Symbol} & \textbf{Meaning} \\
\hline
\( n \) & Total number of processes (threads) in the system \\ \hline
\( R = \{R_1,R_2,\dots,R_n\} \) & Set of all reader processes \\ \hline
\( p \) & Number of reader processes scheduled per phase \\ \hline
\( f \) & Maximum number of Byzantine (faulty) processes \\ \hline
\( C \) & Set of correct reader processes, with \( C \subseteq R \) \\ \hline
\( F \) & Set of Byzantine (faulty) reader processes, with \( F \subseteq R \), \( |F| = f \), and \( C \cap F = \emptyset \) \\ \hline
\( \prec \) & Phase precedence relation: 
if \( op_1 \) completes in phase \( j \) and 
\( op_2 \) starts in phase \( k \) with \( j < k \),
then \( op_1 \prec op_2 \) \\ \hline
\( <_H /\rightarrow  \) & Happens-before relation in history \( H \) \\ \hline
\( \Readij{i}{j}{t}{}{x} \) & Read operation by thread \( t \), starting in phase \( i \) and completing in phase \( j \) \\ \hline
\( W^{i,j}(x) \) & Write operation of value \( x \), starting in phase \( i \) and completing in phase \( j \) \\ \hline
\( \ell(op) \) / lp & Linearization point of operation \( op \) \\ \hline
\( \bot \) & Invalid or unauthenticated value \\ \hline
\( \beta \) & Number of Byzantine readers in a Byzantine phase \\ \hline
\( \gamma \) & Scheduler bias factor \\ \hline
\end{tabular}
\caption{Summary of notation used throughout the paper.}
\label{tab:notation}
\end{table*}

\label{sec:linz}

\input{ByzLin}

\input{sufficiency_lemma}
\subsection{Correctness of Crash-Tolerant Atomic Register}
\label{sec:CCAeg}
\input{CARegCorrectness}

\subsection{Analysis of Crash-Tolerant Atomic Register}
\input{AnalysisCAReg}

\subsection{\textbf{\snapqueue Implementation}}
\label{sec:extra_snapobject}
We present four implementations of the ordered object used by the
register construction. Each implementation realises the same abstract
interface but differs in its underlying data structure and operational
assumptions. The linked-list-based and wait-free circular-array-based
atomic objects are detailed in the main paper. Here, we discuss in detail
the remaining two implementations. These alternative designs highlight
different trade-offs between simplicity and implementation feasibility,
while preserving the semantics required by the algorithm.

\subsubsection{\textbf{Linked-List Based Atomic Object}}

Algorithm~\ref{alg:LinkedList}, described in the main paper
(Section~\ref{subsec:ll-atomic-object}), realizes the ordered object as a
singly linked list: a shared pointer \texttt{head} points to the most recently
inserted node, and every node stores
$\langle \mathit{val},\mathit{seq},\mathit{sig},\mathit{next}\rangle$, where
\texttt{seq} is the writer's sequence number and \texttt{sig} the signature over
$\langle \mathit{val},\mathit{seq}\rangle$.
\textsc{Insert}$(\langle v,\mathit{seq},\mathit{sig}\rangle)$ links a new node at
the head with a compare-and-swap (CAS)\footnote{\textbf{Compare-and-Swap (CAS):} CAS is an atomic hardware primitive supported by modern architectures. Given a memory location $O$.$\textsc{CAS}(\varepsilon, \nu)$ atomically compares the current value of $O$ with the expected value $\varepsilon$; if they are equal, it updates $O$ to $\nu$ and returns true, otherwise it leaves $O$ unchanged and returns false.}
retry loop, whose successful \textsc{CAS} is its linearization point;
\textsc{ReadLatest} traverses from \texttt{head} and returns the $p{+}1$ newest
entries, which the reader (Algorithm~\ref{alg:algo2}) then verifies over
$\langle \mathit{val},\mathit{seq}\rangle$, returning the valid value with the
largest sequence number. The object treats \texttt{seq} and \texttt{sig} as
opaque data and performs no authentication itself. We now establish its
correctness.

\paragraph{\textbf{Correctness of the Linked-List Atomic Object}}
\label{subsec:ll-correctness}

We show that Algorithm~\ref{alg:LinkedList} implements a linearizable
\snapqueue object. We then use this to establish
Byzantine linearizability of the resulting SWMR register.
The first three conditions of Def.~\ref{def:byzlinz} hold when the writer is correct, while the fourth condition (Intra-Phase Reader Consistency) holds in both cases, whether the writer is correct or malicious, due to the lock-step execution model.
All four conditions are established against the general adversary: since the
signature binds each value to its sequence number, a Byzantine process may
re-insert an older writer-signed entry but cannot attach a fresh sequence number
to it, so replayed entries never displace the writer's most recent value.

\begin{lemma}[Linearization Points of Algorithm~\ref{alg:LinkedList}]
\label{lem:ll-linearization}
Every completed method call of the Linked-List--Based Atomic Object (Algorithm~\ref{alg:LinkedList}) has a unique
linearization point.
\end{lemma}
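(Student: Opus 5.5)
We prove the statement by treating the two methods of Algorithm~\ref{alg:LinkedList} separately. For each we name a single primitive step inside its execution interval as its linearization point, then check two things: the step is unique, and ordering all completed calls by these points gives a legal sequential history of the \snapqueue object (an insert-at-head list whose \textsc{ReadLatest} returns the $p{+}1$ newest entries).

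\textbf{Insert.} A completed \textsc{Insert}$(\langle v,\mathit{seq},\mathit{sig}\rangle)$ runs a CAS retry loop on \texttt{head}. Each iteration performs three steps: it reads \texttt{head} into a local variable, sets the new node's \texttt{next} to that value, and attempts \texttt{head.CAS(old, node)}. The call returns only after a successful CAS, and it returns right after that success. So a completed call has exactly one successful CAS, and that step is its linearization point.

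\begin{itemize}[leftmargin=*]
\item \emph{Uniqueness.} Every earlier iteration failed its CAS and wrote nothing to shared memory, so no other step of the call is a candidate.
\item \emph{The effect is atomic.} At the moment of the successful CAS, \texttt{head} still equals \texttt{old}, and the node's \texttt{next} was set before the CAS and is never changed afterwards. Hence the list reachable from \texttt{head} is exactly the previous list with the new node prepended. This is precisely the sequential effect of \textsc{Insert}.
\end{itemize}

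\textbf{ReadLatest.} We take the linearization point to be the single read of \texttt{head} at the start of the traversal. Afterwards the method follows only \texttt{next} pointers. These pointers are immutable once their node is published, since the object never deletes nodes or rewrites \texttt{next}. Therefore the nodes the traversal reaches are exactly the $p{+}1$ newest entries in the list as it stood at that read, however many inserts land concurrently. Every insert linearized before this read is visible to the traversal, and none linearized after it is.

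\textbf{Lock-step model and the main obstacle.} The phase-based, lock-step model of Section~\ref{sec:model} executes each shared-memory primitive atomically. Within a phase, concurrent CASes on \texttt{head} are therefore serialized by the hardware, so distinct calls receive distinct, totally ordered points. Ordering completed calls by these points is consistent with real-time and phase precedence $\prec$, because each point lies within its call's interval. Byzantine processes pose no problem: they can only invoke the object's own methods, as argued in Section~\ref{subsec:ll-atomic-object}, so every modification of \texttt{head} is an \textsc{Insert} CAS, which we have already linearized. The main obstacle is justifying the immutability of \texttt{next} and the absence of reclamation, with no ABA on \texttt{head}. I would settle it by appealing to the object's insert-only interface: nodes are freshly allocated and never freed during the kernel, so a CAS succeeds only if \texttt{head} was unchanged since the matching read.
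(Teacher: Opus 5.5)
Your proposal is correct and takes essentially the same approach as the paper: the successful \textsc{CAS} on \texttt{head} is the linearization point of \textsc{Insert}, and the single atomic read of \texttt{head} is the linearization point of \textsc{ReadLatest}. As in the paper, this rests on the atomicity of \textsc{CAS} and on the immutability of the sublist reachable from the observed head. Your extra checks (legality of the induced order, and no ABA because nodes are never freed) anticipate what the paper proves in Lemma~\ref{thm:ll-linearizable}. Like the paper, you implicitly exclude the optional reclamation of Algorithm~\ref{alg:memoryReclaim}, which would recycle nodes and rewrite a \texttt{next} pointer.
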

\begin{proof}
Consider the two operations.

\textbf{Write (\textsc{Insert}).}
A write operation repeatedly attempts to update the shared
pointer \texttt{head} using a \textsc{CAS} instruction inside its retry loop.
The first successful \textsc{CAS} that changes the value of
\texttt{head} from $h$ to $n$
constitutes the linearization point of the write.
Since \textsc{CAS} is atomic, this defines a unique
instantaneous linearization point between invocation and response.

\textbf{Read (\textsc{ReadLatest}).}
Consider a completed read operation and let $h$ be the value returned by its atomic read of \texttt{head}.
The sequence returned by the operation consists of the first $p+1$ entries reachable from $h$.
Since insertions occur only at the head of the list, and once a node is linked its \textit{val}, \textit{seq}, \textit{sig}, and \textit{next} fields are never modified and nodes are never removed, the sublist reachable from $h$ is immutable after it is read. Any concurrent insertion after the read of \texttt{head} creates a new node that precedes $h$ and therefore cannot affect the nodes reachable from $h$. Therefore the atomic read of \texttt{head} is the linearization point of the \textsc{ReadLatest} method.
\end{proof}

\begin{lemma}[Algorithm~\ref{alg:LinkedList} is Linearizable]
\label{thm:ll-linearizable}
All completed operations of Algorithm~\ref{alg:LinkedList} (Linked-List-Based Atomic Object) are linearizable and satisfy the sequential specification of \snapqueue.
\end{lemma}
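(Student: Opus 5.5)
We argue in the standard way: build a sequential history from the linearization points of Lemma~\ref{lem:ll-linearization} and check it against the sequential specification of \snapqueue. The specification says that \textsc{Insert} prepends an entry to an abstract sequence $S$, and that \textsc{ReadLatest} returns the $p{+}1$ most recent entries of $S$, or all of them if $|S| \le p$.

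\textbf{Step 1: order the operations.} Fix an execution and order all completed operations by their linearization points:
\begin{itemize}
\item the successful \textsc{CAS} on \texttt{head} for each \textsc{Insert};
\item the atomic read of \texttt{head} for each \textsc{ReadLatest}.
\end{itemize}
Pending \textsc{Insert}s that have already performed their successful \textsc{CAS} are completed and placed at that point. All other pending operations are discarded. Because each point lies between the operation's invocation and response, the order extends real-time precedence. By Lemma~\ref{lem:ll-linearization} these points are unique and atomic, so the order is total (ties are impossible) and respects the happens-before relation.

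\textbf{Step 2: representation invariant.} This is the core. Define the abstraction map: the abstract sequence $S$ at any moment is the list of entries obtained by following \texttt{next} pointers from the current value of \texttt{head}. We prove by induction over the linearization points that, immediately after the $k$-th successful \textsc{CAS}, $S$ equals the $k$ inserted entries in linearization order, newest first.

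The inductive step uses two facts.
\begin{itemize}
\item Before its \textsc{CAS}, an inserter sets the new node's \texttt{next} to the value $h$ it read. The \textsc{CAS} succeeds only if \texttt{head} still equals $h$, so the new node is prepended exactly to the current list.
\item Nodes are immutable once published and are never removed, so no other step changes $S$. Failed \textsc{CAS}es and reads have no effect on shared state.
\end{itemize}
A subtle point is ABA. We argue it cannot occur because nodes are freshly allocated and never reclaimed, so \texttt{head} never returns to an earlier value. We note this assumption explicitly.

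\textbf{Step 3: readers.} A \textsc{ReadLatest} that reads $h$ at its linearization point then traverses the list from $h$. By the immutability argument already used in Lemma~\ref{lem:ll-linearization}, the nodes reachable from $h$ never change. The traversal therefore returns the first $p{+}1$ entries of $S$ as it was at that point, which is exactly the sequential specification. If the traversal hits \texttt{null} first, it returns the whole of $S$, which is also consistent with the specification.

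Byzantine processes need no separate treatment. They can only call \textsc{Insert} through the same code, possibly with replayed entries, and such an insertion is simply another linearized \textsc{Insert} in the history; the object is agnostic to \texttt{seq} and \texttt{sig}. Replay issues belong to the later register-level proof.

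\textbf{Main obstacle.} The hardest step is making the invariant in Step~2 rigorous under concurrent retries. We must show that a node whose \textsc{CAS} failed is never reachable, and that its later retry links to a fresh head. This follows from the \textsc{CAS} semantics together with the no-reclamation (ABA-freedom) assumption, which we would state explicitly.
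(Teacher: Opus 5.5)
Your proof is correct and follows the paper's route. You order completed operations by the successful \textsc{CAS} and the atomic read of \texttt{head} from Lemma~\ref{lem:ll-linearization}, argue that the list reachable from \texttt{head} is the reverse of the insert linearization order, and conclude that each read returns the first $p+1$ nodes of the immutable sublist it observed, while your explicit abstraction-map induction, handling of pending inserts whose \textsc{CAS} already succeeded, and no-reclamation (ABA) caveat add rigor the paper's proof omits without changing the argument.
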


\begin{proof}
By Lemma~\ref{lem:ll-linearization}, every completed operation has a
unique linearization point (LP) that lies between its invocation and
response.
We define a total order on completed operations by ordering them
according to their linearization points. Since LPs are atomic events
(\textsc{CAS} for writes and atomic read of \texttt{head} for reads),
this order respects real-time precedence.

It remains to show that this order satisfies the sequential
specification of the \snapqueue object.

Each successful write links a new node at the head of
the list.
Because \textsc{CAS} operations on \texttt{head} are atomic,
they induce a total order on writes.
Moreover, the order of nodes in the list corresponds exactly
to this linearization order:
if
\[
\ell(\writeij{p_1}{q_1}{u}{x})
<
\ell(\writeij{p_2}{q_2}{v}{x}),
\]
then the node containing $x_v$
appears before the node containing $x_u$.

A read operation returns the first $p+1$ nodes encountered
during traversal starting from the head . The method call position in the total serial order is therefore determined by the value of \texttt{head} it observes.
Because nodes are never removed and the list only grows at
the head, each read returns a prefix consistent with sequential specification.
Therefore, Algorithm~\ref{alg:LinkedList} is linearizable.
\end{proof}

\begin{lemma}[Termination guarantee for \textsc{Insert}]
\label{lem:insertTermination}
Every method call of \textsc{Insert}$(\langle v, \mathit{seq}, sig \rangle)$ in ~\algoref{LinkedList} terminates after appending the value $v$ in a finite number of steps.
\end{lemma}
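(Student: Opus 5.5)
The plan is to bound the number of failed \textsc{CAS} attempts in the retry loop of \textsc{Insert} using the bounded, lock-step phase structure of the GPU-inspired model. Lock-freedom alone is not enough here, since a single call could in principle be starved by other inserters. I will make three observations and then combine them.

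First, I will show that each iteration of the retry loop performs a constant number of steps. These are allocating and initializing the fresh node with $\langle v,\mathit{seq},\mathit{sig}\rangle$ (done once, before the loop), reading \texttt{head}, setting $\mathit{node.next}$ to the value read, and issuing one \textsc{CAS}. No step in an iteration waits on another process, so every iteration finishes in finitely many of the inserter's own steps.

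Second, I will show that a \textsc{CAS} by the inserter fails only if some other process changed \texttt{head} between the inserter's read of \texttt{head} and its \textsc{CAS}. By Lemma~\ref{lem:ll-linearization}, every change to \texttt{head} is the linearization point of some successful \textsc{Insert}. So every failed iteration of our call is charged to a distinct successful \textsc{CAS} by another process.

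Third, I will bound the number of such interfering successes. This uses the model assumptions from Section~\ref{sec:model}. Within a phase at most $p$ reader processes plus the writer are scheduled, and execution is lock-step. A process scheduled in a phase executes a bounded number of steps in that phase. The total number of successful \textsc{CAS} operations on \texttt{head} in any phase is therefore bounded, say by $p+1$ times the per-phase step budget, and this bound covers Byzantine processes as well. I also need that the inserter is scheduled in each phase until it completes and gets at least one iteration per phase. Because the number of other processes is finite and each phase is bounded, only finitely many iterations of our call can fail.

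The third step is the main obstacle. The difficulty is that Byzantine threads could repeatedly \textsc{CAS} \texttt{head} to starve a correct inserter. My first approach is to use lock-step execution: processes in the same warp execute the \textsc{CAS} instruction in the same cycle, and hardware serializes them in some order. If contention occurs, the winner's success strictly shortens the set of competitors for that round. Thus among $k$ contending processes, each round of contention resolves at least one, and the inserter succeeds within at most $k \le n$ rounds. If this per-round argument does not fit the model cleanly, I will fall back to bounding the number of \textsc{CAS} steps in one phase and using the scheduling guarantee that a correct process is eventually scheduled in a phase where it can complete.

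Once the \textsc{CAS} succeeds, the node containing $v$ is linked at the head. By the immutability argument in Lemma~\ref{lem:ll-linearization}, $v$ stays appended afterwards, which gives termination with $v$ appended.
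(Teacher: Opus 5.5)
Your skeleton is the paper's: each failed \textsc{CAS} is charged to a distinct successful \textsc{CAS} of a competing \textsc{Insert}, and the phase structure is used to bound the competitors. The gap is in your third step, where you pass from a per-phase bound to termination.

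A finite bound on successful \textsc{CAS} operations per phase, together with ``at least one iteration per phase'' for the inserter, does not give finitely many failures. Consider a schedule in which, in every phase, one Byzantine reader completes an \textsc{Insert} between the writer's read of \texttt{head} and its \textsc{CAS}. The writer then fails once per phase forever, and every bound you state is still respected. Finiteness of the process set does not rescue this, because a Byzantine process may invoke \textsc{Insert} again whenever it is scheduled. Your round argument's claim that the inserter wins within $k \le n$ rounds therefore needs winners to leave the contention for good, and you never justify that. Your fallback (``a correct process is eventually scheduled in a phase where it can complete'') assumes what is to be proved.

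The missing ingredient is the design assumption of Section~\ref{subsec:design}: each process invokes \textsc{Insert} at most once per phase. The paper uses it to confine the contention to a single phase. At most $p+1$ processes are scheduled in the writer's phase, so at most $p$ competing inserts can succeed against it there. Each winner has then used its one insert for that phase. After at most $p$ failures no competitor remains, and the writer's next \textsc{CAS} in that phase succeeds. Implicitly, the phase must leave room for these $p+1$ attempts; this is the instruction count in the paper's memory upper-bound calculation. Your lock-step round argument becomes exactly this proof once you restrict it to one phase and invoke that assumption.

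Without such a cap, termination genuinely fails. The paper itself says elsewhere that the linked list is only lock-free, and that a register or scheduler biased toward Byzantine processes can make the writer retry without bound.
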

\begin{proof}
    Consider an invocation of \textsc{Insert}$(\langle v, \mathit{seq}, sig \rangle)$ in ~\algoref{LinkedList}; the procedure first allocates and initializes a new node $n$ containing the triple $\langle v, \mathit{seq}, sig \rangle$. It then repeatedly performs the following steps:
    \begin{enumerate}
        \item  reads the current value of $head$,
        \item sets $n.next$ to the observed head, and
        \item attempts to update $head$ using a CAS operation
    \end{enumerate}

The only step that may fail is the CAS operation. A CAS fails if and only if another concurrent \textsc{Insert} operation has successfully updated $head$ after it was read. Thus, every failed CAS implies that some other process has completed its \textsc{Insert} by performing a successful CAS.

In our design, each process invokes \textsc{Insert} at most once per phase. Since at most $p+1$ processes (one writer and $p$ readers) can invoke \textsc{Insert} in a phase, there can be at most $p$ competing CAS operations against a given process in that phase. Consequently, the number of possible CAS failures for a correct writer within a phase is bounded by $p$.

Each successful CAS strictly changes the value of $head$ and corresponds to exactly one completed \textsc{Insert}. Therefore, after at most $p$ competing successful insertions, no further concurrent insertions remain in that phase, and the writer's CAS must succeed.

Since CAS is atomic, the number of competing processes is finite, and the scheduler guarantees that every correct process is scheduled, the writer eventually performs a successful CAS. Hence, every invocation of \textsc{Insert}$(v)$ completes after a finite number of steps and appends $v$ to the list.
\end{proof}

\begin{lemma}[Termination guarantee for \textsc{ReadLatest()}]
Every method call of \textsc{ReadLatest()} in ~\algoref{LinkedList}  terminates in a finite number of steps.
\end{lemma}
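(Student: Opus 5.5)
The argument is a direct step count over the structure of \textsc{ReadLatest} in Algorithm~\ref{alg:LinkedList}. The procedure performs one atomic read of \texttt{head}, obtaining a pointer $h$. It then traverses at most $p+1$ nodes along the \textit{next} pointers, collecting their $\langle \mathit{val},\mathit{seq},\mathit{sig}\rangle$ entries. It stops early if it reaches a null pointer, which happens when fewer than $p+1$ nodes have ever been inserted. There is no retry loop and no \textsc{CAS}, so, unlike Lemma~\ref{lem:insertTermination}, nothing can force the operation to repeat because of contention.

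\textbf{Key steps.} First, the atomic read of \texttt{head} completes in a single step. Second, the traversal loop is bounded syntactically by the counter $p+1$ and by the null check, so it runs at most $p+1$ iterations, each a constant number of local steps plus one shared read of a node. Third, each node access is well-defined and cannot block or cycle. As argued in the proof of Lemma~\ref{lem:ll-linearization}, once a node is linked its fields, including \textit{next}, are never modified and nodes are never removed. Hence the sublist reachable from $h$ is immutable, acyclic, and consists of finitely many allocated nodes, since only finitely many successful inserts precede the read of \texttt{head}. Concurrent inserts only prepend before $h$ and do not affect the traversal. Finally, because the scheduler guarantees every correct process is scheduled, the reader executes these $O(p)$ steps in finite time. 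In fact the bound is independent of the behaviour of other processes, so the operation is wait-free.

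\textbf{Main obstacle.} The delicate point is excluding Byzantine interference with the traversal, for example a Byzantine thread corrupting a \textit{next} pointer to create a cycle or a dangling reference, given the lack of access control. I would first rely on the immutability invariant used in Lemma~\ref{lem:ll-linearization}, under which linked nodes are modified only through the object's own methods. I would then note that even if the list shape were adversarial, the explicit $p+1$ iteration cap alone bounds the number of steps. Termination therefore holds regardless: a cycle could only produce repeated entries, and those are filtered later by signature and sequence-number checks in Algorithm~\ref{alg:algo2}.
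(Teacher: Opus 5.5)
Your proof is correct and follows essentially the same route as the paper. Both arguments rest on a single atomic read of \texttt{head}, on the immutability of the sublist reachable from it (concurrent inserts only prepend), and on the explicit $p+1$ cap on loop iterations with no retry loop or blocking primitive, while your remarks on scheduler fairness and on the cap bounding steps even for an adversarial list shape are harmless additions that do not change the argument.
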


\begin{proof}
Consider an invocation of \textsc{ReadLatest}() in ~\algoref{LinkedList}.
The operation first performs an atomic read of the shared pointer $head$. $h$ denotes the value returned by this read.

The procedure then traverses the linked list starting from $h$. First, nodes are inserted only at the head of the list, and existing $next$ pointers are never modified or removed. Thus, once $h$ is read, the sublist reachable from $h$ is well-defined and finite at that moment. Concurrent insertions may add new nodes at the head, but they cannot affect the nodes reachable from $h$.

Second, the loop explicitly bounds the traversal to at most $p+1$ nodes. Therefore, the number of iterations of the loop is bounded by $p+1$, independent of concurrent activity.

Since each iteration consists only of local pointer reads and assignments, and no retry loops or blocking primitives are involved, the number of steps executed by \textsc{ReadLatest}() is finite.

Hence, every invocation of \textsc{ReadLatest}() terminates in a finite number of steps.
\end{proof}

\begin{theorem}
Every \textsc{ReadLatest()} in ~\algoref{LinkedList} operation that linearizes after the first completed write returns a non-empty list.
\end{theorem}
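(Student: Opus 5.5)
The plan is to prove the statement directly from the linearization points fixed in Lemma~\ref{lem:ll-linearization}, together with the structural invariants of Algorithm~\ref{alg:LinkedList} already used in the proof of Lemma~\ref{thm:ll-linearizable}: nodes are only linked at \texttt{head}, and are never removed. Since the \texttt{next} pointers of linked nodes are never modified, \texttt{head} only ever moves from one node to a newer node whose \texttt{next} field points to the old head.

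Fix a \textsc{ReadLatest} whose linearization point is its atomic read of \texttt{head}. Let this read return $h$, and assume it occurs after the linearization point of some completed write $W$. By Lemma~\ref{lem:insertTermination}, $W$ completes by performing a successful \textsc{CAS}. By Lemma~\ref{lem:ll-linearization}, that \textsc{CAS} is the linearization point of $W$, so it installs $W$'s node $n_W$ as \texttt{head} before the reader's atomic read.

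The first step is an invariant, proved by induction over successful \textsc{CAS} operations on \texttt{head}. Once \texttt{head} is non-null it stays non-null, and the node it designated earlier stays reachable from it. This holds because every later successful \textsc{CAS} replaces \texttt{head} by a fresh node $m$ with $m.\mathit{next}$ equal to the previous head. The \textsc{CAS} can only succeed if the head it observed is still current, so $m.\mathit{next}$ is exactly the node being replaced. No operation ever writes \texttt{null} into \texttt{head} or into a linked node's \texttt{next} field.

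The second step applies the invariant. At the time of the reader's atomic read, $h\neq\texttt{null}$, and $n_W$ is reachable from $h$.

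The third step looks at the traversal in \textsc{ReadLatest}. It starts at $h$ and appends nodes while the current pointer is non-null and fewer than $p+1$ entries have been collected. Since $h\neq\texttt{null}$ and $p+1\ge 1$, the first iteration executes and appends the entry at $h$. The returned list is therefore non-empty; more precisely, it has at least one entry and at most $p+1$.

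The main obstacle is the first step. I have to check that the \textsc{CAS} expected value is exactly the pointer stored in $m.\mathit{next}$ in the same loop iteration. I also have to ensure that a Byzantine process cannot reset \texttt{head} to \texttt{null}. For the first point, I would rely on the pseudocode, which sets $n.\mathit{next}$ to the observed head before issuing the \textsc{CAS}. For the second, I would rely on the model assumption, used throughout this section, that Byzantine threads interact with the object only through its methods. Under that assumption every write to \texttt{head} is an \textsc{Insert}'s \textsc{CAS} installing a non-null freshly allocated node. Re-inserted stale entries are harmless for this claim, because any such insertion still produces a non-null node.
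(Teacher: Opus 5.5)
Your proof is correct and follows essentially the same route as the paper's. The first write's successful \textsc{CAS} makes \texttt{head} non-null, and every later change to \texttt{head} is an \textsc{Insert}'s \textsc{CAS} installing a fresh non-null node, since Byzantine processes are confined to the object interface. Hence the reader's atomic read sees an $h$ that is not \texttt{NULL}, and the traversal appends at least one entry.

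Your version is more explicit than the paper's, through the induction over successful \textsc{CAS}es and the check of the loop guard. Your argument uses only that $h$ is not \texttt{NULL}. Two of your claims are stronger than you need and false as stated: that $n_W$ stays reachable, and that no \texttt{next} field is ever set to \texttt{NULL}. Both fail under the optional reclamation of Algorithm~\ref{alg:memoryReclaim}. This imprecision is shared by the paper's proof and does not affect the conclusion.
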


\begin{proof}
Let $W_1$ denote the first completed invocation of
\textsc{Insert}($\langle v, \mathit{seq}, sig \rangle$) by the (correct) writer in
~\algoref{LinkedList}.
By definition of the linearization point of \textsc{Insert}, the successful CAS that updates $head$ to a newly created node $n$ constitutes $\ell(W_1)$.

Immediately after $\ell(W_1)$, the shared pointer $head$ is no longer $NULL$ and points to node $n$.

In our model, Byzantine processes may invoke object methods arbitrarily, including \textsc{Insert}. However:

\begin{enumerate}
    \item They cannot modify shared memory except through the object's interface.
    \item Nodes, once linked into the list, are never removed.
    \item The pointer $head$ is only modified via successful CAS operations inside \textsc{Insert}.
\end{enumerate}

After $\ell(W_1)$, although Byzantine processes may subsequently perform additional successful \textsc{Insert} operations (thereby updating $head$ to point to newer nodes), the pointer $head$ can never revert to $NULL$, and the node inserted by $W_1$ remains reachable from the list.

Therefore, the returned sequence $S$ contains at least one element. Hence, every \textsc{ReadLatest}() operation that linearizes after the first completed write returns a non-empty list.
\end{proof}

\paragraph{\textbf{Upper Bound calculation for Memory}}
\label{sec:upperBound}
A phase cannot be smaller than one instruction per process; otherwise, the system would make no progress. If we treat one store and one fetch instruction as the basic unit, the write operation of the \snapqueue requires $4(p+1) + 6$ instructions. This is because a CAS operation may fail at most $p$ times, so a process may retry at most $p+1$ times.

The read operation, measured in terms of fetch and store instructions, requires $1 + 5(p+1)$ instructions.

Since most current GPU systems have $p = 32$, we obtain a per-process upper bound of
\[
\delta = 9 + 6p = 9 + 6 \cdot 32 = 201 \text{ phases}.
\]

Substituting into
\[
\Delta = (p+1)(\delta - 1)\frac{n}{p},
\]
we get
\[
\Delta = 33 \cdot 200 \cdot \frac{n}{32}.
\]

Simplifying,
\[
\Delta \approx 207n.
\]

Thus, the overall bound is $\Delta = 207n$.\\

\paragraph{\textbf{Byzantine Linearizability of the SWMR Register Built Using the Linked-List–Based Atomic Object}}
\label{sec:ll-swmr-byzlin}
We now establish Byzantine linearizability of the SWMR register constructed on top of the Linked-List–Based Atomic Object. 
We first prove a key lemma showing that every correct read returns the writer's most recent value, then show that the required ordering properties of Definition~\ref{def:byzlinz} are satisfied, and conclude that the resulting construction is Byzantine-linearizable.

\begin{lemma}[Latest Value]
\label{lem:ll-latest}
Let the writer be correct. Every correct read returns the value of the writer's
most recent write that linearizes before the read; once the first write has
completed, no correct read returns $\bot$.
\end{lemma}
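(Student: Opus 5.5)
The plan is to argue directly from the structure of \textsc{ReadLatest} and the reader's verification step in Algorithm~\ref{alg:algo2}, using the linearizability of the object (Lemma~\ref{thm:ll-linearizable}) as the backbone.

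Fix a correct read $r$ and let $h$ be the value of \texttt{head} it observes; by Lemma~\ref{lem:ll-linearization} this read of \texttt{head} is $\ell(r)$. Let $w$ be the writer's most recent write whose successful \textsc{CAS} precedes $\ell(r)$, carrying sequence number $s$ and a valid signature $\sigma$. By Lemma~\ref{thm:ll-linearizable}, the list reachable from $h$ is ordered by insertion linearization order, so $w$'s node is reachable from $h$.

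The first key step is to show that $w$'s node lies within the first $p+1$ entries returned. Since the writer is correct, it inserts at most once per phase. Every node inserted after $w$ and before $\ell(r)$ was therefore inserted by a reader, and by the same once-per-phase argument used in Lemma~\ref{lem:insertTermination}, the number of such nodes is bounded by $p$. I will appeal to the lock-step, phase-based model here.

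The second step is to argue that the verification picks $v$. Every entry passing signature verification over $\langle \mathit{val},\mathit{seq}\rangle$ must have been signed by the writer. Signatures are unforgeable and the writer is correct, so any valid entry carries a pair the writer actually produced in some write $w'$, with $w'$ inserted before $\ell(r)$ because it is reachable from $h$. The writer's sequence numbers strictly increase, so any such $w'$ has $\mathit{seq}\le s$, with equality only for $\langle v,s\rangle$ itself. This holds even when a Byzantine replay duplicates an entry. Hence the maximum valid sequence number among the returned entries is $s$, and the read returns $v$. If a write has completed, such a $w$ exists and a valid entry is present, so $\bot$ is never returned.

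The main obstacle is the counting bound in the first step. The worry is that Byzantine readers might insert more than $p$ nodes between $w$ and $\ell(r)$, for instance by spanning several phases while $w$ is the latest write. If this happens, $w$ could be pushed out of the window of $p+1$ entries.

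I would first try to show that a correct writer writes in every phase in which it is scheduled. If that fails, I would use the assumption that each process inserts at most once per phase together with the bound of $p$ scheduled readers per phase, and restrict the claim to reads that overlap the phase of $w$ or the following phase. If neither works, I would rely on the retained window being sized for the worst-case interleaving, as the paper's memory bound $\Delta$ suggests.
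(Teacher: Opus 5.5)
\textbf{Verdict: there is a genuine gap, in your first step.} Your overall structure matches the paper's proof: you locate the writer's latest node inside the $(p+1)$-node window, then argue that the largest-$\mathit{seq}$ verified entry is that node's pair, replays notwithstanding. Your second step is essentially the paper's argument. The first step does not go through as written.

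You derive the bound of $p$ intervening nodes from the writer inserting \emph{at most} once per phase. That points the wrong way: it says nothing about how many phases elapse between $\ell(w)$ and $\ell(r)$. If even one further phase intervenes, the readers of two warps can insert up to $2p$ nodes after $w$'s node. That pushes it out of the window, since \textsc{ReadLatest} stops after $p+1$ nodes. The within-phase count you borrow from Lemma~\ref{lem:insertTermination} is the right kind of count, but it only applies once the interval between $w$ and the read is confined to a single phase.

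What confines it is that the writer inserts \emph{at least} once in every phase. You do not need to prove this; it is the ``participation per phase'' design assumption of Section~\ref{subsec:CARegAssumption}, which the \snapqueue construction inherits. It must be combined with the lock-step ordering that, within a phase, insertions execute before reads. With both facts, $w$ is the writer's insertion in the phase of $\ell(r)$. Every node between $w$'s node and the \texttt{head} value read at $\ell(r)$ was then inserted by one of the at most $p$ readers of that phase, each inserting at most once (Section~\ref{subsec:design}). This is exactly the paper's justification. Your first fallback therefore points at the right ingredient, but it is an assumption to invoke, and you omit the insert-before-read ordering it needs.

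Your other two fallbacks would not work:
\begin{itemize}
\item Restricting the claim to reads in the phase of $w$ or the next one weakens the statement. It also still admits $2p$ intervening insertions unless the writer reinserts in that next phase.
\item $\Delta$ is only the garbage-collection retention bound. \textsc{ReadLatest} traverses at most $p+1$ nodes regardless, so a longer retained list does not enlarge what the reader sees.
\end{itemize}
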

\begin{proof}
Let $W^{*}$, with pair $\langle v^{*},\mathit{seq}^{*}\rangle$, be the writer's
most recent write preceding the read's linearization point. Since the writer
inserts in every phase, each reader inserts at most once per phase
(Section~\ref{subsec:design}), and insertions precede reads within a phase, at
most $p$ entries follow the node of $W^{*}$, so it lies in the $p+1$-node window
returned to the reader. Only pairs signed by the writer pass verification, and a
Byzantine process can re-insert such a pair but cannot alter its sequence number.
As the writer's sequence numbers strictly increase, no valid entry carries a
number exceeding $\mathit{seq}^{*}$, while the entry of $W^{*}$ is in the window
and verifies. Hence the reader's largest-$\mathit{seq}$ selection returns
$v^{*}$, and after the first write the window always contains a valid entry, so
$\bot$ is never returned.
\end{proof}
\begin{lemma}[No Future Read]
\label{thm:ll-no-future}
For every correct thread $t$ and phases $i \le j$,
\[
\Readij{i}{j}{t}{}{x} \neq \bot
\Rightarrow
\exists \writeij{p}{q}{}{x}
\text{ such that } p \le j.
\]
\end{lemma}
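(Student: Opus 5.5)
The plan is to trace the value $x \neq \bot$ returned by the correct read back to a node in the linked list, and from that node to a writer's \textsc{Insert}. Take the read $\Readij{i}{j}{t}{}{x}$ by correct thread $t$. By Lemma~\ref{lem:ll-linearization}, its \textsc{ReadLatest} linearizes at the atomic read of \texttt{head}, which happens inside the read interval, so in some phase $k \le j$. By the reader code (Algorithm~\ref{alg:algo2}), $x \neq \bot$ is returned only if some entry $\langle x,\mathit{seq},\mathit{sig}\rangle$ in the returned $p{+}1$-node window passes signature verification over $\langle x,\mathit{seq}\rangle$ with the writer's key.

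The first step is to argue that any verifying pair was actually signed by the writer. Under the unforgeability assumption (Appendix~\ref{sec:appAuth}), a Byzantine process cannot fabricate such a pair. So the pair $\langle x,\mathit{seq}\rangle$ was signed by the writer as part of some write $\writeij{p}{q}{}{x}$ with sequence number $\mathit{seq}$. This holds even if the node containing it was inserted by a Byzantine replayer. A replay still requires that the writer produced the signed pair first, so the original writer's operation exists regardless of who linked the node.

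The second step is to bound $p$, the start phase of that write. I would order the events as follows.
\begin{itemize}
\item The writer signs $\langle x,\mathit{seq}\rangle$ during $\writeij{p}{q}{}{x}$, hence in a phase $\ge p$.
\item Some node carrying that pair is linked by a successful \textsc{CAS} on \texttt{head}. Either it is the writer's own \textsc{Insert}, or a later Byzantine \textsc{Insert} of a copied pair, which happens after the signing event.
\item Because the list only grows at the head and linked nodes are immutable (proof of Lemma~\ref{lem:ll-linearization}), every node reachable from the value $h$ read by $t$ was linked at or before $t$'s read of \texttt{head}.
\end{itemize}
Chaining these gives that the start of the write, in phase $p$, precedes the signing, which precedes the linking, which precedes the read of \texttt{head} in phase $k$. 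Hence $p \le k \le j$, which is the required conclusion. Note that the argument needs no assumption about whether the writer is correct: a Byzantine writer's signed value still comes from an invocation of its write, which matches the paper's remark that these conditions are established against the general adversary.

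The main obstacle is the signing step, which must be phrased carefully. It has to exclude values that appear to be authenticated without any write having occurred. That requires relying on unforgeability, or on the hardware thread-ID variant, and treating "the writer produced this signed pair" as a write invocation that began no later than the signing event. I would make this explicit by noting that signing is done inside the write method itself. With that in place, the phase comparison is direct.
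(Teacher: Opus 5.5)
Your proof is correct for a correct writer, but it takes a different route from the paper. The paper derives No Future Read as a short corollary of the Latest Value lemma (Lemma~\ref{lem:ll-latest}). That lemma makes the returned $x$ the value of the writer's most recent write whose linearization point precedes the read's. Lemma~\ref{lem:lp-localization} then gives $p \le \ell(\writetwoij{p}{q}) < \ell(\Readthreeij{i}{j}{t}) \le j$.

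You argue by provenance instead. Unforgeability, together with the fact that a correct writer signs only inside \textsc{Write}, yields a write invoked no later than the signing event. The append-at-head, immutable-node structure then ensures that whichever \textsc{Insert} linked a node carrying that pair (the writer's or a replayer's) did so before $t$'s read of \texttt{head}. Causality then orders the phases.

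The paper's route is shorter and reuses one lemma for conditions (a)--(c). However, it inherits all the hypotheses behind Lemma~\ref{lem:ll-latest}: the writer inserting in every phase, one insert per process per phase, inserts preceding reads within a phase, and the writer's latest node lying in the $p+1$ window. No Future Read needs none of these. Your argument uses only value validity and monotone growth of the list. It therefore survives even if the window argument fails, e.g.\ if the writer's CAS loop is delayed across phases. It also handles replay by tracing the signed pair to its origin rather than through sequence-number maximality.

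One correction: drop your aside that the argument needs no assumption on the writer's correctness. The paper's claim that the conditions hold against the general adversary is made with a correct writer. It explicitly establishes (a)--(c) only in that case, and Lemma~\ref{lem:ll-latest} opens with ``Let the writer be correct.'' A Byzantine writer holds its private key and need not confine signing to its \textsc{Write} code. Per the paper's observation, it may even leak the key to Byzantine readers, who could then insert verifying pairs that correspond to no write invocation at all. So the step you single out as the crux, that signing happens inside the write method, is exactly what fails for a Byzantine writer.
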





\begin{proof}
Since $\Readij{i}{j}{t}{}{}$ is executed by a correct thread and returns
$x \neq \bot$, Lemma~\ref{lem:ll-latest} implies that $x$ is the value of the
writer's most recent write whose linearization point precedes that of the
read; denote this write by $\writeij{p}{q}{}{x}$. Thus
\[
\ell(\writeij{p}{q}{}{x}) < \ell(\Readij{i}{j}{t}{}{x}).
\]
By Lemma~\ref{lem:lp-localization}, every operation linearizes between its
invocation and its response, so
$p \le \ell(\writetwoij{p}{q})$ and $\ell(\Readthreeij{i}{j}{t}) \le j$.
Combining the three inequalities yields $p \le j$.
\end{proof}
\begin{lemma}[No Read from Distant-Past]
\label{thm:ll-no-distant}
Let $\writeij{i}{j}{u}{x}$ and $\writeij{k}{l}{v}{x}$
be two write operations issued by the correct writer such that
\[
\ell(\writeij{i}{j}{u}{x})
<
\ell(\writeij{k}{l}{v}{x}).
\]
For any correct read
$\Readij{p}{q}{t}{a}{x}$
with
\[
\ell(\writeij{k}{l}{v}{x})
<
\ell(\Readthreeij{p}{q}{t}),
\]
it holds that $a \neq u$.
\end{lemma}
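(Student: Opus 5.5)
The plan is to reduce the claim to Lemma~\ref{lem:ll-latest} together with the strict monotonicity of the correct writer's sequence numbers. Since the read $\Readij{p}{q}{t}{a}{x}$ is performed by a correct thread and the writer is correct, Lemma~\ref{lem:ll-latest} applies. Let $W^{*}$ be the writer's most recent write whose linearization point precedes $\ell(\Readthreeij{p}{q}{t})$. The lemma then says the read returns the value $v^{*}$ of $W^{*}$; in particular, $a = v^{*}$, and $a \neq \bot$ because $W^{*}$ exists.

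Next I will locate $W^{*}$ relative to the two given writes. By hypothesis, $\ell(\writeij{k}{l}{v}{x}) < \ell(\Readthreeij{p}{q}{t})$, so $\writeij{k}{l}{v}{x}$ is one of the candidate writes. Since $W^{*}$ is the maximal candidate in linearization order, either $W^{*}$ is $\writeij{k}{l}{v}{x}$ itself, or $\ell(\writeij{k}{l}{v}{x}) < \ell(W^{*})$. In both cases, chaining with $\ell(\writeij{i}{j}{u}{x}) < \ell(\writeij{k}{l}{v}{x})$ gives $\ell(\writeij{i}{j}{u}{x}) < \ell(W^{*})$, so $W^{*}$ is a strictly later write than $\writeij{i}{j}{u}{x}$.

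The main obstacle is turning ``different write'' into ``different returned value''. Plain values could in principle repeat, and a Byzantine process may re-insert the old signed entry $\langle u, \mathit{seq}_u\rangle$. I will therefore work at the level of the signed pairs $\langle \mathit{val},\mathit{seq}\rangle$. The writer's sequence numbers strictly increase along its writes, which are totally ordered by their CAS linearization points (Lemma~\ref{thm:ll-linearizable}). Hence $\mathit{seq}_u < \mathit{seq}^{*}$.

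The window returned by \textsc{ReadLatest} contains the valid entry of $W^{*}$ (shown in the proof of Lemma~\ref{lem:ll-latest}). The reader selects the valid entry with the largest sequence number, so a replayed entry carrying $\mathit{seq}_u$ can never be the one selected. By unforgeability, no entry can carry $u$ paired with a larger sequence number. Thus the returned pair is the one written by $W^{*}$, not that of $\writeij{i}{j}{u}{x}$, which gives $a \neq u$. Here $u$ is read as the value written by that specific write, i.e.\ identified by its signed pair, following the paper's convention of subscripting values by their writes.
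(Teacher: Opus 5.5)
Your proposal is correct and follows essentially the same route as the paper: apply Lemma~\ref{lem:ll-latest} to identify the returned value with that of the writer's most recent write preceding $\ell(\Readthreeij{p}{q}{t})$, observe that this write is $\writeij{k}{l}{v}{x}$ or a later one so its sequence number exceeds $\mathit{seq}_u$, and note that a replayed entry of $x_u$ retains $\mathit{seq}_u$ and is never selected. Your identification of values with their signed pairs makes explicit what the paper obtains from its assumption that the writer writes distinct values with distinct sequence numbers.
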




\begin{proof}
By Lemma~\ref{lem:ll-latest}, the read returns the value of the writer's most
recent write preceding its linearization point. Since
$\ell(\writeij{k}{l}{v}{x}) < \ell(\Readthreeij{p}{q}{t})$, that most recent
write is $\writeij{k}{l}{v}{x}$ or a later write of the writer, and its
sequence number is at least $\mathit{seq}_v > \mathit{seq}_u$. As the writer
writes distinct values with distinct sequence numbers, the returned value is
not $x_u$. In particular, a Byzantine process may re-insert the older entry of
$x_u$ near the head, but that entry carries $\mathit{seq}_u$ and is therefore
never selected by the reader (Algorithm~\ref{alg:algo2}).
\end{proof}

\begin{lemma}[No New-Old Inversion]
\label{thm:ll-no-new-old}
Let
\[
\Readij{i}{j}{t}{a}{x}
\quad\text{and}\quad
\Readij{k}{l}{t'}{b}{x}
\]
be two reads with $j < k$.
Then $b \ge a$.
\end{lemma}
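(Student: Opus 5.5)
The plan is to reduce the statement to the monotonicity of the writer's sequence numbers, via Lemma~\ref{lem:ll-latest}. Here ``$b \ge a$'' is read as comparing the writer's sequence numbers (equivalently, positions in the writer's write order) of the returned values. The case where either read returns $\bot$ is handled separately: before the first write no value exists, and afterwards Lemma~\ref{lem:ll-latest} rules $\bot$ out for correct reads.

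\textbf{Step 1: linearization points respect phase precedence.} Since $j<k$, the first read completes in phase $j$ and the second starts in phase $k$. By Lemma~\ref{lem:lp-localization}, each operation linearizes within its own interval, so
\[
\ell(\Readthreeij{i}{j}{t}) \le j < k \le \ell(\Readthreeij{k}{l}{t'}).
\]
Hence the first read linearizes strictly before the second.

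\textbf{Step 2: identify the returned values.} Let $W_a$ be the writer's most recent write linearized before $\ell(\Readthreeij{i}{j}{t})$, and let $W_b$ be the corresponding write for the second read. By Lemma~\ref{lem:ll-latest}, the first read returns the value of $W_a$ and the second returns the value of $W_b$.

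\textbf{Step 3: compare the two writes.} $W_a$ linearizes before the first read, which linearizes before the second. So $W_a$ is among the candidate writes for the second read. Since $W_b$ is the most recent of these candidates, either $W_b = W_a$ or $W_b$ follows $W_a$ in the writer's order. The writer's sequence numbers strictly increase, so $\mathit{seq}_b \ge \mathit{seq}_a$, that is, $b \ge a$.

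\textbf{Main obstacle: readers that are not both correct.} The argument above needs both readers correct, because Lemma~\ref{lem:ll-latest} only covers correct reads. If a reader is Byzantine, its return value is unconstrained, and the statement cannot be meant for it, so I would restrict to correct $t,t'$, as in the preceding lemmas. A second subtlety is replay: a Byzantine process may re-insert an old signed entry near the head. Lemma~\ref{lem:ll-latest} already absorbs this, because selection picks the largest verified sequence number. Finally, if the writer is Byzantine, I would note that only the Intra-Phase condition is claimed in that case, so this lemma is stated under a correct writer like its predecessors.
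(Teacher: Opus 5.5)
Your proposal is correct and follows essentially the same route as the paper. Both order the two reads' linearization points using $j<k$ and apply Lemma~\ref{lem:ll-latest} to each read. Both then note that every writer write preceding the first read also precedes the second, so the writer's increasing write order gives $b \ge a$. Your added remarks (correct readers only, the $\bot$ case, replay, and a correct writer) spell out hypotheses that the paper leaves implicit.
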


\begin{proof}
Since $j < k$, the first read completes before the second is invoked, so
\[
\ell(\Readthreeij{i}{j}{t}) < \ell(\Readthreeij{k}{l}{t'}).
\]
By Lemma~\ref{lem:ll-latest}, each read returns the value of the writer's most
recent write preceding its own linearization point. Every write of the writer
that precedes $\ell(\Readthreeij{i}{j}{t})$ also precedes
$\ell(\Readthreeij{k}{l}{t'})$, so the most recent write seen by the second
read is the same as, or later than, the one seen by the first. As the writer
writes $x_1, x_2, \ldots$ in increasing order, $b \ge a$.
\end{proof}


\begin{lemma}[Intra-Phase Reader Consistency]
\label{thm:ll-intra-phase}
Let $j$ be any phase, and let
\[
\Readthreeij{i}{j}{1}, \Readthreeij{i}{j}{2}, \ldots, \Readthreeij{i}{j}{p}
\]
be read operations executed by correct threads that all complete in phase $j$.
Then all such reads return the same value.
Formally,
\[
\Readij{i}{j}{1}{1}{x},
\Readij{i}{j}{2}{2}{x},
\ldots,
\Readij{i}{j}{p}{p}{x}
\;\Longrightarrow\;
x_1 = x_2 = \cdots = x_p .
\]
\end{lemma}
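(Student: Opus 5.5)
The approach rests on the lock-step, phase-based execution of the GPU-inspired model, and it does not assume the writer is correct. Correct readers that complete in the same phase $j$ run the same code in lock-step within a warp. By the phase structure (Section~\ref{subsec:design}), all \textsc{Insert} calls of a phase precede all \textsc{ReadLatest} calls of that phase. So the atomic reads of \texttt{head} by these readers, which are their linearization points by Lemma~\ref{lem:ll-linearization}, are executed in the same lock-step instruction.

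\textbf{Step 1: all readers see the same \texttt{head}.} No successful \textsc{CAS} on \texttt{head} can interleave between the \texttt{head} reads of readers executing that instruction together. Hence every $\Readthreeij{i}{j}{k}{}$ observes the same pointer $h$.

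\textbf{Step 2: all readers see the same window.} As argued in the proof of Lemma~\ref{lem:ll-linearization}, the sublist reachable from $h$ is immutable: nodes are never removed and their fields are never modified. Each reader therefore obtains the identical sequence $S$ of the first $p+1$ nodes from $h$, regardless of concurrent Byzantine insertions.

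\textbf{Step 3: all readers compute the same output.} The reader's post-processing in Algorithm~\ref{alg:algo2} is a deterministic function of $S$ and the writer's public key. It verifies signatures over $\langle val, seq\rangle$ and returns the valid value with the largest $seq$, or $\bot$ if none verifies. So every correct reader returns the same value, $x_1=\cdots=x_p$. This covers the case $\bot$ and the case of a Byzantine writer, since no property of the writer is used.

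\textbf{Main obstacle.} The delicate step is Step 1: justifying rigorously that no \textsc{Insert} can land between different correct readers' reads of \texttt{head} in phase $j$. Two things could break this. First, readers completing in phase $j$ may have started in different phases $i$. Second, a Byzantine thread might issue a CAS out of schedule. I would handle both with the model's phase and lock-step assumptions from Section~\ref{sec:model} and the insert-before-read ordering within a phase:

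\begin{itemize}[leftmargin=*]
\item \textsc{ReadLatest} has no retry loop, so every read that completes in phase $j$ performs its \texttt{head} read at the same fixed instruction offset in phase $j$.
\item Byzantine threads are confined to the same scheduled instruction slots, so any CAS they issue falls either before that instruction or after it.
\end{itemize}

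If the model only guarantees lock-step within a warp and not across the $p$ scheduled readers, I would fall back on the insert-before-read ordering. All inserts of phase $j$ complete before any read begins, so \texttt{head} is constant throughout the read portion of the phase.
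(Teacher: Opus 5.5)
Your Steps 1--3 are the paper's argument: lock-step execution makes all correct readers perform their atomic read of \texttt{head} at the same instant, the sublist reachable from that pointer is immutable, and signature verification with largest-$\mathit{seq}$ selection is deterministic, so your proposal is correct and follows the paper. One correction to your ``main obstacle'' discussion: linked-list operations may span phases, so the \texttt{head} read (the first step of \textsc{ReadLatest}) occurs in the readers' common start phase $i$, not necessarily in phase $j$, and what aligns the reads is that shared start phase plus warp lock-step, not the absence of a retry loop or the insert-before-read ordering within phase $j$.
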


\begin{proof}
Consider any correct thread $t$ whose read operation
$\Readij{i}{j}{t}{}{}$ completes in phase $j$.
By Lemma~\ref{lem:lp-localization}, this read admits a unique
linearization point
$\ell(\Readthreeij{i}{j}{t})$ such that
$i \le \ell(\Readthreeij{i}{j}{t}) \le j$.

In Algorithm~\ref{alg:LinkedList}, a read operation first performs an atomic read
of the shared pointer \texttt{head} (Line~16).
We define this atomic read as the read's linearization point.
Subsequent traversal of the linked list (Lines~17--20)
does not modify shared state.

By the GPU-inspired system model (\secref{model}),
execution within a phase follows the SIMT lock-step discipline:
all correct threads scheduled in phase $j$
execute the same instruction sequence synchronously.
Hence, all correct reads that complete in phase $j$
perform their atomic read of \texttt{head}
in the same phase and observe the same pointer value.

Since the linked list is inserted at the head only and nodes are never removed,
the sublist reachable from the observed head pointer
is identical for all such threads.
Moreover, the traversal, the signature verification over
$\langle v,\mathit{seq}\rangle$, and the selection of the valid value with the
largest sequence number (Algorithm~\ref{alg:algo2}) are deterministic functions
of that sublist.

Therefore, all correct read operations that complete in phase $j$
return the same value. Hence, $
x_1 = x_2 = \cdots = x_p$.
\end{proof}

\begin{theorem}
\label{thm:ll-byzlin}
Algorithm~\ref{alg:LinkedList} implements a Byzantine-linearizable SWMR register.
\end{theorem}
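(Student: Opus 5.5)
The plan is to check Definition~\ref{def:byzlinz} one condition at a time, reusing the lemmas just proved. First I fix the linearization of the full register history built on Algorithm~\ref{alg:LinkedList}. The register's write and read are the writer's \textsc{Insert} and a correct reader's \textsc{ReadLatest} followed by local verification. Each register operation therefore inherits the linearization point of the underlying object operation from Lemma~\ref{lem:ll-linearization}: the successful \textsc{CAS} on \texttt{head} for a write, and the atomic read of \texttt{head} for a read. Lemma~\ref{thm:ll-linearizable} says these points induce a total order that respects real-time precedence and obeys the \snapqueue specification. Lemma~\ref{lem:lp-localization} places each point inside the operation's phase interval, so the order also respects the phase precedence $\prec$. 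Byzantine operations on the object are just extra \textsc{Insert}s in this order, and they affect the register only through the entries they add.

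Next I treat the case of a correct writer. Conditions (1)--(3) of Definition~\ref{def:byzlinz} are No Future Read, No Read from Distant-Past, and No New-Old Inversion. They are exactly Lemmas~\ref{thm:ll-no-future}, \ref{thm:ll-no-distant} and \ref{thm:ll-no-new-old}. All three rest on Lemma~\ref{lem:ll-latest}: every correct read returns the writer's most recent write linearized before it. Together these show that the history restricted to correct processes, plus the writer's writes, is a legal sequential register history. Termination follows from the two termination lemmas, since \textsc{Insert} fails its \textsc{CAS} at most $p$ times per phase and \textsc{ReadLatest} traverses at most $p+1$ nodes. The non-emptiness theorem rules out reads failing after the first write.

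For a Byzantine writer, only condition (4) is required, and that is Lemma~\ref{thm:ll-intra-phase}. It holds regardless of the writer, because lock-step execution makes all correct reads in a phase see the same \texttt{head`}, and the verification and selection that follow are deterministic.

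The step I expect to be the main obstacle is making the correct-writer argument hold against the general adversary. Byzantine readers may re-insert old signed entries or flood the list, and this could in principle push $W^*$ out of the $(p+1)$-window. I would handle it by relying on Lemma~\ref{lem:ll-latest} and stating its counting assumption explicitly: each process inserts at most once per phase, and inserts precede reads within a phase. With that in place, combining the four conditions shows the construction is Byzantine-linearizable.
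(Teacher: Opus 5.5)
Your proposal is correct and follows essentially the same route as the paper: linearizability of the object (Lemma~\ref{thm:ll-linearizable}), then Lemma~\ref{lem:ll-latest} to make the induced sequential history legal as Definition~\ref{def:iditDef} requires, then Lemmas~\ref{thm:ll-no-future}--\ref{thm:ll-intra-phase} for the four ordering conditions. The only differences are minor: the paper cites Lemma~\ref{lem:invocation-independence} explicitly where you argue informally that Byzantine operations are just extra \textsc{Insert}s, and your termination and non-emptiness steps are not needed for this safety claim (non-emptiness alone would not exclude $\bot$ anyway; that comes from Lemma~\ref{lem:ll-latest}).
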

\begin{proof}
Algorithm~\ref{alg:LinkedList} is linearizable (Lemma~\ref{thm:ll-linearizable}),
and by Lemma~\ref{lem:invocation-independence} linearizability is preserved when
method invocations are issued by Byzantine processes. By
Lemma~\ref{lem:ll-latest}, every read executed by a correct process returns the
value of the writer's most recent write preceding it, so every read returns the
value of the most recent preceding write in the induced sequential execution,
and the history observed by the correct processes extends to a legal
linearizable history, as required by Definition~\ref{def:iditDef}.
Lemmas~\ref{thm:ll-no-future}-\ref{thm:ll-intra-phase} establish the four
ordering conditions of Definition~\ref{def:byzlinz} against the general
adversary. Hence, the construction is Byzantine-linearizable.
\end{proof}
\begin{lemma}[Correctness of \textsc{ReadLatest}]
\label{lem:readlatest-correct}
For every execution of Algorithm~\ref{alg:LinkedList}, a completed
\textsc{ReadLatest}() operation returns exactly the most recent
$p+1$ values that were successfully inserted into the linked list
prior to the linearization point of the operation, ordered from
newest to oldest. If fewer than $p+1$ values exist in the list,
then all existing values are returned.
\end{lemma}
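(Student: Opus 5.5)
The plan is to argue from the structural invariant of the list together with the linearization point fixed in Lemma~\ref{lem:ll-linearization}, namely the atomic read of \texttt{head}.

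\textbf{Step 1: the list invariant.} By induction over successful \textsc{CAS} operations on \texttt{head}, we show the following. At every moment, the nodes reachable from \texttt{head} form a finite acyclic chain. This chain lists exactly the nodes whose successful \textsc{CAS} has occurred, in reverse order of those \textsc{CAS} events, from newest to oldest.

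The base case is the empty list with $\texttt{head}=\mathit{NULL}$. For the inductive step, a successful \textsc{CAS} from $h$ to a fresh node $n$ requires that $n.next$ was set to the value $h$ read just before. Because the \textsc{CAS} succeeded, $h$ is still the current head. So the new chain is $n$ followed by the old chain.

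We also use the facts already recorded in Lemma~\ref{lem:ll-linearization}. No field of a linked node is modified afterwards, and no node is removed. Byzantine processes can only act through \textsc{Insert}, as in the preceding theorem, so they cannot break the invariant. A failed \textsc{CAS} changes nothing.

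\textbf{Step 2: apply the invariant at the linearization point.} Let $h$ be the value read by \textsc{ReadLatest}. By Step 1, the chain starting at $h$ is exactly the sequence of values inserted before this read, ordered newest to oldest. The chain is immutable from then on, since later insertions only prepend nodes before $h$.

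The traversal follows \textit{next} pointers from $h$ and stops either after $p+1$ nodes or on reaching $\mathit{NULL}$. It therefore collects the first $\min(p+1,m)$ elements of this chain, where $m$ is the number of nodes inserted before the read. That gives the two cases of the statement: the $p+1$ most recent values in newest-to-oldest order, or all values when fewer than $p+1$ exist.

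\textbf{Main obstacle.} The hard part is making the inductive step rigorous. Concurrent inserters may set $n.next$ to a stale head, and the argument must rule out any such node entering the chain. The point to use is that \textsc{CAS} atomically re-checks that \texttt{head} still equals the $h$ stored in $n.next$. Any failed attempt is simply retried with a fresh read, so only consistent links are ever published.

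A second subtlety is that a node's contents must be visible once it is linked. I would handle this by noting that initialization precedes the \textsc{CAS} in program order, and by relying on the model's atomicity of shared-memory steps (\secref{model}).
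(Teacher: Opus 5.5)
Your proposal is correct and follows the same route as the paper. Both arguments linearize \textsc{ReadLatest} at its atomic read of \texttt{head}, use that the list order is the reverse of the order of successful \textsc{CAS} operations and that the part reachable from the observed head is immutable, and conclude that the bounded traversal returns the first $\min(p+1,m)$ nodes. The only difference is that you prove the list-order invariant by explicit induction over successful \textsc{CAS} events, which also handles stale \textit{next} pointers left by failed attempts, whereas the paper simply asserts it as a consequence of prepending via \textsc{CAS}.
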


\begin{proof}
Consider an arbitrary completed \textsc{ReadLatest}() operation
$R$. Let the linearization point of $R$ be the atomic read of
\texttt{head} at line~16 of Algorithm~\ref{alg:LinkedList}.

At the linearization point, the variable \texttt{head} points to
the node corresponding to the most recently completed successful
\textsc{Insert} operation. Every \textsc{Insert} operation creates a
new node and atomically prepends it to the front of the linked
list using a successful \texttt{CAS(head,h,n)} operation.
Consequently:

\begin{enumerate}
    \item every successful insertion adds exactly one new node to
    the front of the list, and
    \item the order of nodes reachable from \texttt{head} corresponds
    exactly to the reverse order of successful insertions.
\end{enumerate}

After reading \texttt{head}, operation $R$ traverses the linked
list by repeatedly following the \texttt{next} pointers
(lines~17--19). Since insertions only prepend new nodes and never
modify existing \texttt{next} pointers, the traversal observes a
consistent prefix of the list beginning from the node referenced
by \texttt{head} at the linearization point.

The traversal stops after collecting either:

\begin{itemize}
    \item exactly $p+1$ nodes, or
    \item all nodes currently reachable from \texttt{head}$,$ if the
    list contains fewer than $p+1$ elements.
\end{itemize}

Therefore, the sequence returned by \textsc{ReadLatest}()
contains precisely the most recent $p+1$ inserted values existing
at the linearization point of $R$, ordered from newest to oldest.

Hence the lemma follows.
\end{proof}



\subsubsection{\textbf{\hwq}}
\input{HWQFULL}
\subsubsection{\textbf{Array-Based Atomic Object}}
\input{ArrayBasedAtomicObject}

\input{WaitFreeSnapObject}

\medskip
\noindent{\textbf{Resilience of the SnapObject-Based Construction}}
\input{queueanalysis}

\subsection{\textbf{Idealized Signature-Aware Memory}}
\label{sec:sign}

\input{SignatureAware}

\end{document}

%% file: macros.tex
\newcommand{\cmnt}[1]{}
\newcommand{\ignore}[1]{}
\newcommand{\remove}[1]{}

\newtheorem{theorem}{Theorem}

\newtheorem{lemma}[theorem]{Lemma}

\newcommand{\secref}[1]{Section~\ref{sec:#1}}
\newcommand{\figref}[1]{Fig.~\ref{fig:#1}}

\newcommand{\defref}[1]{Definition~\ref{def:#1}}

\newcommand{\algoref}[1]{{Algorithm \ref{alg:#1}}}

\newcommand{\mth} {method\xspace}

\newcommand{\blble} {Byzantine-linearizable\xspace}
\newcommand{\blbty} {Byzantine-linearizability\xspace}

\newcommand{\resfac} {resilience factor\xspace}

\newcommand{\wf} {wait-free\xspace}
\newcommand{\wfdom} {wait-freedom\xspace}

\newcommand{\readP} {ReadLatest\xspace}

\newcommand{\AReg} {{CAReg}\xspace}
\newcommand{\ARegister} {{Atomic Register}\xspace}
\newcommand{\snapqueue}{SnapObject\xspace}

\newcommand{\hwq} {Queue with snap\xspace}

\newcommand{\byz} {Byzantine\xspace}

\newcommand{\Readij}[5]{R^{#1,#2}_{#3}\!\left(#5_{#4}\right)}

\newcommand{\Readthreeij}[3]{R^{#1,#2}_{#3}}

\newcommand{\writeij}[4]{W^{#1,#2}\!\left(#4_{#3}\right)}

\newcommand{\writetwoij}[2]{W^{#1,#2}}

%% file: Introduction.tex
Concurrent systems coordinate processes through shared memory, and the read/write register is the most elementary shared object on which such coordination is built~\cite{Lamport1986}. 

\noindent\textbf{GPUs as a shared-memory platform.}
Graphics Processing Units (GPUs) have evolved beyond graphics rendering into a primary platform for scientific computing, artificial intelligence, and high-performance computing~\cite{david+:PMPP:science:2017}. Unlike CPUs, which minimize latency across a few complex control flows, GPUs maximize throughput across many lightweight threads that execute in lock-step under hardware scheduling, with limited support for fine-grained access control over shared memory~\cite{david+:PMPP:science:2017}. These constraints fundamentally shape how shared-memory abstractions can be realized on GPU hardware.

Designing fault-tolerant concurrent algorithms depends on the ability of shared objects to preserve their semantics under a prescribed failure model. Classical shared-memory research has primarily focused on crash failures, where faulty processes halt or are arbitrarily delayed. Modern GPU deployments, however, increasingly face arbitrary Byzantine ~\cite{Byzantine_Generals_paper} deviations, increasingly face arbitrary Byzantine ~\cite{Byzantine_Generals_paper} deviations. 
GPUs are now routinely shared: multi-tenancy mechanisms colocate kernels from different clients on the same hardware~\cite{Weaver+:SC24:2024}, and third-party device code plugins, custom ML operators, JIT-compiled kernels execute in the same address space as trusted computation. To a shared object, a buggy, compromised, or misbehaving co-resident thread is indistinguishable from a Byzantine process, so tolerating such behaviour at the level of the object itself is of direct practical interest.

To reason about correctness under such failures, Byzantine linearizability~\cite{CKDISC2021} extends classical linearizability by requiring that the behavior observed by correct processes remains consistent with that of an atomic object despite adversarial interference. Byzantine-tolerant shared objects have been extensively studied in CPU-based systems~\cite{Kshemkalyani+:Improve-Hu-Toueg:SSS:2025, HuToueg:SWMR-LieDeny:PODC:2025}. These constructions typically rely on ownership assumptions over shared objects. For example, in standard Single-Writer Multi-Reader (SWMR) register constructions~\cite{HuToueg:SWMR-LieDeny:PODC:2025}, only the designated writer is allowed to modify the register, while readers may invoke only read operations. Such ownership assumptions are incompatible with GPU architectures, where any thread may invoke the operations exposed by those objects~\cite{david+:PMPP:science:2017}.

This mismatch between classical assumptions and GPU execution semantics introduces new challenges. In the absence of access control, a Byzantine thread may arbitrarily invoke write operations on shared objects, potentially violating the semantics of standard shared objects such as SWMR registers. At the same time, the structured execution of GPUs, particularly lock-step execution and bounded scheduling, provides a form of synchrony that may help mitigate adversarial interference. This interplay between unrestricted memory access and structured execution raises a fundamental question: \emph{can atomic shared-memory objects such as SWMR registers be implemented in GPU-based execution models in the presence of Byzantine behavior?}

We study this question within the GPU-inspired shared-memory model introduced by Georgiou et al.~\cite{DBLP:conf/europar/GeorgiouPP25}, which models a single thread block on one Streaming Multiprocessor (SM).

\noindent\textbf{Contributions.} In essence, we consider different underlying base objects, each assumed to be non-corruptible, and show how, together with digital signatures, a Byzantine-linearizable SWMR register can be built on top of them in the hardware-inspired GPU model. In all but the idealized construction of Section~\ref{sec:ideal}, the base objects are oblivious to signatures: they store and return values without inspecting them, and authentication is performed entirely by the processes. Specifically, this paper makes the following contributions.
1. We formulate the problem of implementing Byzantine linearizable SWMR registers in the GPU-inspired shared-memory model (Sections~\ref{sec:model} and~\ref{Byz-Reg}).\\
 2.  We present three Byzantine-linearizable SWMR register constructions, all relying on authentication (Sections~\ref{sec:atReg}--\ref{sec:ideal}):\\
 a. \AReg\ layers authentication over a crash-tolerant atomic register and needs only constant memory, at the cost of a probabilistic resilience guarantee, which we analyze under scheduler bias and Byzantine interference (Section~\ref{sec:atReg}).\\
  b. \snapqueue\ instead retains a bounded window of recent authenticated writes, driving that probability to exactly zero (Section~\ref{sec:queue}). We realize its underlying ordered object in two ways: a \emph{lock-free} implementation based on a linked list, and three \emph{wait-free} implementations based on a circular array, a Herlihy Wing inspired queue with snapshot support, and a fixed-size atomic array. These wait-free realisations additionally guarantee that no correct process, including the writer, can be starved by an adversarial scheduler. These realizations expose different trade-offs in memory and read complexity.\\
  c. An idealized signature-aware memory abstraction characterizes the upper boundary of the design space, under stronger assumptions (Section~\ref{sec:ideal}).\\
 3.  We establish the correctness and Byzantine linearizability of all constructions, and characterize the trade-offs between resilience, memory, read complexity, and multiphase execution in the GPU-inspired model (Appendix~\ref{sec:app}).

%% file: RelatedWork.tex
Registers form the basic abstraction in shared-memory distributed systems. Depending on the guarantees provided under concurrent access, registers are classified as \emph{safe}, \emph{regular}, and \emph{atomic}~\cite{Lamport1986}. Atomic (linearizable) registers provide the strongest semantics: every operation appears to take effect instantaneously between its invocation and response while respecting real-time precedence. Atomic registers serve as the fundamental building block for higher-level concurrent objects such as snapshots, reliable broadcast, queues, consensus objects, and asset transfer systems~\cite{Herlihy+:AMP:Book:2020}.

In the classical crash-failure model, the construction of atomic registers from weaker primitives has been extensively studied~\cite{Herlihy+:AMP:Book:2020}, showing that strong consistency is achievable in fully asynchronous systems despite process crashes. Extending such guarantees to Byzantine failures is substantially harder, since faulty processes may behave arbitrarily rather than simply stop. In asynchronous message-passing systems, Kowalski et al.~\cite{SinchanAchourMostfaouibyzprone} studied Byzantine SWMR register abstractions and established tight resilience bounds for constructing stronger variants from weaker ones.

Cohen and Keidar introduced \emph{Byzantine linearizability}~\cite{CKDISC2021}, a correctness condition requiring that the projection of an execution onto correct processes can be extended into a linearizable history by augmenting it with additional Byzantine operations. They used this framework to study reliable broadcast, atomic snapshot, and asset transfer from SWMR registers.

Hu and Toueg~\cite{HuToueg:SWMR-SWSR:DISC:2022} pushed this further and showed an impossibility result central to our own motivation: there is no \emph{wait-free} linearizable implementation of a SWMR register from crash-tolerant SWMR registers under Byzantine failures, even with restricted adversarial assumptions. To circumvent this, they introduced strengthened abstractions verifiable and sticky registers that emulate signature-like guarantees directly in shared memory~\cite{HuToueg:SWMR-LieDeny:PODC:2025}, later improved by Kshemkalyani et al.~\cite{DBLP:journals/corr/abs-2405-19457}. All of the above results target fully asynchronous shared-memory or message-passing systems. GPU architectures are different: they follow a phase-based execution model with structured scheduling and lock-step execution within warps. Georgiou et al.~\cite{DBLP:conf/europar/GeorgiouPP25} introduced a GPU-inspired shared-memory model capturing these properties, but the impact of this structure on Byzantine linearizability, and on the progress guarantees a register construction can offer, has not been studied.

In this work, we investigate the construction of a Byzantine-linearizable SWMR register in the GPU-inspired shared-memory model. Unlike prior work~\cite{DBLP:journals/corr/abs-2405-19457, HuToueg:SWMR-SWSR:DISC:2022, HuToueg:SWMR-LieDeny:PODC:2025} that assumes memory access control, purely asynchronous execution, or sticky/verifiable memory, we assume none of these: any thread may overwrite shared memory, and the phase-based lock-step structure and mandatory writer scheduling of the GPU model take the place of both synchrony and sticky objects. We rely on authentication mechanisms such as digital signatures to identify legitimate writes and prevent impersonation~\cite{HuToueg:SWMR-LieDeny:PODC:2025, DBLP:journals/corr/abs-2405-19457}. We further show that this structure suffices to achieve \wfdom~\cite{Herlihy:TPLS:1991}: our \wf\ \snapqueue\ realizations order insertions with an unconditional fetch-and-increment, in the spirit of the Herlihy--Wing queue~\cite{herlihy1990linearizability}, rather than through the CAS retry loops or helping mechanisms employed by classical wait-free queue constructions~\cite{Kogan2011WaitfreeQW}. Consequently, every operation of a correct process completes in a bounded number of its own steps, despite Byzantine interference and adversarial scheduling.
\vspace{-0.5em}

%% file: SystemModel.tex
\noindent\textbf{GPU-Constrained Shared Memory Model.} Fig.~\ref{fig:SM} gives a high-level view of the system model, which is a faithful abstraction of how computation is organized on modern GPU hardware. A GPU is built from \emph{Streaming Multiprocessors} (SMs), each hosting many simple cores, on-chip shared memory, and a hardware \emph{warp scheduler} that dispatches groups of threads called \emph{warps} under the Single-Instruction Multiple-Threads (SIMT) model~\cite{david+:PMPP:science:2017,Ogier:book:2013,NVIDIA:2024}: within a warp, threads share a single program counter and execute in lock-step. Warps are grouped into a \emph{block}, bound to a single SM, where all threads share fast on-chip memory and the same warp scheduler. {\em Following~\cite{DBLP:conf/europar/GeorgiouPP25}, we focus on the execution of a single block on a single SM; in Fig.~\ref{fig:SM}, the oval highlights this subsystem.} 

\noindent\textbf{Processes.}
The system consists of $n$ processes $p_1, p_2, \ldots, p_n$, each a GPU thread within this block. Let $p$ denote the warp size (typically $p = 32$ on modern NVIDIA GPUs~\cite{NVIDIA:2024}); the block thus contains $\lceil n/p \rceil$ warps, and the hardware warp scheduler dispatches them in short, bounded execution windows.

\noindent\textbf{Failure Model.}
We consider both crash and Byzantine failures. Under the crash model, a faulty process halts and takes no further steps. Under the Byzantine model, a faulty process may deviate arbitrarily from the prescribed protocol (for instance a Byzantine reader may perform arbitrary writes). Following~\cite{DBLP:conf/europar/GeorgiouPP25}, we assume an adaptive adversary that may corrupt up to $f < n$ processes. Byzantine processes may behave arbitrarily, but the underlying hardware constrains them. In particular, they cannot bypass the warp scheduler, impersonate other processes, or directly corrupt shared memory. While a Byzantine thread may deviate from the assigned computation and execute arbitrary instructions, it can do so only when it is scheduled and only within its assigned warp. It cannot execute outside its scheduled phase or obtain additional execution beyond what the scheduler allows. \emph{These restrictions reflect the GPU execution environment: scheduling is enforced in hardware, and threads cannot fabricate execution slots or interfere with other threads' identities} (see~\cite{DBLP:conf/europar/GeorgiouPP25} for further discussion). In particular, a thread's identity derives from read-only special registers (\texttt{threadIdx}, \texttt{blockIdx}) initialised by the runtime at kernel launch and not writable by user code~\cite{david+:PMPP:science:2017}; see also Appendix~\ref{sec:appAuth}.

\noindent\textbf{Phase-Based Execution.}
Computation proceeds in an infinite sequence of \emph{phases}. In each phase, the warp scheduler selects exactly one warp and every process in that warp executes a bounded number of instructions. This bound mirrors real GPU schedulers, which issue only a small constant number of instructions per warp before switching to hide memory latency~\cite{NVIDIA:2024,Ogier:book:2013}. This also ensures that no process, including a Byzantine one, makes unbounded progress within a phase. In this paper, we use the terms \emph{phase} and \emph{warp execution} interchangeably.
\begin{figure}[h!]
    \centering
    \includegraphics[scale=0.23]{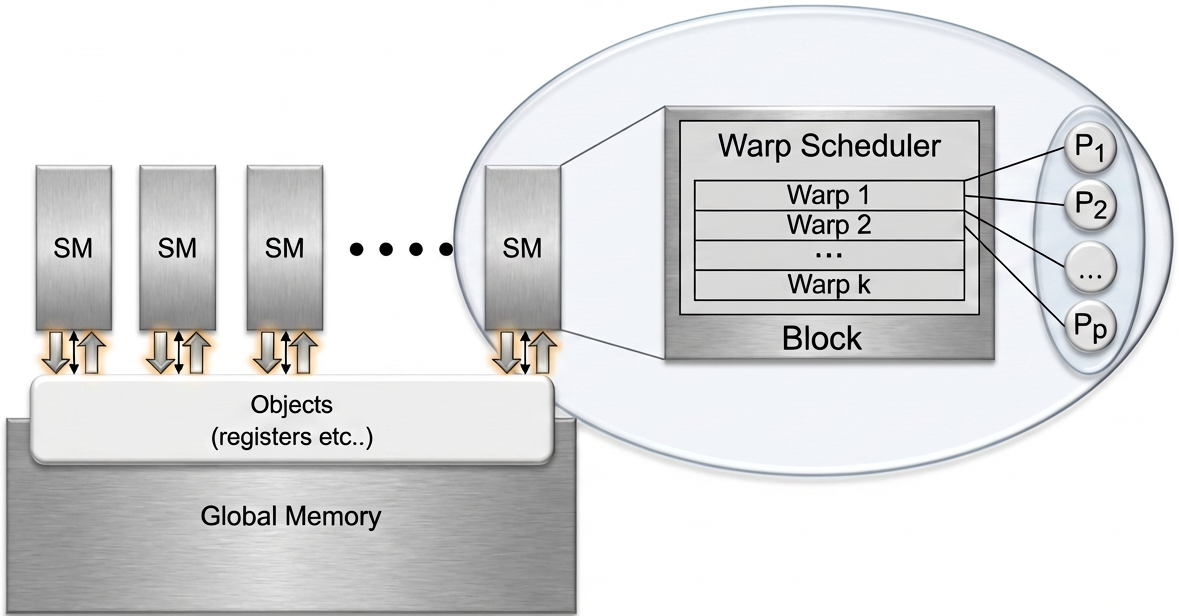}
    \caption{High-level depiction of the GPU-inspired system model}\vspace{-1em}
     \label{fig:SM}
\end{figure}
\\
\noindent\textbf{Lock-Step Execution.}
When threads within a warp diverge onto different operations, the hardware serializes execution: processes scheduled in a phase are partitioned into groups by the operation they perform (e.g., read or write), threads within a group execute in lock-step, and different groups execute sequentially within the same phase (see \figref{phase-lockstep}). Although Byzantine processes may deviate in values or operations, they cannot alter the control flow of other threads, modify program counters, or extend phase duration.

\noindent\textbf{Cycles.}
A \emph{cycle} is a sequence of phases in which every process is scheduled at least once. Since each phase schedules $p$ processes, a cycle spans at least $\lceil n/p \rceil$ phases. Cycles are used only for reasoning about fairness and progress.

\noindent\textbf{Scheduler Fairness and Bias.}
Our analysis considers both unbiased and biased warp schedulers. An \emph{unbiased scheduler} selects warps without preference, whereas a \emph{biased scheduler} may favour warps containing Byzantine processes due to effects such as latency hiding, memory-access behaviour, or co-tenant interference. We capture this behaviour using a bias factor $\gamma \geq 1$, which bounds the relative scheduling frequency of Byzantine-containing phases compared to purely correct phases. The case $\gamma=1$ corresponds to an unbiased scheduler.

\noindent\textbf{Shared Memory.}
The system provides reliable shared memory that is accessed exclusively through well-defined object interfaces~\cite{CKDISC2021, DBLP:conf/europar/GeorgiouPP25}. This restriction is standard: if Byzantine processes could arbitrarily corrupt raw memory, no meaningful correctness guarantees would be possible.

We do not assume any access control. Any process may invoke any operation exposed by an object interface, but no invocation can violate the object's semantics. This reflects GPU hardware, where threads within a block freely access shared memory without ownership restrictions~\cite{david+:PMPP:science:2017}, but must interact through well-defined instructions such as load, store, and other atomic primitives. A Byzantine process may thus invoke operations on any shared object, but it cannot reach memory outside its addressable scope, in particular, another thread's private registers and local memory, nor bypass the object interface to shared state.

\noindent\textit{Remark.}
If phase-based scheduling and lock-step execution are removed, the model reduces to a classical asynchronous shared-memory system. These constraints are precisely what distinguish the GPU model~\cite{DBLP:conf/europar/GeorgiouPP25} from the traditional asynchronous shared-memory system.
Our model captures the classical SIMT lock-step discipline of~\cite{DBLP:conf/europar/GeorgiouPP25}. On recent hardware, Independent Thread Scheduling (Volta and later) gives each thread its own program counter and call stack, weakening strict intra-warp lock-step; this primarily affects Observation~1 (intra-phase reader consistency), while the hardware-enforced scheduling and identity restrictions continue to hold. Extending our constructions to independent thread scheduling is left as future work.\\
\noindent\textbf{On Replay.}
Digital signatures authenticate the \emph{origin} of a value, not its \emph{freshness}: a valid signature proves that the designated writer produced the value at some point in the past, but not that it is the writer's most recent one. A Byzantine process may therefore copy a validly signed pair from the register and re-install it later; we call this behaviour \emph{replay}. Since the signature of a replayed pair still verifies, the re-installed value is indistinguishable from a fresh write at the level of a single observation. Accordingly, we distinguish two adversaries. A \emph{general adversary} may invoke arbitrary operations with arbitrary arguments, including the replay of previously signed values. A \emph{non-replaying adversary} is weaker: it never re-installs a validly signed pair, so every value it writes fails signature verification. Unless stated otherwise, Byzantine linearizability and our resilience bounds hold against the general adversary; the sole exceptions are the freshness and resilience bounds of the CAReg construction (Section~\ref{sec:atReg}), which assume the non-replaying adversary.
\vspace{-0.5em}
\section{Byzantine Linearizability and SWMR Register}
\label{Byz-Reg}
This section introduces the correctness conditions, shared-memory abstractions, and hardware primitives used throughout the paper.\\
\noindent\textbf{Histories and Operations.}
Executions are modeled using \emph{histories}. An \emph{operation} consists of an invocation event and a matching response event; an invocation without a matching response is called \emph{pending}. A \emph{history} $H$ is a finite sequence of invocation and response events and is \emph{well-formed} if, for every process, invocation and response events alternate, beginning with an invocation. Let $\mathit{complete}(H)$ denote the history obtained from $H$ by removing all pending invocations. Two histories are said to be \emph{equivalent} if they induce identical per-process subhistories. A history is \emph{sequential} if no two operations overlap; that is, for any two operations $o_1$ and $o_2$, either the response of $o_1$ occurs before the invocation of $o_2$, or the response of $o_2$ occurs before the invocation of $o_1$. A sequential history is \emph{legal} if it satisfies the sequential specification of the object.

A history $H$ also induces a \emph{real-time precedence order}, denoted by $<_H$, on operations. For two operations $o_1$ and $o_2$, we write $o_1 <_H o_2$ iff the response event of $o_1$ occurs before the invocation event of $o_2$ in $H$.\\
\noindent\textbf{Linearizability.}
Linearizability~\cite{herlihy1990linearizability} is a correctness condition for concurrent objects requiring that every operation appears to take effect atomically at some instant between its invocation and response events.
\vspace{-0.5em}
\begin{definition}[Linearizability~\cite{herlihy1990linearizability}]
A history $H$ is \emph{linearizable} if it can be extended to a history $H'$ (by appending response events to some pending invocations) such that:
\begin{itemize}
    \item[\textbf{L1:}] $\mathit{complete}(H')$ is equivalent to a legal sequential history $S$, and
    \item[\textbf{L2:}] $<_H \subseteq <_S$.
\end{itemize}
The sequential history $S$ is called a \emph{linearization} of $H$.
\end{definition}

\noindent\textbf{Byzantine Linearizability.}
Byzantine linearizability extends linearizability to systems in which processes may exhibit Byzantine failures. Although faulty processes may behave arbitrarily, the behavior observed by correct processes must remain consistent with that of an atomic object. Cohen and Keidar~\cite{CKDISC2021} introduced Byzantine linearizability as a correctness condition for shared objects in asynchronous systems with Byzantine process failures.

\begin{definition}[Byzantine Linearizability (Cohen--Keidar)~\cite{CKDISC2021}]
A history $H$ is \emph{Byzantine linearizable} if there exists a history $H'$ such that:
$H'|_{\text{correct}} = H|_{\text{correct}}$
and $H'$ is linearizable.
\label{def:iditDef}
\end{definition}
Here, $H$ denotes the complete execution history consisting of invocation and response events of all processes, while $H|_{\text{correct}}$ denotes the projection of $H$ onto correct processes obtained by removing all events generated by Byzantine processes.


\noindent\textbf{Register.}
A register stores values from a domain $V$ and supports
\textsc{write}$(v)$ and \textsc{read}() operations. A
\textsc{write}$(v)$ operation updates the register state to
$v \in V$, while a \textsc{read}() operation returns the value
written by the most recent preceding \textsc{write}, or the
initial value $v_0$ if no write has occurred in case of sequential execution.\\
\noindent\textbf{Atomic Registers.}
An atomic register is a linearizable read/write object that supports concurrent read and write operations. A \textsc{read} operation returns the value written by the most recent preceding \textsc{write} operation~\cite{Lamport1986}. In the context of registers we use the terms atomic and linearizable interchangeably.\\
\noindent\textbf{SWMR Registers.}
A \emph{Single-Writer Multiple-Readers (SWMR) register} supports one writer and multiple readers through \textsc{write} and \textsc{read} operations. The sequential specification requires every \textsc{read} to return the value written by the most recent preceding \textsc{write}. \\
\noindent\textbf{Non-Corruptible Crash-Tolerant Atomic Register.}
A \emph{crash-tolerant atomic register} is a linearizable implementation of a register that guarantees that every operation invoked by a correct process eventually completes, even if some processes crash or halt prematurely during execution. Throughout the paper, such registers, and all other base objects we use, are assumed to be \emph{non-corruptible}: the underlying memory cannot be corrupted, and the only way any process, correct or Byzantine, can access it is by invoking the operations of these base objects. A Byzantine process may thus invoke arbitrary operations on a base object, but cannot alter its state other than through those operations. Except for the idealized object of Section~\ref{sec:ideal}, base objects are oblivious to signatures: they store and return values without inspecting or verifying them.\\
\noindent\textbf{Byzantine-Tolerant Linearizable Register.}
A \emph{Byzantine-tolerant linearizable register} is a read/write
object whose operations satisfy Byzantine linearizability;
that is, despite Byzantine process failures, the behavior
observed by all correct processes is consistent with a
linearizable execution of a sequentially specified register.\\
\noindent\textbf{Hardware Instructions.}
This work uses two widely supported atomic read-modify-write primitives:

\begin{itemize}
    \item \textbf{Compare-and-Swap (CAS):} Given a memory location $m$, the operation $m.\textsc{CAS}(\varepsilon,\nu)$ atomically compares the current value of $m$ with the expected value $\varepsilon$. If they are equal, the value of $m$ is updated to $\nu$ and the operation returns \texttt{true}; otherwise, $m$ remains unchanged and the operation returns \texttt{false}.

    \item \textbf{Fetch-and-Increment (FAI):} The operation
$m.\textsc{FAI}()$ atomically increments the current value
of $m$ by one and returns its previous value.
\vspace{-1.0em}
\end{itemize}
\label{sec:lockstepDiagram}
\begin{figure}[h!]
    \centering
    \includegraphics[width=\linewidth]{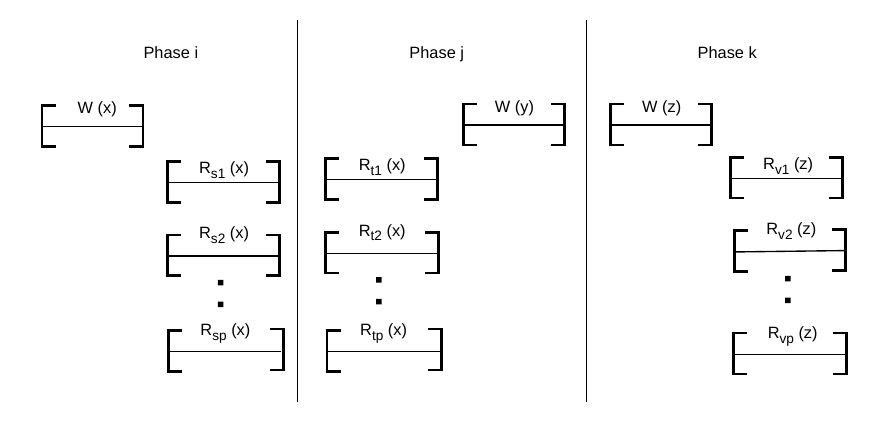}
    \vspace{-2.5em}
    \caption{Example of phase-based lock-step execution of write and read operations, where $R_t(x)$ denotes a read operation by thread $t$ returning value $x$, and $W(x)$ denotes a write operation of value $x$.}
     \label{fig:phase-lockstep}
\end{figure}
\begin{observation}
[Intra-Phase Reader Consistency in the
Lock-Step Setting]
\label{obs:intraphase}
Due to the SIMT lock-step execution semantics of the
GPU-inspired model, all correct reader processes executing
the same read instruction within a phase observe the same
shared-memory state. Consequently, as illustrated in
Figure~\ref{fig:phase-lockstep}, all correct readers
completing their read operation within the same phase return
the same value.
\end{observation}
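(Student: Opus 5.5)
The statement to prove is Observation~\ref{obs:intraphase}: all correct readers that complete a read within the same phase return the same value. The plan is to derive it directly from the lock-step execution assumptions of the GPU-inspired model (\secref{model}), together with determinism of the read procedure.

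\textbf{Step 1: shared reads are simultaneous.} By the phase-based model, exactly one warp is scheduled in a phase. By the lock-step rule, the processes of that warp that perform the same operation (here, read) form one group sharing a single program counter. Each shared-memory load of the read procedure is therefore issued by all correct readers of the group as one instruction at the same instant.

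\textbf{Step 2: no interleaved modifications.} A write group, or a Byzantine thread executing a different operation, is serialized before or after the read group, never inside one of its instructions. Byzantine threads cannot alter the control flow of correct threads, extend the phase, or bypass object interfaces. Hence every shared load of the read group returns the same memory state to all correct members. By induction over the instructions of the read code, all correct readers obtain identical sequences of observed values.

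\textbf{Step 3: identical return values.} The remainder of the read, such as signature verification and selecting the valid entry with the largest sequence number, is a deterministic local computation. It depends only on the observed values and the public key, which all readers share. Identical inputs therefore give identical outputs, so all correct reads completing in the phase return the same value.

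\textbf{Main obstacle.} The delicate point is reads that complete in phase $j$ but were invoked in an earlier phase, so that a reader's loads may be split across phases. I would handle this by noting that correct readers run the same code from the same starting point and are scheduled together as a warp. Their program counters therefore remain aligned across phases, and the argument of Steps~1--2 applies to each phase segment. If that alignment cannot be justified in general, I would restrict the claim, as Figure~\ref{fig:phase-lockstep} suggests, to reads whose shared-memory accesses occur in lock-step within the phase, and state this restriction explicitly.
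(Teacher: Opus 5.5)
Your proof is correct and follows the same route as the paper. The paper's per-construction intra-phase lemmas (for CAReg, the linked list, and WF-Array) argue exactly as you do: lock-step execution of the single scheduled warp makes every correct reader perform the same shared load (of $M$, \texttt{head}, or $\mathit{tail}$, and then the same dependent loads) at the same instant, after which verification and largest-sequence-number selection are deterministic. The paper never addresses your multi-phase caveat explicitly; it sidesteps it by stating condition (d) and those lemmas only for reads $R^{i,j}_t$ that share both the invocation phase $i$ and the completion phase $j$, which supplies precisely the common starting point your program-counter alignment argument needs.
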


%% file: AReg.tex
We present a simple construction of a Byzantine-linearizable SWMR register in the GPU-inspired model, built on a crash-tolerant atomic register together with authentication to filter adversarial writes.\\
\noindent\textbf{Why Crash-Tolerant Objects are Useful in the Byzantine Setting?}
Building a Byzantine-linearizable object on top of a crash-tolerant atomic object may initially appear counterintuitive, since crash-tolerant objects are not designed to tolerate arbitrary behavior. In classical asynchronous shared-memory systems, Byzantine processes may indefinitely interfere with correct processes, leading to strong impossibility results for Byzantine-linearizable registers \cite{HuToueg:SWMR-SWSR:DISC:2022}.
(Crash-tolerant atomic registers are defined in Section~\ref{Byz-Reg}, with their formal properties in Appendix~\ref{sec:CCAeg}.)
As already mentioned in Section~\ref{sec:model}, in the GPU-inspired model, Byzantine processes can interact with the underlying atomic register only through its exported operations. They cannot corrupt the internal object state, impersonate other threads, or execute outside scheduler-controlled phases. Moreover, the execution proceeds in bounded lock-step phases, and the writer participates in every phase. These structural constraints limit adversarial interference and allow crash-tolerant atomic objects to serve as a foundation for Byzantine-linearizable constructions.

\subsection{Design Features}
\label{subsec:CARegAssumption}
Our construction is derived from GPU-style execution semantics, where threads execute in lock-step under the SIMT model and warps are scheduled by hardware. The assumptions used in this work therefore capture high-level GPU execution behaviour rather than exact hardware-specific scheduling rules.

Importantly, the construction does not require precise knowledge of the scheduler implementation. We analyse the construction under both biased and unbiased schedulers and registers.

\noindent\textbf{(1) Participation per Phase.}
The model does not provide access control: any process may invoke write operations on shared memory. Hence, Byzantine readers may overwrite the register contents with arbitrary values (through the register interface). Generally, sticky or verifiable objects are often required to prevent Byzantine processes from permanently erasing or overwriting legitimate values~\cite{attie:IPL:2002}. To prevent legitimate values from being permanently erased in the absence of sticky objects, we require that the writer participate in every phase. In each phase, the writer rewrites its current value. The writer’s repeated reinforcement ensures that every phase contains at least one legitimate write attempt if the writer is honest. Each phase consists of: \textbf{ one writer}, and \textbf{ exactly $p$ readers}. The parameter $p$ refers only to readers; the writer is not included in this bound.

\noindent \textbf{(2) Shared Memory.}
The system provides a reliable, non-corruptible shared-memory abstraction accessible to all processes. The memory itself cannot be corrupted and is accessed only through well-defined object interfaces. 

\noindent \textbf{(3) Crash-tolerant \ARegister (\AReg).} Communication between processes is realized through a single non-corruptible crash-tolerant atomic register supporting read and write operations. The register is oblivious to signatures: it stores whatever value is written to it, and signature verification is performed by the readers.
The register is assumed to be linearizable and \emph{wait-free}\footnote{Wait-free: A concurrent object implementation is wait-free if each method call finishes
in a finite number of steps. Later description is in the appendix \ref{sec:CCAeg}}  \cite{Herlihy:TPLS:1991} under crash failures. Byzantine processes are restricted to invoking the register’s methods; they cannot alter its internal state except through well-defined method calls, nor can they violate its atomicity guarantees.
Consequently, although Byzantine processes may invoke operations arbitrarily (including issuing spurious writes), they cannot break the linearizability or wait-freedom of the underlying object. This structural restriction allows crash-tolerant atomic objects to serve as safe building blocks for constructing Byzantine-tolerant abstractions in this model (see~\cite{DBLP:conf/europar/GeorgiouPP25} for a detailed explanation).

\noindent\textbf{(4) Authentication.}
Since access control is absent, Byzantine readers may attempt arbitrary writes. To distinguish writer-generated values from Byzantine ones, we use signatures in this work. A correct writer attaches a cryptographic signature to each write, and correct readers (who know the signature of the writer thread in advance) accept only those values whose signatures verify successfully. A \byz writer on the other hand may not follow these steps. We discuss the implementation of the authentication mechanisms in GPUs in detail in Appendix \ref{sec:appAuth}.\\
\subsection{Algorithmic Solution}
\label{subsec:areg-soln}
The register is implemented using a single shared variable $M$ is described in ~\algoref{SWMR_AReg}, with initial value set to $\langle NULL \rangle$. The $write(v)$ method first computes a digital signature on $v$ using its private key and writes the pair $\langle v, sig \rangle$ to the atomic register $M$. Each write therefore replaces the previous contents of the register with a signed value. A reader executing $\mathrm{Read}()$ simply reads the current contents of $M$. If the stored value contains a valid signature with the writer’s public key, the reader returns the corresponding value $v$. Otherwise, it returns $\bot$. In particular, values written without a valid signature, including arbitrary bit-strings or unsigned values \texttt{NULL}, fail verification and are read as $\bot$; returning $\bot$ is part of the register's sequential specification over $V \cup {\bot}$, not a violation of it.
\begin{algorithm}[h!]
\caption{CAReg: Byzantine-Tolerant SWMR Register}
\label{alg:SWMR_AReg}
\small

\textbf{Shared variable:} $M \gets \langle\text{NULL}\rangle$


\begin{minipage}[t]{0.34\columnwidth}
\begin{algorithmic}[1]
\Procedure{Write}{$v$}
    \State $x \gets \langle v,sig\rangle$
    \State $M \gets x$
\EndProcedure
\end{algorithmic}
\end{minipage}
\hfill
\begin{minipage}[t]{0.62\columnwidth}
\begin{algorithmic}[1]
\Procedure{Read}{}
    \State $x \gets M$
    \If{$x=\langle v,sig\rangle$ for some $v$}
        \State \Return $v$
    \Else
        \State \Return $\bot$
    \EndIf
\EndProcedure
\end{algorithmic}
\end{minipage}

\end{algorithm}

Since the system executes in phases, exactly one writer and $p$ readers access the register concurrently in each phase. All correct readers scheduled in the same phase execute in lock-step and perform identical instructions. As a result, readers in the same phase observe the same register state and return the same value. If Byzantine readers overwrite the register with invalid values, correct readers return $\bot$. However, since the writer participates in every phase and rewrites its value, a valid signed value is reintroduced in each phase. This periodic rewriting ensures that correct readers continue to have opportunities to observe legitimate values despite adversarial interference.

The construction thus combines three elements: (a) atomicity provided by $\mathrm{CAReg}$, (b) authenticity enforced through digital signatures, and (c) structured phase-based execution.

Together, these mechanisms allow the register to preserve atomic semantics for correct processes in the presence of Byzantine behaviour.

\subsection{Correctness Overview}
Atomicity follows from the linearizability of the underlying crash-tolerant
register, which is preserved under Byzantine invocations; authentication
ensures \emph{value validity}: a correct reader never returns a value the writer did not produce, and lock-step execution yields intra-phase reader
consistency. Value validity alone suffices for Byzantine linearizability
against the general adversary, the construction additionally satisfies the ordering conditions of
Definition~\ref{def:byzlinz}. The full proofs appear in
Appendix~\ref{sec:CCAeg}.

\subsection{Probabilistic Resilience Analysis}
\label{subsec:CAReg_analysis}
Although the construction preserves Byzantine-linearizable semantics, Byzantine processes may repeatedly overwrite the shared register with invalid values before correct readers observe the writer's authenticated update. We therefore analyze its probabilistic resilience under varying scheduler and register behaviours.

We define the \emph{\resfac}, denoted $P(\bot)$, as the probability that a correct reader returns $\bot$ during a cycle (a sequence of phases in which every process is scheduled at least once).\\
\noindent\textbf{System Parameters.}
The resilience expressions depend on $n$, $f$, $p$, $\gamma$ and $\beta$, where $n$, $f$, and $\gamma$ are defined in Section~\ref{sec:model}. Here, $p$ denotes the number of reader threads executing concurrently within a phase; the writer is excluded since it participates in every phase. Parameter $\beta$, with $1 \le \beta \le \min\{p,f\}$, denotes the number of Byzantine readers present in a \emph{Byzantine phase} (a phase containing at least one); it captures how the adversary distributes its $f$ Byzantine readers across the phases of a cycle.\\
\noindent\textbf{Execution Bias Model.}
Under an unbiased scheduler, each phase is scheduled exactly once during a cycle, whereas a biased scheduler grants Byzantine phases additional execution opportunities, captured by the amplification factor $\gamma$. Independently, an \emph{unbiased} register linearises the up to $\beta+1$ concurrent write attempts of a phase without preference, whereas a \emph{biased} register commits a Byzantine value whenever one is attempted.

\begin{table}[h]
\centering
\small
\scalebox{0.85}{
\begin{tabular}{cc|cc|}
\cline{3-4}
                                                 &        & \multicolumn{2}{c|}{Register}      \\ \cline{3-4} 
                                                 &        & \multicolumn{1}{c|}{Biased} & Unbiased \\ \hline
\multicolumn{1}{|c|}{\multirow{2}{*}{Scheduler}} & Biased   & \multicolumn{1}{c|}{$ \frac{(f \gamma/\beta)}{(n/p)+((\gamma-1) f/\beta)} $}    & $\frac{1}{\beta+1} \cdot \frac{f \gamma}{(n/p)+((\gamma-1) f/\beta)}  $      \\ \cline{2-4} 
\multicolumn{1}{|c|}{}                           & Unbiased & \multicolumn{1}{c|}{$\frac{fp}{n\beta}$}    & $\frac{fp}{n(\beta+1)}$      \\ \hline
\end{tabular}}
\caption{Summary of the \emph{\resfac}, $P(\bot)$ for CAReg.}
\label{tab:probsummary}
\end{table}
Table~\ref{tab:probsummary} summarizes $P(\bot)$ under the four combinations of (un)biased register arbitration and (un)biased scheduling. As $f$ grows relative to $n$, adversarial writes obtain more opportunities to interfere with authenticated values, and increasing $\gamma$ lets Byzantine phases execute more frequently; concentrating Byzantine readers (larger $\beta$) instead raises the per-phase failure probability but reduces the number of Byzantine phases, and the net effect favours the adversary spreading out ($\beta = 1$). Hence, although the underlying atomic register remains linearizable and wait-free, adversarial scheduling and contention may still probabilistically suppress the visibility of legitimate values; the detailed derivations appear in Appendix~\ref{sec:AnalysisCAReg}.

\subsection{Discussion and Limitations}
\label{subsec:CAReg_discussion}
\label{subsec:baseline_limitation}

Table~\ref{tab:summaryofall} (page~\pageref{tab:summaryofall}) summarizes the properties of the baseline CAReg construction. It relies on a single atomic register and therefore requires only constant memory. Authentication is enforced externally through signed values, so the register itself remains a standard crash-tolerant linearizable object. As reflected in the ``Operation spans multiple phases'' column of Table~\ref{tab:summaryofall}, a warp may begin an operation in one phase, be suspended, and resume in a later phase. Consequently, invocation and completion need not occur within the same phase.

Retaining only a single value, however, imposes two limitations. First, since Byzantine readers may overwrite the shared location arbitrarily, a freshly written authenticated value may become temporarily unobservable. Under contention, a correct reader may return $\bot$ even when the writer is correct while both the scheduler and register are unbiased. Therefore, $P(\bot)=\frac{fp}{n(\beta+1)}$,
remains strictly positive and grows with the number of Byzantine processes. Second, a single overwritable cell provides no defense against \emph{replay}. A reader sees only the current contents of M and has no newer authenticated value for comparison, so it cannot distinguish the writer's latest write from a Byzantine re-installation of an older, validly signed pair. Such a replay preserves Byzantine linearizability. It linearises as a Byzantine write but may cause a correct reader to return a stale value.

Both limitations arise from retaining only a single value. This observation motivates constructions that maintain a bounded window of recent authenticated writes. Since the designated writer contributes one authenticated value in every phase while at most $p$ Byzantine insertions can occur within the same phase, examining the most recent $p+1$ entries guarantees that a correct reader encounters at least one valid authenticated value. Consequently, $P(\bot)=0$, and the resilience increases to the optimal bound $f \le n-1$, while preserving Byzantine-linearizable semantics.

%% file: CirclularQue.tex

The CAReg-based construction relies on a single shared location that
is repeatedly overwritten. Although this design preserves Byzantine
linearizability, contention between the writer and Byzantine
processes may still cause correct readers to return $\bot$ with
non-zero probability.


To address this limitation, we consider an alternative design based
on an ordered object that retains multiple recent values instead of
overwriting a single register.
In the remainder of this section, we assume a fair (unbiased) scheduler,
so that no warp is systematically favoured and each process is scheduled
regularly.
Rather than allowing values to be
immediately erased, they accumulate in an ordered structure, and
readers examine only the most recent entries to determine the latest
valid value. We model this structure abstractly as a \snapqueue, an ordered object with snapshot capability. The SWMR register is implemented on top of this abstraction.






\noindent\textbf{Sequential Specification of \snapqueue.}
The abstract state is a sequence $S=\langle v_1,\ldots,v_m\rangle$, initially empty. The register's initial value is \texttt{NULL} (no write yet); we distinguish \texttt{NULL} from $\bot$, which a read returns when no valid signed value is found. \snapqueue exposes:

\noindent {\bf \textsc{Insert}$(v)$.} Appends $v$ to $S$ ($S := S \cup \langle v \rangle$); returns nothing. All inserts are linearized and appear in $S$ in linearization order.

\noindent {\bf \textsc{ReadLatest}().} Returns the most recent $p+1$ values of $S$, oldest to newest (or all of $S$ if $|S|<p+1$); does not modify~$S$.
\vspace{-0.8em}
\subsection{Design Features}
\label{subsec:design}
Besides the four assumptions of the previous section, we additionally assume the following:


\noindent\textbf{Single insert call per process per phase.}
We assume that each phase is short enough that a process can invoke at most one \texttt{insert} operation on the \snapqueue. Consequently, the total number of values that can be written to the shared data structure within a single phase is bounded by $p+1$. This assumption simplifies the presentation and analysis without  losing  generality: the algorithms described below also work when phases are long enough to allow multiple insert operations by the same process. However, in that case, the size of the queue and the maximum number of values that can be stored within a single phase must be increased accordingly.

This is a reasonable assumption in our setting. The duration of a phase is determined by the scheduler and, in practical systems, typically spans only a small number of instructions. Combined with the synchronous nature of our model, this yields a natural and enforceable upper bound on the number of writes per process per phase. Adopting such a bound also allows us to avoid relying on sticky objects for Byzantine-tolerant register construction, which would otherwise be required in the presence of unbounded concurrent writes by Byzantine processes~\cite{10.1007/s00446-005-0125-8}.
\vspace{-0.5em}
\label{sec:Snapobj}
\input{snapObject}
\vspace{-0.5em}
\subsection{Discussion}
Table~\ref{tab:summaryofall} (page~\pageref{tab:summaryofall})summarizes the four realizations of \snapqueue. All four share the same guarantee, for the same reason: each phase carries one authenticated writer value and at most $px$ Byzantine insertions, so a reader examining the $px+1$ entries always observes the writer's value ($p+1$ under the single-insert assumption of Section~\ref{subsec:design}), and the sequence number bound to each value by the writer's signature identifies the most recent one among them. Hence, unlike CAReg, all four attain $f \le n-1$ and $P(\bot)=0$; and since a replayed writer-signed entry retains its old sequence number and is never selected, their freshness properties hold against the general adversary. They differ only in the cost of these guarantees, along three axes: progress, memory, and whether operations may span phases.
The \emph{lock-free linked list} caps memory at $\Delta = O(n)$ nodes via reclamation, reads the $p+1$ nodes nearest the head, and allows operations to span phases. Its only limitation is progress: insertion is a CAS retry loop, so a scheduler biased toward Byzantine processes may delay the writer indefinitely.
The \emph{\wf\ circular array} removes this limitation: an unconditional fetch-and-increment orders insertions into a fixed ring of $K$ slots, so every operation completes in a bounded number of its own steps, with reads scanning the $px+1$ newest tickets, while memory stays bounded and operations may span phases.
The remaining two \wf\ realizations occupy the extremes of the memory axis: the \emph{atomic array} achieves constant memory with a fixed $(p{+}1)$-slot window, but is correct only when operations complete within one phase; the \emph{Herlihy-Wing-inspired queue with snapshot} is the simplest to reason about, but its memory is unbounded and its reads traverse a stabilization window, costing $\Delta+p+1$.
Overall, \snapqueue\ makes the baseline's probabilistic guarantee deterministic and replay-resistant. The circular array is the choice when \wfdom\ and bounded memory are both required, the atomic array when memory is the binding constraint and operations fit in a phase, the Herlihy-Wing-inspired queue when simplicity outweighs cost, and the linked list serves as the lock-free baseline.

%% file: snapObject.tex
\subsection{Algorithmic Solution}
\label{subsec:algo-solution}
Algorithm \ref{alg:algo2} describes the phase-based write and read operations used by the register
construction in terms of the above-mentioned abstract ordered object, \snapqueue.
The concrete representation of \snapqueue and the manner in which it stores values
are intentionally left unspecified at this level and are described separately in
Appendix \ref{sec:extra_snapobject}.

\begin{algorithm}[h!]
\caption{Phase-Based Writer and Reader Operations on \snapqueue}
\label{alg:algo2}
\footnotesize
\algrenewcommand\algorithmicindent{0.8em}

\textbf{Shared variable:} \snapqueue $O$, entries initially NULL\\
\textbf{Writer-local:} $\mathit{seq} \gets 0$

\begin{minipage}[t]{0.43\columnwidth}
\begin{algorithmic}[1]
\Procedure{Write}{$v$}
    \State $\mathit{seq} \gets \mathit{seq} + 1$
    \State $sig \gets \textsc{Sign}(\langle v, \mathit{seq} \rangle)$
    \State $O.\textsc{Insert}(\langle v, \mathit{seq}, sig \rangle)$
\EndProcedure
\end{algorithmic}
\end{minipage}
\hfill
\begin{minipage}[t]{0.55\columnwidth}
\begin{algorithmic}[1]
\Procedure{Read}{}
    \State $S \gets O.\textsc{\readP}()$
    \State $valid \gets \{\bot\}$
    \ForAll{$\langle v, \mathit{seq}, sig \rangle \in S$}
        \If{$\textsc{Verify}(\langle v, \mathit{seq} \rangle, sig)$}
            \State $valid \gets valid \cup \{\langle v, \mathit{seq} \rangle\}$
        \EndIf
    \EndFor
    \State \Return $\textsc{Latest}(valid)$
\EndProcedure
\end{algorithmic}
\end{minipage}

\end{algorithm}
A write operation executed by the writer invokes \textsc{Write}$(v)$.
The writer increments a local sequence number $\mathit{seq}$, signs the pair
$\langle v,\mathit{seq}\rangle$, and inserts the signed triple
$\langle v,\mathit{seq},\mathit{sig}\rangle$ into \snapqueue, appending a
logically new element to the ordered object.
The signature helps readers to locally distinguish values produced by the
correct writer from Byzantine ones; the sequence number it binds to the value
serves only to defeat replay, since a Byzantine process can re-insert an older
writer-signed entry but cannot attach a fresh sequence number to it.
A read operation invokes \textsc{ReadLatest} on \snapqueue, which returns the $p+1$ most recently inserted entries.
The reader verifies each signature over $\langle v,\mathit{seq}\rangle$,
discards entries that fail, and returns the most recent valid value; when
several writer-signed values appear in the window, the largest sequence number
identifies the most recent.
If no valid entry is found, the reader returns $\bot$.\\
We realize \snapqueue\ in two ways: a \emph{lock-free} solution based on a
linked list, and a \emph{\wf} approach for which we give three
implementations a circular array, a Herlihy-Wing-inspired queue with
snapshot, and a fixed-size atomic array. Each realizes the same abstract
interface while exposing different trade-offs in memory consumption and
robustness under adversarial scheduling. The linked list and the circular
array are detailed below; the remaining two \wf\ implementations are deferred
to Appendix~\ref{sec:extra_snapobject}.

\subsection{Lock-Free Implementation}
\label{subsec:ll-atomic-object}

\subsubsection{\textbf{Linked-List Based Atomic Object.}}
Algorithm~\ref{alg:LinkedList} presents a lock-free implementation
of the ordered object using a singly linked list. The object
maintains a shared pointer \texttt{head} to the most recently
inserted node. Each node stores
the fields $\langle \mathit{val}, \mathit{seq}, \mathit{sig}, \mathit{next} \rangle$,
where \texttt{val} contains the stored value, \texttt{seq} the writer's
sequence number, \texttt{sig} the signature over
$\langle \mathit{val}, \mathit{seq} \rangle$,
and \texttt{next} points to the previous head, thereby maintaining
elements in reverse chronological order. The object treats \texttt{seq} and
\texttt{sig} as opaque data supplied by the writer and reader operations
(Algorithm~\ref{alg:algo2}) and performs no
authentication itself.

\noindent
\textit{\em Insert Operation.}
The \textsc{Insert}$(\langle v, \mathit{seq}, sig \rangle)$ operation creates
a new node containing the signed triple $\langle v, \mathit{seq}, sig \rangle$
and performs a lock-free insertion
at the head of the list using a compare-and-swap (CAS) primitive. The process repeatedly: (i) reads the current head, (ii) sets the new node’s \texttt{next} pointer to the observed head, and (iii) attempts $\texttt{CAS}(\textit{head}, h, n)$.
\begin{algorithm}[h!]
\caption{Linked-List-Based Atomic Object}
\label{alg:LinkedList}
\small

\textbf{Node structure:}
$\langle \textit{val}, \textit{seq}, \textit{sig}, \textit{next} \rangle$

\textbf{Shared variable:} $Node* \ head \gets NULL$

\begin{minipage}[t]{0.48\columnwidth}
\begin{algorithmic}[1]
\Procedure{Insert}{$\langle v, \mathit{seq}, sig \rangle$}
    \State $n \gets \textsc{NewNode}()$
    \State $n.\textit{val} \gets v$
    \State $n.\textit{seq} \gets \mathit{seq}$
    \State $n.\textit{sig} \gets sig$
    \State $h \gets head.\textsc{read}()$
    \State $n.\textit{next} \gets h$
    \While{$\neg head.\textsc{CAS}(h,n)$}
        \State $h \gets head.\textsc{read}()$
        \State $n.\textit{next} \gets h$
    \EndWhile
\EndProcedure
\end{algorithmic}
\end{minipage}
\hfill
\begin{minipage}[t]{0.48\columnwidth}
\begin{algorithmic}[1]
\Procedure{ReadLatest}{}
    \State $S \gets [\ ]$
    \State $h \gets head.\textsc{read}()$
    \While{$(h \neq NULL \land S.\textsc{length}() \leq p)$}
        \State $S.\textsc{append}($
        \Statex \hspace{\algorithmicindent}$\langle h.\textit{val}, h.\textit{seq}, h.\textit{sig}\rangle)$
        \State $h \gets h.\textit{next}$
    \EndWhile
    \State \Return $S$
\EndProcedure
\end{algorithmic}
\end{minipage}

\end{algorithm}
If the CAS fails due to concurrent insertions, the procedure retries with the updated head value. The successful CAS constitutes the linearization point of the operation. This guarantees lock-freedom: although an individual process may retry, some concurrent insertion succeeds in a finite number of steps.

\noindent\textit{\em ReadLatest Operation.}
The \textsc{ReadLatest} operation traverses the list
starting from \texttt{head} and collects the first $p+1$
entries encountered, newest to oldest in list order. Since new nodes are always inserted
at the head and existing links are never modified,
the traversal observes a prefix of the list corresponding
to the most recent insertions; the reader (Algorithm~\ref{alg:algo2})
then verifies the signature of each entry over
$\langle \mathit{val}, \mathit{seq} \rangle$ and returns the most recent
valid value, identified by the largest sequence number.
The operation is wait-free with respect to the parameter $p$, as it
traverses at most $p+1$ nodes.

\noindent\textbf{Correctness.}
To establish that the proposed SWMR register implementation is Byzantine-linearizable, it suffices to show that:\\
1. The implementation satisfies the sequential specification of the register, namely that every read returns the value written by the most recent preceding write in the induced sequential execution.\\
 2. Every concurrent execution history generated by the implementation is linearizable. By \defref{iditDef}, if the history observed by correct processes can be extended to a linearizable history consistent with the sequential specification of the register, then the implementation is Byzantine-linearizable.\\
The linearizability of Algorithm~\ref{alg:LinkedList}, the termination guarantees of its operations, and the \blbty\ of the SWMR register built using the linked-list-based atomic object are established in Appendix~\ref{subsec:ll-correctness}.
\noindent\textbf{Safe Memory Reclamation.}
\label{subsec:safemomoryreclamation}
Since nodes are never removed, the list would grow without bound. An optional procedure (Algorithm~\ref{alg:memoryReclaim}) detaches the suffix beyond the $\Delta$-th node with a single CAS (Figure~\ref{fig:reclamation}). Retaining $\Delta={(p+1)(\delta-1)\frac{n}{p}}=O(n)$ nodes ensures that no node still in use is reclaimed; the derivation is in Appendix~\ref{app:reclamation}.
\begin{figure} [h]
    \centering
    \includegraphics[scale=0.41]{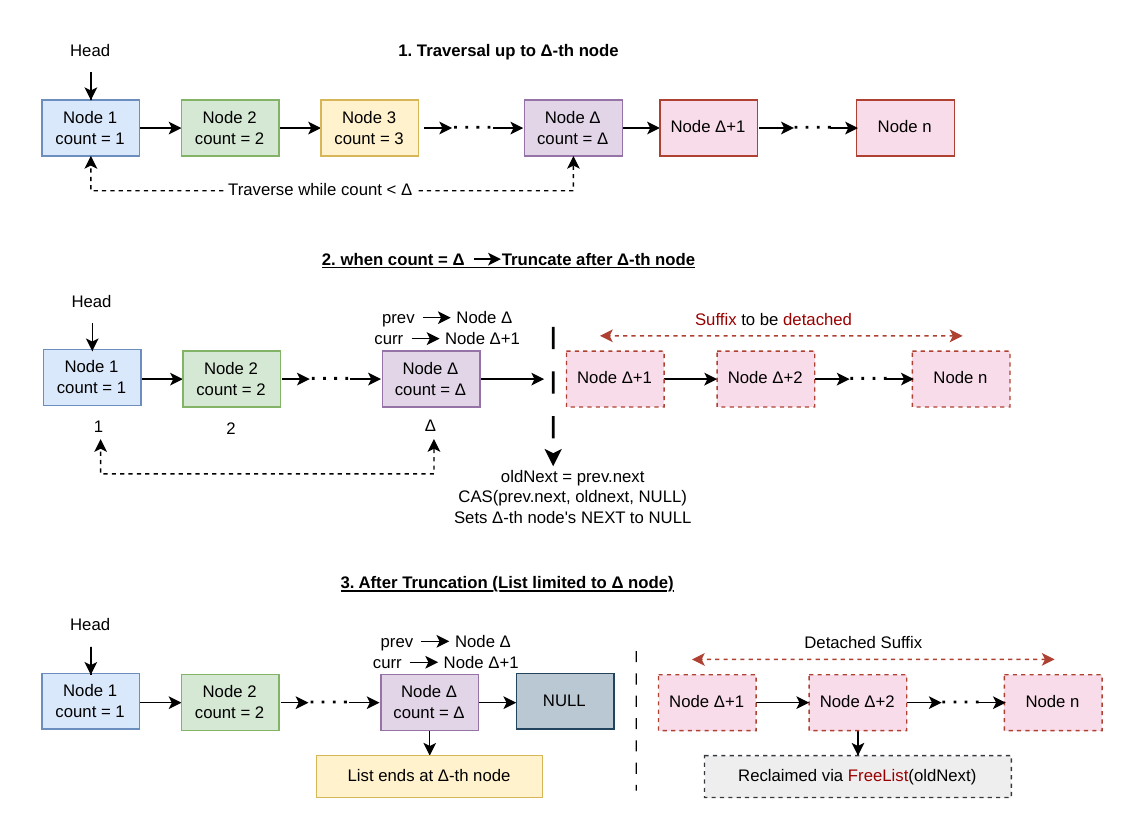}
    \vspace{-1.5em}
    \caption{Memory reclamation for the linked-list–based object with bound $(p+1)(\delta-1)\frac{n}{p}$. The list is traversed until the $\Delta$-th node, after which a CAS operation sets the $\Delta$-th node’s \texttt{next} pointer to \texttt{NULL}, detaching and reclaiming the suffix via \texttt{FreeList(oldNext)}.}

    \label{fig:reclamation}
\end{figure}





\subsection{Wait-Free Implementations}
\label{subsec:wf-implementations}
We give three \wf\ realizations of \snapqueue, all of which replace the CAS
retry loop of the linked list with an unconditional fetch-and-increment. We
detail the circular-array based object here; the remaining two are described
at the end of this section and detailed in the Appendix.
\subsubsection{\textbf{Wait-Free Circular-Array Based Atomic Object.}}
Algorithm~\ref{alg:wfarray} realizes the ordered object with a fixed circular
array. It maintains an atomic counter \texttt{tail} (a ticket dispenser) and a
shared array $A[0\ldots K-1]$; each slot holds $\langle x, t\rangle$, where
$x=\langle v,\mathit{seq},\mathit{sig}\rangle$ is the signed value and $t$ its ticket. A
position $i$ holds a live entry only for tickets congruent to $i$ modulo $K$.

\noindent
\textit{\em Insert Operation.}
\textsc{Insert}$(x)$ (i) fetch-and-increments \texttt{tail} to obtain a ticket
$t$, and (ii) writes $\langle x, t\rangle$ to $A[t \bmod K]$. Both steps are
unconditional, so every insert completes in two steps with no CAS-retry loop;
the store is its linearization point, giving wait-freedom.

\noindent
\textit{\em ReadLatest Operation.}
\textsc{ReadLatest} reads \texttt{tail} to get the current end $T$ and scans
the $W$ newest tickets ($T-1$ down to $T-W$), keeping an entry only if
$\mathit{tk}=T-a$, which rejects stale entries from an earlier lap. As a phase
issues at most $px+1$ tickets ($px$ Byzantine, one writer), $W=px+1$ keeps the
writer's latest value in the window ($p+1$ when $x=1$). It performs $W$ reads,
terminating in $W+1$ steps, hence is wait-free.
\begin{algorithm}[h]
\caption{Wait-Free Circular-Array Based Atomic Object (WF-Array)}
\label{alg:wfarray}
\small
\begin{algorithmic}[1]
\Statex
\State \textbf{Shared variables:}
\State \quad atomic$<$int$>$ $\mathit{tail} \gets 0$ \Comment{ticket dispenser}
\State \quad atomic$<$Entry$>$ $A[0 \ldots K-1] \gets \texttt{NULL}$ \Comment{Entry $=\langle \mathit{val},\mathit{seq},\mathit{sig}\rangle$}
\Procedure{Insert}{$\langle v,\mathit{seq},\mathit{sig}\rangle$} \Comment{any process may invoke}
    \State $t \gets \mathit{tail}.\textsc{Fetch-And-Increment}()$ \label{lin:wffai} \Comment{unconditional}
    \State $A[t \bmod K] \gets \langle v,\mathit{seq},\mathit{sig}\rangle$ \label{lin:wfstore} \Comment{store into reserved slot}
\EndProcedure
\Procedure{\readP}{$ $} \Comment{any process may invoke}
    \State $S \gets [\ ]$
    \State $T \gets \mathit{tail}.\textsc{read}()$ \label{lin:wftail}
    \For{$a = 1$ \textbf{to} $W$} \Comment{newest to oldest}
        \State $\langle \mathit{val},\mathit{seq},\mathit{sig}\rangle \gets A[(T-a) \bmod K]$ \label{lin:wfload}
        \If{$\langle \mathit{val},\mathit{seq},\mathit{sig}\rangle \neq \texttt{NULL} \;\land\; \mathit{tkt}=T-a$} \label{lin:wfmatch}
            \State $S.\textsc{append}(\langle \mathit{val},\mathit{seq},\mathit{sig}\rangle)$
        \EndIf
    \EndFor
    \State \Return $S$ \Comment{entries newest first; reader verifies and selects}
\EndProcedure
\end{algorithmic}
\end{algorithm}


\noindent\textbf{\em Assumptions.}
Beyond the assumptions of Sections~\ref{subsec:CARegAssumption}
and~\ref{subsec:design}, the analysis assumes that within a phase the insert
group executes before the read group, that a process completes at most $x$
inserts per scheduled phase, round-robin scheduling, and that a pending store
lands within a bounded number of scheduled phases of its fetch-and-increment;
the precise statements (R1)--(R4), together with the choice of the window $W$
and the array size $K$, appear in Appendix~\ref{subsubsec:wfarray-object}.

\noindent\textit{\em Correctness.}
The \wfdom\ and linearization points of Algorithm~\ref{alg:wfarray}, its
Byzantine ordering properties with respect to Definition~\ref{def:byzlinz},
and the resulting \blbty\ and resilience of the SWMR register built on it are
established in Appendix~\ref{subsubsec:wfarray-object}.

\medskip
\noindent
We give two further wait-free realizations of the \snapqueue, both using an
unconditional fetch-and-increment, with full descriptions and proofs in the
appendix. \\
The \textbf{\hwq} (Algorithm~\ref{alg:HWQueue}), inspired by the
Herlihy--Wing queue, fetch-and-increments a counter \texttt{tail} to order
inserts and scans the most recent entries on read, using unbounded memory
(Appendices~\ref{subsec:hwqueue-snapshotalgo}).\\ 
The \textbf{Array-Based Atomic Object}
(Algorithm~\ref{alg:ArrayObject}) stores the window in a fixed $(p{+}1)$-slot
array indexed modulo $p+1$, using constant memory but correct only within a
single phase (Appendices~\ref{subsubsec:array-atomic-object}).

\noindent\textbf{Resilience of the \snapqueue Construction}
Let $W$ denote the writer and $R = \{r_1,\dots,r_n\}$ the set of readers.
Let $F \subseteq R$ be the set of Byzantine readers with $|F| = f$,
and let $C = R \setminus F$ denote the set of correct readers.

\begin{theorem}
If $W$ is correct, then for every execution and every $r_i \in C$, $P(r_i \text{ returns } \bot) = 0$.
\end{theorem}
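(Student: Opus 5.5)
The proof is a deterministic counting argument: we show that the window returned by \textsc{ReadLatest}() always contains an entry that verifies, so the read never returns $\bot$ and the probability is exactly zero rather than merely small. I read the statement as applying to reads whose linearization point lies after the writer's first completed write, in line with the convention of Lemma~\ref{lem:ll-latest} and the non-emptiness theorem for Algorithm~\ref{alg:LinkedList}. Before any write, $\bot$ or \texttt{NULL} is the legitimate initial response, so that case is excluded.

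\textbf{Step 1: locate the writer's entry.} Fix a correct reader $r_i \in C$ and a read $R$ it performs. Let $\ell(R)$ be its linearization point, which is the atomic read of \texttt{head} or \texttt{tail} depending on the realization. Because the correct writer participates in every phase (Section~\ref{subsec:CARegAssumption}), there is a writer insert $W^{*}$ that linearizes before $\ell(R)$, carrying $\langle v^{*},\mathit{seq}^{*},sig^{*}\rangle$. Choose $W^{*}$ to be the most recent such insert. By the assumption that the insert group executes before the read group within a phase, $W^{*}$ can be taken from the current phase or from the immediately preceding phase in which inserts occurred.

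\textbf{Step 2: count what can sit above it.} Under the single-insert-per-process-per-phase assumption of Section~\ref{subsec:design}, at most $p$ other entries (all from readers) can be linearized between $\ell(W^{*})$ and $\ell(R)$. This holds because each scheduled reader inserts at most once and the writer's own later insert would contradict the choice of $W^{*}$. By the sequential specification of \snapqueue, and concretely by Lemma~\ref{lem:readlatest-correct} for the linked list (or the $W=p+1$ window argument for the array-based realizations), \textsc{ReadLatest} returns the $p+1$ most recent entries. So $W^{*}$'s entry lies in the set $S$ that $R$ receives.

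\textbf{Step 3: verification succeeds.} The writer is correct and nodes and slots are immutable once linked, or cannot be reclaimed within $\Delta$ (Appendix~\ref{app:reclamation}). Therefore $\textsc{Verify}(\langle v^{*},\mathit{seq}^{*}\rangle, sig^{*})$ succeeds. This gives $valid \supsetneq \{\bot\}$, so $\textsc{Latest}(valid)$ returns a writer-signed value and not $\bot$. The argument uses no property of the scheduler or of the Byzantine processes' behavior: replayed or forged entries can only add to $S$ and cannot remove $W^{*}$'s entry. Hence the event ``$r_i$ returns $\bot$'' is empty in every execution, and its probability is $0$.

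\textbf{Main obstacle.} The hard part is Step 2 in the presence of operations that span phases. A read that starts in phase $i$ and completes in phase $j$, or Byzantine inserts whose stores land late (as in the circular array), could in principle push more than $p$ entries above $W^{*}$. I would handle this by anchoring everything at the single atomic linearization point $\ell(R)$, using Lemma~\ref{lem:lp-localization}, and counting only inserts linearized in the window $(\ell(W^{*}),\ell(R))$. Since that window contains no writer insert, it is confined to at most one phase's worth of reader inserts. For the circular array I would additionally invoke (R4) together with the choice $W=px+1$ and a sufficiently large $K$, so that the stale-lap filter $\mathit{tkt}=T-a$ does not discard $W^{*}$.
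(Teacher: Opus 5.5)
Your proof is correct and follows the paper's own argument. The paper's proof uses the same counting: the correct writer inserts a validly signed entry in every phase, and at most $p+1$ inserts occur per phase under the single-insert assumption. Hence the window returned by \textsc{ReadLatest} (of size $p+1$, larger in the wait-free realizations) always contains a verifying writer entry, and unforgeability makes the result non-$\bot$. Your extra care (anchoring the count at the read's linearization point, treating the circular array's window and stale-lap filter, and setting aside reads before the first write) refines that argument rather than taking a different route.
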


\begin{theorem}
If $W$ is correct and $C \neq \emptyset$,
then the SnapObject-based construction
is Byzantine-linearizable for all $f \le n - 1$.(Detailed proofs are given in Appendix~\ref{subsubsec:resilienceofsnapqueue}.)
\end{theorem}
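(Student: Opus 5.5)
The plan is to reduce Byzantine linearizability of the register to two facts: the underlying \snapqueue\ realization is linearizable even when some invocations come from Byzantine processes, and every correct read returns the latest writer value under that linearization.

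The first fact is already available for the linked list. Lemma~\ref{thm:ll-linearizable} gives linearizability, and Lemma~\ref{lem:invocation-independence} says this survives Byzantine invocations; for the other realizations I would cite the analogous appendix lemmas. So fix an execution and a linearization $S_O$ of all \textsc{Insert} and \textsc{ReadLatest} operations on $O$, correct and Byzantine alike.

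The core step is a window argument, the same as in Lemma~\ref{lem:ll-latest}, and I would carry it out for arbitrary $f \le n-1$. Here the condition $C \neq \emptyset$ is used only so that there is a correct read to reason about. Take a correct read $r$ and let $W^{*}$ be the writer's last insert linearized before the \textsc{ReadLatest} of $r$. Three facts combine:
\begin{itemize}
\item the correct writer inserts in every phase;
\item each process inserts at most once per phase (Section~\ref{subsec:design});
\item inserts precede reads within a phase.
\end{itemize}
Together they give at most $p$ entries after $W^{*}$'s entry, and these can only come from the other processes of that phase. So $W^{*}$ lies among the $p+1$ entries returned by \textsc{ReadLatest}. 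The point to stress is that this bound depends only on the per-phase insert budget, not on $f$. Even if every reader is Byzantine, they cannot push the writer's entry out of the window.

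Unforgeability then does the rest. Every entry that verifies was signed by the writer. A Byzantine entry can at most be a replay, which carries a sequence number $\le \mathit{seq}^{*}$. Therefore \textsc{Latest} returns $v^{*}$, and no correct read returns $\bot$ after the first write, which is also the preceding theorem.

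To build the witness history $H'$ required by Definition~\ref{def:iditDef}, I would start from the correct operations of $H$. Each correct write takes its insert's linearization point, and each correct read takes its \textsc{ReadLatest} linearization point. Byzantine operations are discarded: a Byzantine insert never changes which value a correct reader returns, so none needs to be added. The resulting sequential history is legal, because each read returns the most recent preceding write, and it respects real time because every linearization point lies inside its operation (Lemma~\ref{lem:lp-localization}). The four ordering conditions of Definition~\ref{def:byzlinz} then follow from Lemmas~\ref{thm:ll-no-future}--\ref{thm:ll-intra-phase} or their analogues for the other realizations.

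The main obstacle I expect is the window-counting step when operations span phases, in the linked list and the circular array. A writer's insert from an earlier phase may linearize late, or a read's atomic load may occur in a different phase from where it completes, so the ``at most $p$ later inserts'' bound must be charged relative to the read's linearization point rather than to phases. For the circular array, I would first try to use assumptions (R1)--(R4) to show that a stale-lap slot is rejected by the ticket check. I would also need to show that the writer's pending store has landed before the read's load, which relies on the insert group executing before the read group within a phase.
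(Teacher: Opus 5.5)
Your proposal follows the paper's argument: the paper's proof likewise rests on three facts. The writer inserts a signed entry in every phase, the per-phase insert budget keeps that entry inside the $(p+1)$-window however many readers are Byzantine, and signatures are unforgeable; from these it concludes that correctness needs only one correct reader, i.e.\ $f \le n-1$. The paper's version is terser. It only argues that some entry in the window verifies, and leaves the largest-sequence-number selection, the witness history and the four ordering conditions to the per-realization results such as Theorem~\ref{thm:ll-byzlin}; it does not separately handle the multi-phase counting issue you flag, relying on the same per-phase assumptions you list.
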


\begin{table*}[h!]
\centering
\small
\scalebox{0.71}{
\begin{tabular}{|c|c|c|c|c|c|c|c|c|c|}
\hline
\textbf{Type}
& \textbf{\begin{tabular}[c]{@{}c@{}}Underlying \\ Implementation\end{tabular}}
& \textbf{\begin{tabular}[c]{@{}c@{}}Progress \\ Guarantee\end{tabular}}
& \textbf{Memory}
& \textbf{$\mathbf{f}$-resilience}
& \textbf{$P(\bot)$}
& \textbf{Adversary}
& \textbf{\begin{tabular}[c]{@{}c@{}}Signature\\ Aware\end{tabular}}
& \textbf{\begin{tabular}[c]{@{}c@{}}No.\ of reads\\ in a phase\end{tabular}}
& \textbf{\begin{tabular}[c]{@{}c@{}}Operation spans\\ multiple phases \end{tabular}} \\
\hline

Crash-Tolerant
& Atomic Register
& Wait-free
& Constant
& $\dfrac{(\beta+1)qn}{p}$
& $\dfrac{fp}{n(\beta+1)}$\footnotemark
& Non-replaying
& NA
& $1$
& Yes \\
\hline

\multirow{4}{*}{SnapObject}
& \begin{tabular}[c]{@{}c@{}}Linked List\end{tabular}
& Lock-free
& Bounded
& $n-1$
& $0$
& General
& NA
& $p+1$
& Yes \\ \cline{2-10}

& \begin{tabular}[c]{@{}c@{}}Circular Array\end{tabular}
& Wait-free
& Bounded
& $n-1$
& $0$
& General
& NA
& $px+1$
& Yes \\ \cline{2-10}

& \begin{tabular}[c]{@{}c@{}}Atomic Array\end{tabular}
& Wait-free
& Constant
& $n-1$
& $0$
& General
& NA
& $p+1$
& No \\ \cline{2-10}

& \begin{tabular}[c]{@{}c@{}}\hwq \\ \end{tabular}
& Wait-free
& Unbounded
& $n-1$
& $0$
& General
& NA
& $\Delta+p+1$
& Yes \\
\hline

\begin{tabular}[c]{@{}c@{}}Idealized Signature-Aware Memory\end{tabular}
& \begin{tabular}[c]{@{}c@{}}Atomic Register\end{tabular}
& Wait-free
& Constant
& $n-1$
& $0$
& General
& Yes
& $1$
& Yes \\
\hline
\end{tabular}}

\caption{Design trade-offs and resilience properties of Byzantine-linearizable SWMR register constructions.}
\label{tab:summaryofall}

\end{table*}

%% file: conclusion.tex
\section{Idealized Signature-Aware Memory}
\label{sec:ideal}
The previous constructions enforce authenticity externally: processes first retrieve values from the shared object and then verify signatures before accepting the returned value. We now consider a stronger shared-memory abstraction in which authentication is incorporated directly into the memory object itself. In this design, the shared object stores the writer's public key and internally verifies the supplied signature before accepting a write operation; the register state is thus updated only if the write carries a valid signature from the writer.
Since Byzantine processes cannot generate valid signatures, unauthorised writes are rejected before modifying the register state. As shown in Table~\ref{tab:summaryofall}, this eliminates Byzantine overwrites and allows the construction to achieve maximal $f$-resilience $(n-1)$ and $P(\bot)=0$. The construction additionally preserves constant memory and requires only a single read per phase because it still relies on a single atomic memory location. Moreover, unlike CAReg, the signature-aware memory is robust against the \emph{general} adversary of Section~\ref{sec:model}: just as the object permanently holds the writer's public key, it also remembers the sequence number of the last accepted write. The writer signs the pair $\langle v, seq \rangle$ (as in Algorithm~2), so the counter is cryptographically bound to the value; the memory accepts a write only if the signature verifies \emph{and} the supplied sequence number strictly exceeds the stored one. A Byzantine process replaying an older, validly signed pair cannot alter its bound counter without invalidating the signature, and the stale counter causes the write to be rejected. Replays therefore have no effect on the register state.
However, these guarantees rely on a significantly stronger assumption. Embedding signature verification directly into shared-memory objects is complex and implementation-specific. We discuss possible implementation considerations in Appendix~\ref{sec:sign}. Consequently, the construction primarily serves to characterize the upper boundary of the design space rather than a directly deployable implementation.
\section{Conclusion and Future Work}
\label{sec:conclusion}
This paper studied the construction of Byzantine-linearizable shared-memory objects in a GPU-inspired execution model where threads freely share on-chip memory and Byzantine processes may overwrite arbitrary locations. Focusing on the SWMR register, we investigated whether atomic semantics can still be preserved for correct processes under unrestricted shared-memory access.

We first formalised a phase-aware notion of Byzantine linearizability tailored to lock-step GPU execution and then presented three register constructions occupying different points in the design space. The crash-tolerant atomic-register construction achieves constant memory and a single read per phase, but provides only probabilistic resilience. In contrast, the \snapqueue-based constructions retain recent authenticated writes and achieve deterministic resilience guarantees with $P(\bot)=0$, at the cost of higher memory and synchronisation overheads. The signature-aware memory abstraction further combines constant memory with deterministic resilience under stronger hardware assumptions.

The constructions also differ in how well they resist replay. CAReg keeps a single value, so a reader has nothing to compare it against and its freshness guarantees hold only against a non-replaying adversary. The \snapqueue-based and signature-aware constructions bind a sequence number to each signed value, so they remain correct against the general adversary.

Overall, our results show that GPU execution properties such as lock-step warp execution, bounded instruction phases, and structured scheduling can partially compensate for the absence of traditional access-control mechanisms in Byzantine shared-memory systems.

Several directions remain open for future work. An immediate extension is the construction of richer Byzantine-tolerant shared-memory objects, including stacks, queues, and snapshots, using the SWMR registers developed here. Another direction is to study the synchronization power of objects in the GPU-inspired model and develop a Byzantine consensus hierarchy analogous to consensus-number hierarchies in classical shared memory. 
Finally, an intriguing direction is to extend the model beyond a single SM and take into consideration recent advances on synchronization mechanisms across SMs on GPUs. Understanding Byzantine linearizability and resilience in such multi-SM environments is challenging.

%% file: ByzLin.tex
Using the definitions of \blbty as a foundation, we now formulate a notion of Byzantine linearizability suitable for the GPU-inspired, phase-based execution model. Our definition preserves the core guarantees of the original formulations while accounting for synchronous execution within phases and explicitly addressing the behavior of Byzantine writers. If the lock-step execution assumption is removed, the additional constraints imposed by our definition disappear, and our definition coincides with the general \blbty definition of Cohen and Keidar~\cite{CKDISC2021}.\\
We construct a \blble SWMR register using shared memory. For clarity and consistency, we use the notation summarized in Table~\ref{tab:notation} throughout the paper.


\noindent\textbf{Uniqueness of Written Values.}
For simplicity, we assume that every value written by the correct writer is unique.
This assumption can be enforced in practice by the use of sequence numbers.
 \begin{definition}[Byzantine Linearizability for SWMR Registers in the GPU-Inspired Model] The following conditions must hold for correct processes:
\label{def:byzlinz}

    \noindent\textbf{$\mathbf{(a)}$ No Future Read.}
If a read operation completes in phase $j$, then the value it returns must have been written
by a write operation that completes in phase $j$ or earlier.
 Formally,
	\[
	(\forall t \in C,\,i,j:\ 
	\Bigl(\Readij{i}{j}{t}{}{x}\Bigr)
	\ \Longrightarrow\
	\Bigl( \exists\  \writeij{p}{q}{}{x}: p \leq j \Bigr)
	)
	\]

    \noindent\textbf{$\mathbf{(b)}$ No Read from Distant-Past.}
 A read operation must never return a value that has already been superseded by a more recent completed write before the phase in which the read occurs. Formally, if a value $x$ is written by a writer that completes in phase $i$, this value is later overwritten by a write that starts in phase $j$ with $i<j$ and completes in phase $k$, and a read by thread $t$ occurs in a strictly later phase $l$, then the read cannot return the old value $x$\footnote{Here even if the second writer wrote the same value as previous writer i.e., $x=y$, the reader should return the second write operation associated value only.}.
    \[
    \forall\, i, j, k, l, \text{ and } \forall t \in C:\;
    \bigl(i < j < k < l \text{ s.t. }  \]
     \[ \writeij{*}{i}{}{x} \;\rightarrow\; 
  \writeij{j}{k}{}{y} \; \rightarrow\;\Readij{l}{*}{t}{}{z}\bigr)
    \;\Longrightarrow\; z \neq x .
    \]

    \noindent \textbf{$\mathbf{(c)}$ No New--Old Inversion.}
A later read must not return a value originating from an older write than the one observed
by an earlier read.
Let the writer perform write operations sequentially, writing distinct values
$x_1, x_2, x_3, \ldots$.
Let $\Readthreeij{i}{j}{t}$ and $\Readthreeij{k}{l}{t'}$ be two read operations executed by correct threads such that
$j < k$, and suppose that
$\Readij{i}{j}{t}{a}{x}$
and
$\Readij{k}{l}{t'}{b}{x} .$
The execution satisfies \emph{No New--Old Inversion} if
$b \ge a.$

\noindent\textbf{$\mathbf{(d)}$ Intra-Phase Reader Consistency.}

Recall that from the model, within the same phase all the processes executing the same instruction run in lock-step manner, as illustrated in Figure~\ref{fig:phase-lockstep}.
 This means all the correct readers completing their read operation in the same phase must return the same value. In other words,

\[
\forall\ \{1,2,...p\} \in C: \text{  } \Readij{i}{j}{t}{1}{x}, \Readij{i}{j}{t}{2}{x}, \ldots, \Readij{i}{j}{t}{p}{x}\]\[ \;\Longrightarrow\; x_1 = x_2 = \cdots = x_p.
\]
\end{definition}

\noindent
\textbf{\em Remark.}
All the properties we have mentioned above hold true even when start and end phase of each operations is the same i.e., the function call completes within a single phase. Furthermore, these properties hold regardless of the writer being Byzantine or not, as long as the written values are distinguishable\footnote{In the case that the writer is Byzantine, the written value might be useless with respect to the application, but this does not violate the ordering of the events.}.

%% file: CARegCorrectness.tex
\label{sec:correctness}

We show that Algorithm~\ref{alg:SWMR_AReg} implements a Byzantine-linearizable single-writer multi-reader
(SWMR) register according to Definition~\ref{def:byzlinz}.

\paragraph{\textbf{Proof Structure.}}
The proof proceeds in three steps.
First, we establish that the underlying crash-tolerant atomic register (CAReg) remains
linearizable even when method invocations are issued by Byzantine processes.
Second, we show that any value returned by a correct reader (other than $\bot$) is written
by the writer.
Finally, we prove that the register satisfies the ordering constraints required by Byzantine
linearizability: \emph{No Future Read}, \emph{No Read from Distant-Past Value},
\emph{No New--Old Inversion}, and \emph{Intra-Phase Reader Consistency}.
Together, these properties imply Byzantine linearizability.


\subsubsection{\textbf{Object-Level Linearizability}}
\label{subsec:object-linearizability}

\paragraph{\textbf{Crash-Tolerant Wait-Free Objects.}}
An abstract shared object $O$ is \emph{crash-tolerant wait-free} if, in executions with crash
failures, it satisfies the following properties:

\begin{itemize}
  \item \textbf{Wait-Freedom.}
  Every method invocation by a correct process completes in a finite number of its own steps,
  regardless of the behavior (including crashes) of other processes.

  \item \textbf{Crash-Tolerant Correctness.}
  The object remains linearizable with respect to its sequential specification in all
  executions in which processes may crash but do not exhibit Byzantine behavior.
\end{itemize}

Crash-faulty processes may halt prematurely and fail to respond, but they do not perform arbitrary or adversarial actions and cannot corrupt the object state.


\begin{lemma}[Linearization Point Localization]
\label{lem:lp-localization}
Let $op$ be a completed method invocation that starts in phase $i$ and completes in phase
$j$, where $i \le j$.
Even if the execution of $op$ spans multiple phases, there exists a unique linearization
point
\[
\ell(op) \in \{i, i+1, \ldots, j\},
\]
and this linearization point occurs within a single phase between invocation and response.
\end{lemma}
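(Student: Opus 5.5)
We argue from the linearizability of the underlying crash-tolerant object together with the phase structure of the model. Since $op$ is completed and the base object is linearizable (crash-tolerant correctness), there is a linearization of the history in which $op$ is assigned an instant $\ell(op)$ lying between its invocation event and its response event. This holds even when some invocations come from Byzantine processes. Those processes can act on the object only through its exported operations, so each such call is, from the object's point of view, a legitimate invocation with arbitrary arguments. Any linearization of the object therefore still exists.

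First, we fix a concrete choice of $\ell(op)$ as a single atomic step of $op$ on shared memory. For the register this is the base read or write of $M$; for the \snapqueue\ realizations it is the successful \textsc{CAS}, the store, or the read of \texttt{head}/\texttt{tail}, as identified in the individual linearization-point lemmas. Because the step is a single hardware instruction (or an atomic primitive), it is executed entirely within one phase. The model guarantees that every executed instruction belongs to exactly one phase, since a warp runs only when scheduled and instructions are not split across phases. This gives the claim that the linearization point occurs within a single phase.

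Second, we localize this phase. The invocation of $op$ occurs in phase $i$ and its response in phase $j$, and phases are totally ordered in time. Any step of $op$ therefore lies in a phase $k$ with $i\le k\le j$: a step in a phase before $i$ would precede the invocation, and one after $j$ would follow the response. Hence $\ell(op)\in\{i,\dots,j\}$. When $op$ spans several phases, with the warp suspended and later resumed, the same argument applies, because suspension only inserts phases between $i$ and $j$ and does not move any step of $op$ outside that interval.

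Third, for uniqueness, we note that $op$ is a single operation in the linearization $S$ and occupies exactly one position there. We choose $\ell(op)$ to be the specific atomic step that determines that position, such as the first successful \textsc{CAS} or the single store, so at most one step qualifies.

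The main obstacle is justifying that an abstract linearization point can always be identified with a concrete atomic step for an arbitrary crash-tolerant base object, whose implementation is opaque. If no such step is available, we would instead argue existentially. The real-time interval of $op$ is a union of consecutive phases $i,\dots,j$, and linearizability places $\ell(op)$ inside that interval, so it falls in some phase of the interval. We would then fix the earliest admissible instant to obtain uniqueness.
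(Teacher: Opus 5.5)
Your proposal is correct, and its closing existential argument is exactly the paper's proof. Linearizability supplies an instantaneous point $\ell(op)$ between invocation and response, and phases are totally ordered and partition time. So that instant falls in exactly one phase $k$, and $i \le k \le j$ because the invocation lies in phase $i$ and the response in phase $j$.

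Where you differ is that you lead with a concrete identification of $\ell(op)$ with a single atomic hardware step and keep the abstract argument only as a fallback. That order should be reversed. The lemma is stated, and applied (e.g.\ to CAReg), for an arbitrary linearizable object, and the concrete identification does not hold uniformly in this paper:
\begin{itemize}
\item the base register $M$ of CAReg is an opaque crash-tolerant implementation;
\item the paper notes that \textsc{Insert} of the Herlihy--Wing-inspired queue has no fixed syntactic linearization point;
\item the read of the array-based object is linearized at an instant placed relative to the phase's write instruction, not at any step of the read.
\end{itemize}

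Where such a step does exist (the successful \textsc{CAS}, the store, or the read of \texttt{head}/\texttt{tail}), your route gives more information: an identifiable instruction executed in lock step, which is what the later intra-phase consistency arguments actually use. The paper's route gives full generality at no cost.

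On uniqueness, both you and the paper effectively fix one linearization and one point per operation, since linearizability alone does not force uniqueness. Any fixed choice therefore suffices, and the ``earliest admissible instant'' refinement is unnecessary.
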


\begin{proof}
By linearizability, every completed method invocation admits a unique instantaneous
linearization point between its invocation and response.
Execution is partitioned into totally ordered phases, and time within each phase is totally
ordered.
Hence, any instantaneous event, including a linearization point, must belong to exactly one
phase.
Since $op$ is invoked in phase $i$ and completes in phase $j$, its linearization point occurs
in some phase $k$ with $i \le k \le j$.
\end{proof}


\begin{lemma}[Invocation-Independence of Linearizability]
\label{lem:invocation-independence}
Let $O$ be a crash-tolerant linearizable object.
For any execution of $O$, the history consisting of all \emph{completed} method invocations
is linearizable, regardless of whether these invocations are issued by correct or Byzantine
processes.
\end{lemma}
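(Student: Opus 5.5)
The plan is to reduce the statement to the crash-tolerant correctness of $O$ by showing that a Byzantine invocation is, from the object's point of view, indistinguishable from a correct invocation. The key model fact is the non-corruptibility assumption of Section~\ref{Byz-Reg}: the state of $O$ can change only through its exported operations, each executed by the object's own code. A Byzantine process therefore has only two freedoms: it may choose which operations to invoke, and which arguments to pass. It may also stop taking steps, or ignore the responses it receives.

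\textbf{Step 1 (simulation).} Fix an execution $E$ of $O$. For each process $q$, let $\pi_q$ be the sequence of method invocations $q$ issues on $O$, together with their arguments. We build an execution $E'$ in which every Byzantine process is replaced by a ``puppet'' correct process. The puppet issues exactly the invocations of $\pi_q$, with the same arguments, and its steps inside $O$ are interleaved exactly as in $E$. Since $O$'s step function depends only on its state and the invoked operation and arguments, induction on the steps of $E$ shows that $E$ and $E'$ produce identical object states and identical responses.

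\textbf{Step 2 (crash case).} Two kinds of Byzantine behavior remain to be matched in $E'$.
\begin{itemize}
\item A Byzantine process that abandons an invocation mid-operation is modeled in $E'$ as a crash at that point.
\item Local computation between invocations, or ignoring responses, does not affect $O$, so the puppet may do anything there.
\end{itemize}
Thus $E'$ is an execution with only crash failures. By crash-tolerant correctness, its history is linearizable. Its completed invocations coincide with those of $E$, with the same arguments, responses, invocation and response times, and hence the same real-time order $<_H$.

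\textbf{Step 3 (transfer).} Take the linearization $S$ of $E'$. Restricted to completed operations, together with whatever pending operations it completes, $S$ satisfies L1 and L2 for the history of completed invocations of $E$. Hence that history is linearizable.

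\textbf{Main obstacle.} The delicate point is justifying in Step~1 that a Byzantine invocation truly executes the object's code. In the GPU model this holds because Byzantine threads cannot alter program counters or the control flow of object methods (Section~\ref{sec:model}). They also cannot reach shared state outside the interface. I would state explicitly that an arbitrary argument value is legal input to the sequential specification. For a register over an untyped domain every bit-string is writable, and signatures are opaque to $O$. With that, the crash-model guarantee covers every argument a Byzantine process can supply.
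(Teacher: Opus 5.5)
Your proposal is correct and follows the same idea as the paper's proof. Byzantine processes can reach $O$ only through its exported operations, so their completed invocations fall under the crash-tolerant linearizability guarantee, and invocations they abandon behave like crashes. Your puppet simulation makes this reduction explicit, and your treatment of pending invocations (keeping those the linearization completes) is more careful than the paper's claim that they can simply be excluded. The one assumption to state outright is that a Byzantine thread which abandons an invocation takes no further steps on $O$; this follows from the model's control-flow restriction, and it is what lets you model abandonment as a crash.
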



\begin{proof}
In the system model (Section~\ref{sec:model}), processes interact with shared memory exclusively through
method invocations on the object $O$.
Although a Byzantine process may invoke methods arbitrarily or halt during an invocation,
it cannot modify the internal semantics of $O$.

Linearizability is defined with respect to the history of \emph{completed} method invocations.
For each completed invocation, crash-tolerant linearizability guarantees the existence of a
linearization point that lies between the invocation and its response.
Invocations that do not complete, whether due to crashes or Byzantine behavior, are simply
excluded from the history and need not be linearized.

Therefore, the history consisting of all completed method invocations is linearizable,
independent of whether these invocations are issued by correct or Byzantine processes.
\end{proof}



\begin{lemma}[CAReg Linearizability under Byzantine Processes]
\label{lem:careg-linearizable}
All completed read and write operations on the crash-tolerant atomic register CAReg are
linearizable in the Byzantine GPU-inspired shared-memory model.
\end{lemma}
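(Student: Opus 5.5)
The plan is to obtain the statement as a direct instance of Lemma~\ref{lem:invocation-independence}, using Lemma~\ref{lem:lp-localization} to place the linearization points in the phase-based setting. First, I would record that CAReg satisfies the hypotheses of Lemma~\ref{lem:invocation-independence}. By design feature (3) of Section~\ref{subsec:CARegAssumption}, the register $M$ is a non-corruptible, crash-tolerant, linearizable, wait-free object. By the shared-memory assumption of Section~\ref{sec:model}, every process, correct or Byzantine, can touch $M$ only through its \textsc{read} and \textsc{write} methods.

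Second, I would characterize what a Byzantine process can actually do to $M$. It can invoke \textsc{write} with an arbitrary argument (an unsigned bit-string, a replayed signed pair, or \texttt{NULL}). It can invoke \textsc{read} and then behave arbitrarily with the result. It can stop in the middle of an invocation. It cannot act outside its scheduled phase or change the internal state of $M$ other than through these methods.

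Each of these behaviours is something the crash-tolerant specification already covers:
\begin{itemize}
\item An arbitrary argument is just another value in the register's domain.
\item Arbitrary local behaviour after a response is invisible to the object.
\item Stopping mid-operation is exactly a crash from the object's point of view.
\end{itemize}
So, restricted to the operations on $M$, every execution of the Byzantine GPU model is an execution admitted by the crash-tolerant specification. Lemma~\ref{lem:invocation-independence} then makes the history of completed invocations linearizable, whichever processes issued them.

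Third, I would fix concrete linearization points: the atomic store to $M$ for a write, and the atomic load from $M$ for a read. With these choices the read returns exactly the contents written by the last write linearized before it, so the sequential history is legal for a register over the domain of stored pairs. Lemma~\ref{lem:lp-localization} places each point within a single phase between the operation's invocation phase and response phase, so real-time (phase) precedence is respected.

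For pending operations, I would complete a pending write whose store has already taken effect and drop any other pending operation, following the standard linearizability extension.

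The main obstacle is the second step: justifying rigorously that a Byzantine invocation is indistinguishable, at the object interface, from a crash-prone correct one. The tempting failure point is lock-step serialization. Within a phase, divergent read and write groups execute sequentially, and a Byzantine thread might seem able to interleave its accesses inside a group. To handle this, I would appeal to the model's restriction that Byzantine threads cannot alter control flow or extend phases, together with the atomicity of each store and load. Together these give a total order on all accesses to $M$, so no execution arises that a crash-tolerant atomic register could not also produce.
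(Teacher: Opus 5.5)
Your proposal is correct and follows the same route as the paper. The paper's proof is the two-line observation that CAReg is linearizable under crash failures by assumption, so Lemma~\ref{lem:invocation-independence} transfers linearizability to histories that contain Byzantine invocations. Your remaining steps (reducing Byzantine behaviour at the interface to crashes, fixing the load/store linearization points, completing pending writes whose effect was observed) unpack what that lemma asserts and are more careful than the paper about pending operations, but the proof needs no idea beyond that lemma.
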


\begin{proof}
CAReg is linearizable under crash failures by definition.
By Lemma~\ref{lem:invocation-independence}, linearizability is preserved even when method
invocations are issued by Byzantine processes.
\end{proof}





\subsubsection{\textbf{Value Validity}}
\label{subsec:value-validity}

\paragraph{\textbf{Authorized Writer.}}
The \emph{authorized writer} is the process whose public key is used by readers to verify
signatures attached to written values.
A value is \emph{writer-generated} if it is accompanied by a signature that verifies
successfully under the authorized writer’s public key.


\begin{lemma}[Writer Authenticity - Value Validity]
\label{lem:writer-authenticity}
If a read operation executed by a correct process returns a value
$v \neq \bot$, then $v$ was written by the authorized writer.
\end{lemma}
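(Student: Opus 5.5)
The plan is to trace the returned value backwards through the code of \textsc{Read} in Algorithm~\ref{alg:SWMR_AReg} to the operation on $M$ that produced it, and then use unforgeability of signatures to pin that operation on the authorized writer.

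\textbf{Step 1: locate the originating write.} Consider a correct reader whose read returns $v \neq \bot$. By the code, the read performed a single read of $M$, obtaining some $x$. The only path returning a non-$\bot$ value is the branch in which $x=\langle v,sig\rangle$ and $sig$ verifies under the authorized writer's public key. So $v$ is writer-generated in the sense defined above.

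By Lemma~\ref{lem:careg-linearizable}, the completed operations on $M$, including those issued by Byzantine processes, are linearizable. The model guarantees that $M$ is non-corruptible and is modified only through its write operation. Hence $x$ is either the initial value $\langle NULL\rangle$ or the argument of the last write to $M$ linearized before the read. The initial value carries no valid signature, so it is excluded. Therefore some process $q$ wrote exactly $\langle v,sig\rangle$ to $M$.

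\textbf{Step 2: attribute the signature to the writer.} If $q$ is the authorized writer, we are done. Otherwise $q$ is a reader, possibly Byzantine, that wrote a pair carrying a valid writer signature on $v$. Under the standard unforgeability assumption (Appendix~\ref{sec:appAuth}), no process lacking the writer's private key can produce a valid signature on a message the writer never signed. So the writer must itself have signed $v$. The only place it signs is line~2 of \textsc{Write}, immediately before writing that pair to $M$, so the writer executed \textsc{Write}$(v)$. The case where $q$ replays an old signed pair fits this conclusion: $v$ was still written by the authorized writer at some earlier point, which is exactly what the lemma claims. Freshness is not part of this lemma.

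\textbf{Main obstacle.} The cryptographic step needs care. Signature verification succeeds only for messages that went through \textsc{Sign}, and correct readers hold the right public key. I would also stress that the claim is conditional on unforgeability, and holds except with negligible probability if a probabilistic signature scheme is used. A secondary subtlety is Step 1: it relies on the object-level guarantee that any value returned by a read of $M$ was supplied as the argument of some write to $M$. This follows from linearizability together with legality of the sequential register specification.
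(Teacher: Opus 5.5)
Your proposal is correct and follows the paper's argument: a non-$\bot$ return forces the signature to verify under the writer's public key, unforgeability then attributes the value to the writer, and replays still count as writer-generated. Your Step~1 (tracing $x$ back to a write on $M$ via linearizability) is extra detail the paper omits, since verification plus unforgeability already suffice. Like the paper, your Step~2 implicitly assumes the writer never shares its private key, and that assumption is what licenses your slightly stronger conclusion that the writer actually invoked \textsc{Write}$(v)$.
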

\begin{proof}
A reader returns a value only if the signature verifies under the authorized
writer's public key (Algorithm~\ref{alg:SWMR_AReg}). Since Byzantine
processes cannot forge valid signatures, any returned value originates from
the writer. (A Byzantine process may \emph{replay} a writer-signed value; the
replayed value is still writer-generated, so validity is not violated.)
\end{proof}
\paragraph{Observation.}
If the authorised writer itself is Byzantine, then it might share its private key with other Byzantine processes.

\subsubsection{\textbf{Byzantine linearizability}}
\label{subsec:ordering-properties}
\begin{itemize}
    \item \textbf{No Future Read.}

\begin{lemma}[No Future Read]
\label{thm:no-future}
For every correct thread $t$ and phases $i \le j$,
\[
\Readij{i}{j}{t}{}{v} \neq \bot \;\Rightarrow\; \exists \writeij{p}{q}{}{v}\text{ such that } p \le j .
\]
\end{lemma}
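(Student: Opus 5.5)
The plan is to trace the returned value back to a concrete write on $M$ and then use linearization-point localization to bound the phase of that write. Fix a correct thread $t$ and a read $\Readthreeij{i}{j}{t}$ that returns $v \neq \bot$. In Algorithm~\ref{alg:SWMR_AReg} the reader performs a single read of $M$, and returns $v$ only if the stored pair $\langle v, sig\rangle$ verifies under the writer's key. By Lemma~\ref{lem:careg-linearizable}, all completed operations on $M$ (including those issued by Byzantine processes) are linearizable. Hence the read of $M$ at line~2 returns the value of the most recent write on $M$ linearized before it. This is some register-level write $w$ with $\ell(w) < \ell(\text{read})$.

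The one exception is the case where no write precedes the read, so it returns the initial $\langle\text{NULL}\rangle$. That value fails verification, so the read would return $\bot$; this case is excluded by hypothesis.

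Next I would apply Lemma~\ref{lem:writer-authenticity}: since $v \neq \bot$, the pair carried by $w$ is writer-generated. Here a subtlety arises. If $w$ was issued by the writer, then $w$ is $\writeij{p}{q}{}{v}$ for some phases $p \le q$. If $w$ is a Byzantine replay of an earlier signed pair, the signature still proves that the writer produced $\langle v,sig\rangle$ in some earlier write $\writeij{p}{q}{}{v}$, invoked before the replay copied it. In both cases there is a writer's write of $v$ whose invocation phase $p$ is no later than the phase containing $\ell(w)$. In the replay case this holds because the Byzantine process must have observed the pair after the writer started producing it.

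Finally, I would chain the phase inequalities. By Lemma~\ref{lem:lp-localization}, $\ell(w)$ lies in some phase at least $p$ (in the direct case, $p \le \ell(w)$), and $\ell(\text{read}) \le j$. Together with $\ell(w) < \ell(\text{read})$, this gives $p \le j$, as required.

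The main obstacle is the replay case: I need to justify that the original writer's write was invoked no later than the replay's linearization point. I would argue this from unforgeability, since a Byzantine process cannot hold $\langle v, sig\rangle$ before the writer computed $sig$ during its write, which begins in phase $p$. If that argument is too informal, I would fall back on the non-replaying adversary, which the paper already adopts for CAReg freshness. Under that adversary, $w$ itself must be a write by the writer, and the direct case suffices.
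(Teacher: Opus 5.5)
Your proposal is correct and is essentially the paper's proof. The paper argues by contradiction: if the write of $v$ started in a phase $p>j$, then $\ell(\Readthreeij{i}{j}{t}) \le j < p \le \ell(\writetwoij{p}{q})$, so the read would return a value from an operation linearized after it, violating the atomic register's sequential specification. That is just the contrapositive of your direct chain $p \le \ell(w) < \ell(\mathit{read}) \le j$ from Lemmas~\ref{lem:careg-linearizable} and~\ref{lem:lp-localization}. The paper silently treats the observed register-level write as ``some write operation $\writeij{p}{q}{}{v}$'' and never mentions replay, whereas your unforgeability argument handles that case soundly. So the non-replaying fallback is unnecessary, and it would weaken the lemma, which the paper states without that assumption.
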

\begin{proof}
The read operation $\Readthreeij{i}{j}{t}$ invokes a read on the underlying atomic register
(Algorithm~\ref{alg:SWMR_AReg}, Line~7) and returns the obtained value if and only if the associated
signature verifies; otherwise, it returns $\bot$ (Lines~8--11).
By Lemma~\ref{lem:lp-localization}, the read $\Readthreeij{i}{j}{t}$ has a unique linearization
point $\ell(\Readthreeij{i}{j}{t})$ such that $i \le \ell(\Readthreeij{i}{j}{t}) \le j$, Similarly the write $W^{p,q}$ has a unique linearization
point $\ell(\writeij{p}{q}{}{v})$ such that $p \le \ell(\writetwoij{p}{q}) \le q$.

Since $\Readthreeij{i}{j}{t}$ is executed by a correct thread and returns $v \neq \bot$,
the value $v$ must have been written by some write operation $\writeij{p}{q}{}{v}$.
Assume, $p > j$.
Then,
\[
i\le \ell(\Readthreeij{i}{j}{t}) \le j < p \le \ell(\writetwoij{p}{q}) \le q,
\]
which implies that $\writeij{p}{q}{}{v}$ linearizes strictly after $\Readthreeij{i}{j}{t}$.

This contradicts the sequential specification of an atomic register, which forbids
a read from returning a value written by a write that linearizes after the read.
Therefore, $p \le j$.

\end{proof}


\item  \textbf{No Read from Distant-Past.}

\begin{lemma}[No Read from Distant-Past]
\label{thm:no-distant-past}
Assume the non-replaying adversary of Section~\ref{subsec:CARegAssumption}.
Let $\writeij{i}{j}{}{x}$ and $\writeij{k}{l}{}{y}$ be two write operations issued by the correct writer such that
$j < k$.
For any correct read operation $\Readthreeij{p}{q}{t}$ with $p > l$,
\[
\Readij{p}{q}{t}{}{z} \neq x .
\]
\end{lemma}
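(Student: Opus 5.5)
We argue by contradiction, reducing the claim to linearizability of the CAReg register M (Lemma~\ref{lem:careg-linearizable}) together with the fact that, under the non-replaying adversary, only the writer ever stores a validly signed value. Suppose a correct read $\Readthreeij{p}{q}{t}$ with $p>l$ returns $z=x$. Since $x\neq\bot$, Lemma~\ref{lem:writer-authenticity} shows that the pair the read fetched from $M$ carries a valid writer signature on $x$. Byzantine processes cannot forge signatures, and by the non-replaying assumption they never re-install a validly signed pair. Hence the register write whose value the read observed must be a write by the correct writer itself. By the uniqueness of written values, this is exactly the write $\writeij{i}{j}{}{x}$ (more precisely, its store $M \gets x$).

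Next we locate linearization points. By Lemma~\ref{lem:lp-localization},
\[
\ell(\writetwoij{i}{j}) \le j < k \le \ell(\writetwoij{k}{l}) \le l < p \le \ell(\Readthreeij{p}{q}{t}).
\]
So in the linearization $S$ of the history of $M$ given by Lemma~\ref{lem:careg-linearizable}, the order is: the write of $x$, then the write of $y$, then the read. The register's sequential specification says the read returns the value of the most recent write linearized before it. Since it returned $x$, no write may be linearized between $\ell(\writetwoij{i}{j})$ and $\ell(\Readthreeij{p}{q}{t})$. This contradicts the presence of the write of $y$, provided $y$ and $x$ are different values. That holds because the correct writer writes distinct values; the footnote to Definition~\ref{def:byzlinz} covers the case $x=y$ by identifying operations rather than values. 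Hence $z\neq x$.

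The main obstacle is the first step: justifying that the observed signed value must come from the specific write $\writeij{i}{j}{}{x}$ and not from some other store of the same bits. Here we invoke the non-replaying adversary explicitly. Every Byzantine store fails verification, so any verifying value in $M$ traces to a writer store. Uniqueness of values then pins it down to $\writeij{i}{j}{}{x}$. Intervening Byzantine invalid writes are harmless to the argument: they would only make the read return $\bot$, never $x$.
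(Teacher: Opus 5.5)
Your proof is correct and takes essentially the same route as the paper. You localize the linearization points via Lemma~\ref{lem:lp-localization} to obtain $\ell(\writetwoij{i}{j}) < \ell(\writetwoij{k}{l}) < \ell(\Readthreeij{p}{q}{t})$, then combine linearizability of CAReg (Lemma~\ref{lem:careg-linearizable}) with the non-replaying assumption (every Byzantine store yields $\bot$) and uniqueness of the writer's values. The only differences are that you argue by contradiction and trace the observed signed pair back to the store of $\writeij{i}{j}{}{x}$; this makes explicit the uses of non-replay and uniqueness that the paper's direct argument leaves implicit.
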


\begin{proof}

Let $\writeij{i}{j}{}{x}$ and $\writeij{k}{l}{}{y}$ be two write operations issued by the correct writer such
that $j < k$, and let $\Readthreeij{p}{q}{t}$ be a read operation executed by a correct thread
$t$ with $p > l$.

In Algorithm~\ref{alg:SWMR_AReg}, a read operation invokes a read on the underlying atomic register
(Line~7) and returns either the value currently stored in the register if the
signature verifies, or $\bot$ otherwise (Lines~8--11).

By Lemma~\ref{lem:lp-localization}, the linearization points satisfy
\[
i \le \ell(\writeij{i}{j}{}{x}) \le j,\quad
k \le \ell(\writeij{k}{l}{}{y}) \le l,
\]
\[
p \le \ell(\Readij{p}{q}{t}{}{z}) \le q.
\]
Since $j < k$ and $l < p$, it follows that
\[
\ell(\writetwoij{i}{j}) < \ell(\writetwoij{k}{l}) < \ell(\Readthreeij{p}{q}{t}).
\]

Thus, the write $\writeij{k}{l}{}{y}$ overwrites the value written by $\writeij{i}{j}{}{x}$ before the
read $\Readij{p}{q}{t}{}{z}$ linearizes.
By linearizability of the underlying atomic register (CAReg)
(Lemma~\ref{lem:careg-linearizable}), the value returned by $\Readthreeij{p}{q}{t}$ must
correspond to the most recent write whose linearization point precedes that of the
read. Consequently, the value $x$ written by $\writeij{i}{j}{}{x}()$ cannot be returned.
Moreover, if the register contains a value written by a Byzantine process,
Algorithm~1 returns $\bot$, not $x$.
Therefore, $\Readij{p}{q}{t}{}{z} \neq x$.

\end{proof}



\item \textbf{No New-Old Inversion.}

\begin{lemma}[No New--Old Inversion]
\label{thm:no-new-old}
Assume the non replaying adversary of Section~\ref{subsec:CARegAssumption}.
A later read must not return a value originating from an older write than the one observed
by an earlier read.
Let the correct writer perform write operations sequentially, writing distinct values
$x_1, x_2, x_3, \ldots$.
Let $\Readthreeij{i}{j}{t}$ and $\Readthreeij{k}{l}{t'}$ be two read operations executed by correct threads such that
$j < k$, and suppose that
\[
\Readij{i}{j}{t}{a}{x}
\quad\text{and}\quad
\Readij{k}{l}{t'}{b}{x} .
\]
The execution satisfies \emph{No New-Old Inversion} if
\[
b \ge a.
\]
\end{lemma}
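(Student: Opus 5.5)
The argument follows the same template as Lemma~\ref{thm:no-distant-past}. The two reads are ordered in real time, so their linearization points inherit that order, and the CAReg read semantics then pin each returned value to the most recent write before its linearization point. Let $\Readij{i}{j}{t}{a}{x}$ and $\Readij{k}{l}{t'}{b}{x}$ be reads by correct threads with $j<k$. By Lemma~\ref{lem:lp-localization},
\[
\ell(\Readthreeij{i}{j}{t}) \le j < k \le \ell(\Readthreeij{k}{l}{t'}),
\]
so the first read linearizes strictly before the second in the linearization of the underlying register given by Lemma~\ref{lem:careg-linearizable}. Both reads return non-$\bot$ values $x_a$ and $x_b$ (this is implicit in the notation). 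By Lemma~\ref{lem:writer-authenticity}, each value carries a valid writer signature, and under the non-replaying adversary it was written into $M$ by the correct writer itself: a Byzantine write never verifies, and no Byzantine process re-installs a signed pair. Hence the register write that the first read observes is the writer's own write $\writeij{}{}{}{x_a}$. It is the last write of any kind linearized before $\ell(\Readthreeij{i}{j}{t})$. Similarly, the second read observes the writer's write of $x_b$, which is the last write linearized before $\ell(\Readthreeij{k}{l}{t'})$.

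Now I compare indices. Suppose for contradiction that $b<a$. The writer is sequential and writes $x_1,x_2,\ldots$ in order, and by Lemma~\ref{lem:lp-localization} its linearization points respect that order. So $\ell(W(x_b)) < \ell(W(x_a)) < \ell(\Readthreeij{i}{j}{t}) < \ell(\Readthreeij{k}{l}{t'})$. The write of $x_a$ therefore lies strictly between the write of $x_b$ and the second read. That contradicts $W(x_b)$ being the most recent write preceding the second read. Uniqueness of written values rules out confusing $x_b$ with a later rewrite of the same value. Hence $b \ge a$.

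The main obstacle is the periodic rewriting of Section~\ref{subsec:CARegAssumption}: the writer rewrites its current value in every phase, so ``the write of $x_a$'' is not a single operation. I would resolve this by indexing values rather than write operations. Any writer write storing $x_a$ occurs after every write storing $x_{a'}$ with $a'<a$, because the writer is sequential and only re-writes its current value. The contradiction above then goes through by taking the last write before the first read, which stores $x_a$. One check remains: no writer write storing $x_b$ with $b<a$ can occur after any write storing $x_a$. That holds because the writer's value index is non-decreasing over time. The non-replaying assumption is essential at exactly this step, since with replay a Byzantine re-installation of $x_b$ could sit between the two reads. I would state this explicitly.
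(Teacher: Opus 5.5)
Your proof is correct and takes essentially the paper's route: you order the two reads' linearization points, assume $b<a$, and derive that a write of $x_a$ lies between the write of $x_b$ and the second read, which contradicts the most-recent-write semantics of CAReg. Two steps the paper leaves implicit are made explicit in your version: you invoke the non-replaying assumption to rule out a Byzantine re-installation of $x_b$ (or an unsigned overwrite being the observed write), and you handle the writer's periodic rewrites by indexing values rather than individual write operations.
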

\begin{proof}
The correct writer performs write operations sequentially and writes distinct values
$x_1, x_2, x_3, \ldots$ in increasing phase order, the corresponding write operations are
totally ordered by their linearization points. In particular, for all $u < v$,
\[
\ell(\writeij{p}{q}{u}{x}) < \ell(\writeij{r}{s}{v}{x}) .
\]

Because $j < k$, the read operation $\Readthreeij{i}{j}{t}$ completes before the read operation
$\Readthreeij{k}{l}{t'}$ begins. By linearizability,
\[
\ell(\Readthreeij{i}{j}{t}) < \ell(\Readthreeij{k}{l}{t'}) .
\]
We now assume, for contradiction, that $b < a$.
By the sequential order of the correct writer’s writing,
Since both reads are executed by correct threads and return non-$\bot$ values, each returned
value must correspond to a write whose linearization point precedes that of the read.
Thus,
\[
\ell(\writeij{p}{q}{a}{x}) < \ell(\Readthreeij{i}{j}{t})
\quad\text{and}\quad
\ell(\writeij{r}{s}{b}{x}) < \ell(\Readthreeij{k}{l}{t'}) .
\]

\[
\ell(\writeij{r}{s}{b}{x}) < \ell(\writeij{p}{q}{a}{x}) .
\]

Combining the above inequalities yields
\[
\ell(\writeij{r}{s}{b}{x}) < \ell(\writeij{p}{q}{a}{x}) < \ell(\Readthreeij{k}{l}{t'}) .
\]

Since the object is insert only with no delete and linearizable, a read must return the value written by
the most recent write whose linearization point precedes that of the read.
Therefore, $\Readthreeij{k}{l}{t'}$ cannot return $x_b$ when a later value $x_a$ has already been
written and precedes its linearization point.
This contradicts the assumption that $\Readij{k}{l}{t'}{b}{x}$.

Hence, the assumption $b < a$ is false, and it must hold that $b \ge a$.
\end{proof}

\noindent\emph{Remark (replay).} Against the general adversary of
Section~\ref{subsec:CARegAssumption}, Lemmas~\ref{thm:no-distant-past}
and~\ref{thm:no-new-old} can fail: a replayed writer-signed value passes
signature verification and may be returned even though it has been
superseded. This does not affect Byzantine linearizability
(Theorem~\ref{thm:byz-linearizable}), which rests on value validity alone.
\item \textbf{Intra-Phase Reader Consistency.}




\begin{lemma}[Intra-Phase Reader Consistency]
\label{thm:intra-phase}
Let $j$ be any phase, and let
\[
\Readthreeij{i}{j}{1}, \Readthreeij{i}{j}{2}, \ldots, \Readthreeij{i}{j}{p}
\]
be read operations executed by correct threads that all complete in phase $j$.
Then all such reads return the same value.
Formally,
\[
\Readij{i}{j}{1}{1}{x}, \Readij{i}{j}{2}{2}{x}, \ldots, \Readij{i}{j}{p}{p}{x}
\;\Longrightarrow\;
x_1 = x_2 = \cdots = x_p .
\]
\end{lemma}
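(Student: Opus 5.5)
The plan is to argue that all correct reads completing in phase $j$ perform their underlying atomic read of $M$ (Algorithm~\ref{alg:SWMR_AReg}, Line~7) in lock-step. They therefore obtain the same register contents, and since the post-processing is deterministic, they return the same value. This parallels Observation~\ref{obs:intraphase}, made precise for CAReg.

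\textbf{Step 1: locate the shared-memory access.} By Lemma~\ref{lem:lp-localization}, each read $\Readthreeij{i}{j}{t}$ has a unique linearization point with $i \le \ell(\Readthreeij{i}{j}{t}) \le j$. In Algorithm~\ref{alg:SWMR_AReg}, the only shared-memory step of \textsc{Read} is the atomic read $x \gets M$, so we take that step as the linearization point. By Lemma~\ref{lem:careg-linearizable}, CAReg stays linearizable despite Byzantine invocations, so each read obtains the value of the latest write (correct or Byzantine) linearized before its own read step.

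\textbf{Step 2: lock-step.} This step is the main obstacle, because reads may span phases. All the listed reads complete in phase $j$, and (as the statement's notation suggests) they also start in the same phase $i$ and are executed by correct threads. Correct threads of the same warp scheduled in the same phases follow the same program counter. By the lock-step execution rule of Section~\ref{sec:model}, they form a single group executing the read instruction on $M$ simultaneously, and Byzantine threads cannot alter their control flow. Because the write group and read group are serialized within a phase, no write can be linearized between the reads of different members of the group. Therefore all of them observe the same contents $x$ of $M$. If the general case (different start phases) must be handled, I would argue that threads of one warp scheduled together execute their final instruction group together, so the single read of $M$ occurs in the same group execution; this is the delicate point.

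\textbf{Step 3: determinism.} The remainder of \textsc{Read} (Lines~8--11) checks whether $x$ carries a signature that verifies under the writer's public key. This check is a deterministic function of $x$ and uses no further shared state. Hence every read returns $v$ if the signature verifies and $\bot$ otherwise, so they all return the same result, giving $x_1 = x_2 = \cdots = x_p$. The argument uses no assumption about the writer's correctness or the adversary type, which matches the remark after Definition~\ref{def:byzlinz} that (d) holds even when the writer is Byzantine.
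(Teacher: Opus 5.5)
Your proposal is correct and follows the paper's proof essentially step for step. Lock-step execution makes all correct reads completing in phase $j$ perform the same read of $M$, so they observe the same register state (by the linearizability of CAReg, Lemma~\ref{lem:careg-linearizable}), and signature verification is then a deterministic function of that state. You are slightly more explicit than the paper, which simply asserts that these reads invoke the same read of CAReg, in two places: you use the serialization of the write and read groups to rule out an intervening write, and you restrict to reads sharing the start phase $i$, as the statement's notation indicates.
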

\begin{proof}
Consider any correct thread $t$ whose read operation $R^t_{i,j}$ completes in phase $j$.
By Lemma~\ref{lem:lp-localization}, $\Readthreeij{i}{j}{t}$ has a unique linearization point
$\ell(\Readthreeij{i}{j}{t})$ such that $i \le \ell(\Readthreeij{i}{j}{t}) \le j$.

By the GPU-inspired system model (\secref{model}), execution within a phase follows the SIMT
lock-step discipline: all correct threads scheduled in phase $j$ execute the same instruction
sequence synchronously.
In particular, all read operations that complete in phase $j$ invoke the same read of the
shared register CAReg and perform identical, deterministic local computations.

Since CAReg is linearizable (Lemma~\ref{lem:careg-linearizable}), all reads whose
linearization points lie in phase $j$ observe the same logical state of the register.
Moreover, signature verification and value selection are deterministic functions of this
state.

Therefore, for any two correct reads $\Readthreeij{i}{j}{a}$ and $\Readthreeij{i}{j}{b}$ that complete in phase
$j$, the values they return must be identical.
Hence,
\[
x_1 = x_2 = \cdots = x_p .
\]
\end{proof}


\end{itemize}

\subsubsection{\textbf{Byzantine Linearizability}}

\begin{theorem}
\label{thm:byz-linearizable}
Algorithm~\ref{alg:SWMR_AReg} implements a Byzantine-linearizable SWMR register.
\end{theorem}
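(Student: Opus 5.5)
The plan follows the template already used for the linked-list construction (Theorem~\ref{thm:ll-byzlin}). It has two parts: the Cohen--Keidar condition of Definition~\ref{def:iditDef}, and then the phase-aware ordering conditions of Definition~\ref{def:byzlinz}.

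For the Cohen--Keidar part, I would start from Lemma~\ref{lem:careg-linearizable}: all completed operations on $M$, including those invoked by Byzantine processes, are linearizable with respect to the register's sequential specification. I would then build the extended history $H'$ from $H|_{\text{correct}}$ by keeping every correct operation, including the writer's writes. I would also add one Byzantine write for each completed Byzantine write on $M$, placed at that write's linearization point. Each such added write writes the raw contents it stored, namely either a writer-signed pair or an invalid string that decodes to $\bot$.

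Each correct read returns $\textsc{Verify}$ applied to the value of the most recent write linearized before it. So in the sequential history $S$ ordered by linearization points, each correct read returns exactly the decoded value of the most recent preceding write in $H'$. By Lemma~\ref{lem:writer-authenticity}, a non-$\bot$ return is always a value the authorized writer produced, possibly via a replayed Byzantine write in $H'$. This makes $S$ legal over $V\cup\{\bot\}$. Real-time order is respected because linearization points lie within the operation intervals (Lemma~\ref{lem:lp-localization}). Hence $H'$ is linearizable and $H'|_{\text{correct}}=H|_{\text{correct}}$. This part needs no assumption on replay.

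For the phase-aware conditions, I would invoke Lemmas~\ref{thm:no-future}, \ref{thm:no-distant-past}, \ref{thm:no-new-old} and \ref{thm:intra-phase} for conditions (a)--(d) of Definition~\ref{def:byzlinz}. Conditions (b) and (c) are established only against the non-replaying adversary; condition (d) follows from lock-step execution and holds whether or not the writer is correct. The theorem should be stated with this qualification, matching the remark after Lemma~\ref{thm:no-new-old}.

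The main obstacle is the treatment of $\bot$ and of replay inside $H'$. A correct read that returns $\bot$, or a stale replayed value, must still be justified by some preceding write in a legal sequential history. I would handle this by treating $\bot$ as an ordinary element of the register's value domain, as Section~\ref{subsec:areg-soln} already does. Each Byzantine write is then modeled as a write of its decoded value, so replayed values appear as fresh Byzantine writes of old values. The one delicate check is that the writer's own correct writes still sit at their true linearization points, so no correct read is ever forced to contradict them.
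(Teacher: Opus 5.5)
Your proof is correct, but your witness for Definition~\ref{def:iditDef} is built differently from the paper's.

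The paper does not use the linearization of $M$ at all for this part. It takes an arbitrary total order of the correct operations that extends real-time precedence. Immediately before each correct read returning $u$, it inserts a Byzantine $\textsc{write}(u)$ unless the preceding operation already writes $u$. Legality then holds by construction, and Lemma~\ref{lem:writer-authenticity} is cited only to argue that each inserted write is one a Byzantine process could issue ($\bot$, the initial value, or a replayed writer value).

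You instead simulate the actual execution. From Lemma~\ref{lem:careg-linearizable} you take the real linearization of $M$ and turn each Byzantine write on $M$ into a Byzantine high-level write of its decoded value at the same point. Legality then follows from the atomicity of $M$, and value validity serves only as a sanity check.

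The paper's route is shorter and does not depend on how $M$ linearizes. Yours gives a witness faithful to the execution: Byzantine operations appear only where Byzantine processes actually acted, and the writer's writes keep their real positions. This is not merely cosmetic. The paper's rule inserts a Byzantine write even between two consecutive reads of the same value, so it presupposes some Byzantine process to attribute it to and does not literally cover $f=0$. In that case your witness simply reduces to the plain linearization of $M$.

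Two points to tighten:
\begin{enumerate}
\item Lemma~\ref{lem:invocation-independence} as stated linearizes only completed invocations. A Byzantine write on $M$ that never returns can still take effect and be read by a correct process. You should therefore linearize $M$'s full history (appending responses to such pending writes, as the definition of linearizability allows) and add a Byzantine write for every write in that linearization, not only the completed ones.
\item Fix the abstract initial value explicitly as the decoding of $\langle\text{NULL}\rangle$, namely $\bot$.
\end{enumerate}

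Your handling of conditions (a)--(d), with the non-replaying qualification for (b) and (c), matches the paper.
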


\begin{proof}
By Lemma~\ref{lem:writer-authenticity}, every non-$\bot$ value returned by a
correct read was written by the correct writer (value validity). We build a
witness for Definition~\ref{def:iditDef} directly. Let $H$ be any history of
Algorithm~\ref{alg:SWMR_AReg}. Arrange the completed operations of correct
processes in any total order that extends real-time precedence, and place the
writer's writes in that order. Now, immediately before each correct read that
returns a value $u$, if the operation just before it does not already write
$u$, insert a $\textsc{write}(u)$ attributed to a Byzantine process. By value
validity, each such $u$ is $\bot$, the initial value $v_0$, or a writer value,
so every inserted write is a legitimate Byzantine operation.

In the resulting sequential history, every read is immediately preceded by a
write of the value it returns, so the history is legal; correct operations
keep their real-time order and the inserted operations are fresh Byzantine
events, so real-time precedence is respected; and the projection onto correct
processes is unchanged. Hence $H$ is Byzantine linearizable. Under the
non-replaying adversary, Lemmas~\ref{thm:no-future}, \ref{thm:no-distant-past},
\ref{thm:no-new-old}, and~\ref{thm:intra-phase} additionally establish the
ordering conditions of Definition~\ref{def:byzlinz}.
\end{proof}

%% file: AnalysisCAReg.tex
\label{sec:AnalysisCAReg}
For simplicity, we assume that read and write operations start and complete within the same phase. Under this assumption, we analyze the probability that a correct reader returns $\bot$ when the writer is also correct. This case captures the meaningful reliability guarantees of the construction. When the writer is Byzantine, no guarantees can be made about the validity of the values written by that process, since a Byzantine writer may arbitrarily deviate from the protocol.


\paragraph{System Parameters.}
\begin{itemize}
\item One writer (correct), present in every phase.
\item \(n\) total reader processes.
\item \(f\) Byzantine readers.
\item \(p\) reader processes executed per phase.
\item \(\beta \le \min\{p, f\}\) Byzantine readers may appear in any phase.
\item \(\gamma \ge 1\): scheduler bias factor, indicating that Byzantine phases are scheduled \(\gamma\) times more frequently than correct phases.
\end{itemize}

\paragraph {\textbf{Cycles.}}
A \emph{cycle} is a finite sequence of phases in which every process in the system is scheduled
at least once. Since each phase schedules exactly $p$ distinct reader processes, every cycle consists
of at least $\lceil n / p \rceil$ phases.

\paragraph {\textbf{Correct Phases.}}
A phase is \emph{correct} if it contains no Byzantine readers. In such a phase, the correct writer
is the only process capable of producing a value with a valid signature. Consequently, a read
operation executed by a correct reader never returns $\bot$, and therefore
\[ P(\bot \mid \text{correct phase}) = 0 .
\]

\paragraph {\textbf{Byzantine Phases.}}
A phase is \emph{Byzantine} if it contains $\beta \ge 1$ Byzantine readers in addition to the correct
writer. In such a phase, at most $\beta + 1$ write attempts may occur: one issued by the writer and
up to $\beta$ issued by Byzantine readers.\\
We analyze the probability that a correct reader returns $\bot$ by considering the effects of
atomic-register arbitration and scheduler behavior. We distinguish the following cases.
\begin{enumerate}
    \item \textbf{Unbiased AReg.}
    Under an unbiased atomic register (\AReg), all write attempts have an equal chance to be the last written value before the reads. Since Byzantine processes
    cannot forge valid signatures, the probability that a correct reader
    returns \(\bot\) is
    \[
        P(\bot \mid \text{Byzantine phase, \AReg unbiased})
        = \frac{\beta}{\beta+1}.
    \]
    
    \item \textbf{Biased AReg.}
    When \AReg is fully biased in favour
of Byzantine threads (i.e., whenever any Byzantine attempt is present, \AReg
always commits a Byzantine value), we have

    \[
        P(\bot \mid \text{Byzantine phase, \AReg biased})
        = 1.
    \]
    
    \item \textbf{Unbiased Scheduler.}
Under an unbiased scheduler, scheduling is fair across a cycle.
Across the $n/p$ phases of a cycle, exactly $f/\beta$ phases are Byzantine.
Each phase is selected uniformly, and therefore the probability that a phase is
Byzantine is
\[
P(\text{Byzantine phase} \mid \text{scheduler unbiased})
= \frac{f/\beta}{n/p}
= \frac{fp}{n\beta}.
\]

\item \textbf{Biased Scheduler.}
Under a scheduler biased towards Byzantine phases by a factor $\gamma \ge 1$,
phases containing Byzantine readers are selected more frequently than correct
phases.
In particular, each Byzantine phase is selected $\gamma$ times more frequently
than a correct phase.
The effective number of Byzantine phase selections in a cycle is therefore
$f\gamma/\beta$, while the total weighted number of phase selections is
\[
\frac{n}{p} + \frac{(\gamma - 1)f}{\beta}.
\]
As a result, the probability that a selected phase is Byzantine under a biased
scheduler is
\[
P(\text{Byzantine phase} \mid \text{scheduler biased})
= \frac{(f\gamma/\beta)}{(n/p) + ((\gamma - 1)f/\beta)}.
\]



\end{enumerate}

\subsubsection{\textbf{Probabilistic Analysis of Returning $\bot$ Under \AReg and Scheduler Bias}}

We analyze the probability that a correct reader returns ~$\bot$
in a system containing Byzantine readers. 


A correct reader returns $\bot$ when the signature of the value it reads is not
valid. We refer to any phase containing no Byzantine readers as a \emph{correct
phase}.

\bigskip
We now state the four scenarios corresponding to the four combinations of
(un)biased \AReg arbitration and (un)biased scheduling.

\begin{enumerate}
    \item  \textbf{\AReg Unbiased, Scheduler Unbiased}
In a system with $n$ readers, $f$ Byzantine readers, phase size $p$, and
$\beta$ Byzantine readers per Byzantine phase, under unbiased \AReg and an unbiased
scheduler, the probability that a correct reader returns $\bot$ in a given phase
is
\[
    P(\bot) 
        = \frac{f p}{n (\beta+1)}.
\]

\begin{proof}
Using the law of total probability,
\[
\begin{aligned}
P(\bot)
 &= P(\bot \mid \text{correct phase}) \cdot P(\text{correct phase}) \\
 &\quad + P(\bot \mid \text{Byzantine phase}) \cdot
         P(\text{Byzantine phase}) \\
 &= 0 \cdot P(\text{correct phase})
      + \frac{\beta}{\beta+1}\cdot \frac{fp}{n\beta} \\
 &= \frac{fp}{n(\beta+1)}.
\end{aligned}
\]
\end{proof}
Since $P(\bot \mid \text{correct phase}) = 0$, in the following parts we will directly use
\[
P(\bot)
= P(\bot \mid \text{Byzantine phase}) \cdot
  P(\text{Byzantine phase}).
\]

\item  \textbf{\AReg Biased, Scheduler Unbiased}
Under \AReg biased in favour of Byzantine writes and an unbiased scheduler,
the probability that a correct reader returns $\bot$ is
\[
    P(\bot) 
        = \frac{fp}{n\beta}.
\]

\begin{proof}
Since $P(\bot \mid \text{Byzantine phase}) = 1$ and
$P(\text{Byzantine phase}) = \frac{fp}{n\beta}$,
\[
    P(\bot) 
        = 1 \cdot \frac{fp}{n\beta}
        = \frac{fp}{n\beta}.
\]
\end{proof}



\item  \textbf{\AReg Unbiased, Scheduler Biased}
Under unbiased \AReg and a scheduler biased towards Byzantine phases by a factor
$\gamma \ge 1$, the probability that a correct reader returns $\bot$ is
\[
    P(\bot) 
        = \frac{f \gamma}{((n/p)+((\gamma-1) f/\beta)) (\beta+1)}.
\]

\begin{proof}
Using $P(\bot \mid \text{Byzantine phase}) = \frac{\beta}{\beta+1}$ and
\[
P(\text{Byzantine phase}) = \frac{(f \gamma/\beta)}{(n/p)+((\gamma-1) f/\beta)},
\]
\[
\begin{aligned}
P(\bot)
 &= \frac{\beta}{\beta+1} \cdot \frac{(f \gamma/\beta)}{(n/p)+((\gamma-1) f/\beta)} 
\end{aligned}
\]
\[
\begin{aligned}
P(\bot)
 &= \frac{1}{\beta+1} \cdot \frac{f \gamma}{(n/p)+((\gamma-1) f/\beta)} 
\end{aligned}
\]
\end{proof}


 \item  \textbf{\AReg Biased, Scheduler Biased}
Under \AReg biased in favour of Byzantine writes and a scheduler biased towards
Byzantine phases by a factor $\gamma \ge 1$, the probability that a correct reader
returns $\bot$ is
\[
    P(\bot) 
        = \frac{(f \gamma/\beta)}{(n/p)+((\gamma-1) f/\beta)} .
\]

\begin{proof}
Since $P(\bot \mid \text{Byzantine phase}) = 1$ and
$P(\text{Byzantine phase}) = \frac{(f \gamma/\beta)}{(n/p)+((\gamma-1) f/\beta)} $,
\[
    P(\bot)
        = 1 \cdot \frac{(f \gamma/\beta)}{(n/p)+((\gamma-1) f/\beta)} 
        = \frac{(f \gamma/\beta)}{(n/p)+((\gamma-1) f/\beta)} .
\]
\end{proof}
\end{enumerate}

\subsection{Resilience Analysis of \AReg}
\label{sec:resilCAReg}
We now derive resilience bounds for the proposed register using the
probabilistic analysis developed in the previous section.


Let $q$ denote the probability that a correct reader returns $\bot$, i.e.,
$P(\bot) = q$.

\begin{enumerate}
  \item  \textbf{Unbiased Atomic Register and Unbiased Scheduler.}
We first consider the setting in which both the atomic register arbitration and
the scheduler are unbiased.
In this case, the probability that a correct reader returns $\bot$ in a phase is $P(\bot) = \frac{fp}{n(\beta+1)}$. Imposing the constraint $P(\bot) =  q$ and solving for $f$ yields
\[
f \le \frac{(\beta+1)qn}{p}.
\]
The register tolerates a number
of Byzantine readers that scales linearly with the system size $n$ and inversely
with the phase size $p$.

\item \textbf{Biased Atomic Register and Biased Scheduler.}
We next consider the worst-case setting in which both the atomic register and the
scheduler are biased towards Byzantine behavior.
In this case, the probability that a correct reader returns $\bot$ in a phase is
\[
P(\bot) = \frac{(f\gamma/\beta)}{(n/p) + ((\gamma - 1)f/\beta)},
\]
where $\gamma \ge 1$ denotes the scheduler bias factor.
Requiring $P(\bot) = q$ and solving for $f$ yields
\[
f \le \frac{qn\beta}{p\bigl(\gamma - q(\gamma - 1)\bigr)}.
\]
This bound is well-defined for all $\gamma \ge 1$ and $q \in [0,1]$, and shows that
scheduler bias directly degrades the resilience of the register.
\end{enumerate}
\subsubsection{\textbf{Resilience Analysis ($\beta = 1$)}}
We consider a setting in which at most one Byzantine reader participates in each Byzantine
phase, i.e., $\beta = 1$.
The resilience of the register is parameterized by $q$ and can be configured according to
application requirements.
For concreteness, we fix $q = \frac{1}{5}$ as an illustrative example in our analysis.

\begin{enumerate}
    \item \textbf{Unbiased Atomic Register and Unbiased Scheduler.}
From the unbiased case,
\[
f = \frac{(\beta+1)qn}{p}.
\]

\item Substituting $\beta = 1$ and $q = \tfrac{1}{5}$, we obtain
\[
f = \frac{2n}{5p}.
\]

\item \textbf{Biased Atomic Register and Biased Scheduler.}
From the biased--biased case,
\[
f = \frac{qn\beta}{p\bigl(\gamma - q(\gamma - 1)\bigr)}.
\]
Setting $\beta = 1$ and $q = \tfrac{1}{5}$ gives

\[
f = \frac{n}{p(4\gamma + 1)}.
\]

\end{enumerate}
\paragraph{Discussion.}
The above analysis demonstrates that phase-based execution significantly limits
Byzantine interference under unbiased \AReg and scheduling, yielding
strong linear resilience guarantees.
\subsubsection{\textbf{Why Lock-Step Execution Yields a Tighter Bound on $P(\bot)$}}

Consider a Byzantine phase containing one correct writer $w$, $\beta$ Byzantine
processes $B_1,\ldots,B_\beta$, and one correct reader $r$. Each of $w$ and
$B_1,\ldots,B_\beta$ may issue a write operation, and $r$ issues a read operation.
Thus, the execution consists of the operations
\[
\{ w, B_1,\ldots,B_\beta, r \}.
\]
All operations are linearized by an unbiased atomic register (\AReg), and all
admissible interleavings are assumed to be equally likely.

\subsubsection*{Case 1: No Lock-Step with the Writer}

In this case, the Byzantine readers do not execute in lock-step with the writer $w$. While the reader $r$ executes in lock-step within the phase, the Byzantine readers progress independently of $w$. Consequently, the operations of $w$, the Byzantine readers, and $r$ may be arbitrarily interleaved in the linearization order. For example, admissible executions include
\[
\begin{aligned}
& W\,B_1\,B_2\,R,\quad
W\,B_1\,R\,B_2,\quad
W\,B_2\,B_1\,R,\\
&W\,B_2\,R\,B_1,\quad
W\,R\,B_1\,B_2,\quad
 W\,R\,B_2\,B_1,\\
&B_1\,W\,B_2\,R,\quad
B_1\,W\,R\,B_2,\quad
B_1\,B_2\,W\,R,\\
&B_1\,B_2\,R\,W,\quad
B_1\,R\,W\,B_2,\quad
B_1\,R\,B_2\,W,\ \ldots
\end{aligned}
\]

\[
(\beta+2)!
\]

A read returns a valid value if and only if no Byzantine write occurs after $w$
and before $r$.

This condition holds in exactly the following two disjoint cases.
\begin{enumerate}
    \item \textbf{Case (Read occurs first).}
The read $r$ precedes all write operations. In this case, the read observes the
most recent valid value written by $w$. The remaining $\beta+1$ write operations
$\{ w, B_1,\ldots,B_\beta \}$ may appear in any order, yielding
\[
(\beta+1)!
\]
executions.
\item \textbf{Case (Writer immediately precedes the read).}
The writer’s write $w$ immediately precedes the read $r$. No Byzantine write can
intervene between them. The $\beta$ Byzantine writes may appear before or after the
contiguous block $wr$ in any order. Treating $wr$ as a single block yields
\[
(\beta+1)!
\]
executions.
\end{enumerate}

Since the two cases are disjoint, the total number of executions in which the
read returns a valid value is
\[
2 (\beta+1)!
\]

Therefore,
\[
P(non \bot \mid \text{no lock-step})
=
\frac{2 (\beta+1)!}{(\beta+2)!}
=
\frac{2}{\beta+2},
\]
and hence
\[
P(\bot \mid \text{no lock-step})
=
1 - \frac{2}{\beta+2}
=
\frac{\beta}{\beta+2}.
\]

\subsubsection*{Case 2: Lock-Step Execution with the Writer}

In the GPU-inspired execution model, all processes execute in lock-step within a
phase. Consequently, the read $r$ occurs after all or before write operations in the
phase. 
\[
\begin{aligned}
r\,(w, B_1, \ldots, B_\beta)
\;\text{or}\;
(w, B_1, \ldots, B_\beta)\,r
\end{aligned}
\]

\[
P(\bot \mid \text{Byzantine phase, \AReg unbiased}) = \frac{\beta}{\beta+1}.
\]
\paragraph{\textbf{Conclusion.}}

Since
\[
\frac{\beta}{\beta+2} < \frac{\beta}{\beta+1},
\]
It follows that lock-step execution strictly increases the probability
that a correct read returns $\bot$ in a Byzantine phase.

In particular, when Byzantine processes execute independently of the writer,
additional interleavings become possible in which the read may occur
outside the interference window.
These executions strictly decrease the adversary’s success probability.

Under lock-step execution, Byzantine processes are forced to compete
directly with the writer within the same phase.
This restriction eliminates favourable interleavings for the reader
and maximises the adversarial impact.

Thus, in this setting, it is adversarially optimal for Byzantine processes
to compete with the writer within the same phase rather than execute independently.

%% file: HWQFULL.tex
\label{subsec:hwqueue-snapshotalgo}

Algorithm~\ref{alg:HWQueue} presents an array-based ordered object inspired by the
Herlihy--Wing concurrent queue.
The object maintains a logical ordering of inserted values using an array of
atomic registers and a single atomic counter \textsc{tail}.
The counter imposes a total order on insert operations, while the array stores
the corresponding values.
$\langle v,\mathit{seq},\mathit{sig}\rangle$ supplied by
Algorithm~\ref{alg:algo2}. As with the other realizations, the object treats
$\mathit{seq}$ and $\mathit{sig}$ as opaque data and performs no authentication
itself.
The \textsc{Insert}$(\langle v,\mathit{seq},\mathit{sig}\rangle)$ operation
atomically increments the counter
\textsc{tail} using a fetch-and-increment primitive to obtain a unique index.
The entry $\langle v,\mathit{seq},\mathit{sig}\rangle$ is then written to the
array position corresponding to this index.
The atomic increment of \textsc{tail} constitutes the linearization point of the
\textsc{Insert} operation and ensures that concurrent inserts are totally
ordered.

The \textsc{ReadLatest} operation first reads the current value of
\textsc{tail}, which determines the logical end of the sequence.
It then performs a bounded scan of the array for a fixed number of steps
proportional to $\Delta+p+1$, collecting values that were written by recent insert
operations. 

The Herlihy--Wing concurrent queue does not have a single linearization point. Since operations span multiple phases, a process may increment the counter but not complete the corresponding insert. Therefore, although the counter advances, some array positions may remain empty. To ensure correctness, readers must obtain the last $p+1$ inserted values; consequently, they must also parse up to $\Delta$ additional positions that may be empty.

The value of $\Delta$ is calculated in the same way as in the memory reclamation argument for the linked-list–based object, assuming that each method call of the Herlihy--Wing concurrent queue completes in at most $\delta$ phases under a strict round-robin scheduler. This bounded traversal guarantees termination while allowing readers to observe a snapshot of recently inserted values.
Since entries in the array are never removed or overwritten, multiple read
operations may concurrently observe the same values without coordination.

The \textsc{ReadLatest} operation can also be implemented without assuming an unbiased round-robin scheduler or that method calls complete within $\delta$ phases. In this case, the \textsc{ReadLatest} method parses the array until either $p+1$ non-empty values are collected or the beginning of the queue is reached.






\begin{algorithm}[h]
\caption{\hwq}
\label{alg:HWQueue}
\begin{algorithmic}[1]

\Statex
\State \textbf{Shared variables:}
\State $\textit{items}[0 \ldots \infty]$ : unbounded array of atomic registers, initially $\bot$
\State $\textit{tail}$ : atomic integer, initially $0$

\Statex
\Procedure{Insert}{$\langle v,\mathit{seq},\mathit{sig}\rangle$}
    \State $i \gets \textit{tail}.\textsc{FetchAndIncrement}()$
    \State $\textit{items}[i] \gets \langle v,\mathit{seq},\mathit{sig}\rangle$
\EndProcedure

\Statex
\Procedure{\readP}{}
    \State $r \gets \textit{tail}.\textsc{Read}()$
    \State $count \gets \Delta + (p+1)$
    \State $i \gets r-1$
    \State $S \gets [\,]$

    \While{$i \geq 0 \land count > 0$}
        \State $x \gets \textit{items}[i]$
        \If{$x \neq \bot$}
            \State $S.\textsc{Append}(x)$ \Comment{$x=\langle v,\mathit{seq},\mathit{sig}\rangle$}
        \EndIf
        \State $i \gets i-1$
        \State $count \gets count-1$
    \EndWhile

    \State \Return $S$ \Comment{entries newest first; reader verifies and selects}
\EndProcedure

\end{algorithmic}
\end{algorithm}
This construction does not perform any memory reclamation.
Once written, array entries remain allocated and are never reused.
As a result, the space usage of the object grows with the number of insert
operations.
This design choice simplifies the implementation and avoids the need for
garbage-collection mechanisms, at the cost of unbounded memory consumption.

The HWQueue-based object relies only on atomic registers and atomic
fetch-and-increment operations.
It provides an ordered abstraction suitable for phase-based execution, where
readers require access only to a bounded number of the most recent values.
\paragraph{\textbf{Correctness of the \hwq}}
\label{subsubsec:correctnessofHW}
We now consider the implementation based on the
Herlihy--Wing concurrent queue (Algorithm~\ref{alg:HWQueue}).

\noindent\textbf{Object-Level Linearizability.}

Algorithm~\ref{alg:HWQueue} is based on the concurrent queue of
Herlihy and Wing~\cite{herlihy1990linearizability}.
It is known that this queue implementation is linearizable
with respect to the FIFO queue specification.

Enqueue operations in this algorithm do not have a fixed
syntactic linearization point.
For each completed \textsc{Insert} operation,
there exists a unique linearization point lying in the
interval between the \textsc{FetchAndIncrement} on
\texttt{tail} and the write to
\texttt{items[$i$]},
as shown in~\cite{herlihy1990linearizability}.

Since Algorithm~\ref{alg:HWQueue} uses only atomic registers and atomic
\textsc{FetchAndIncrement} operations,
it is linearizable under crash failures.
By Lemma~\ref{lem:invocation-independence},
linearizability is preserved even when method
invocations are issued by Byzantine processes.
\\
\paragraph{\textbf{Byzantine Ordering Properties.}}

The ordering arguments for Algorithm~\ref{alg:HWQueue} are identical
to those for the linked-list construction.
The \textsc{FetchAndIncrement} operation on
\texttt{tail} induces a total order on insert operations,
and values become visible to readers only after the
corresponding write to \texttt{items[$i$]}.
Thus, the relative order of observable values
is consistent with write linearization.

Only writer-signed pairs
$\langle v,\mathit{seq}\rangle$ pass verification, and a Byzantine process may
re-insert such a pair but cannot alter its sequence number; hence, as in
Lemma~\ref{lem:ll-latest}, every correct read returns the value of the writer's
most recent write preceding its linearization point. The proofs of
\emph{No Future Read}, \emph{No Read from Distant-Past},
\emph{No New Old Inversion}, and \emph{Intra-Phase Reader Consistency}
therefore carry over directly, against the general adversary.
We omit the details.

%% file: ArrayBasedAtomicObject.tex
\label{subsubsec:array-atomic-object}
Algorithm~\ref{alg:ArrayObject} presents a boun-ded, array-based implementation of the ordered object
\snapqueue. The object maintains a shared array
$A[0 \ldots p]$ and a single atomic counter
\textsc{counter}. The counter assigns logical
timestamps to insert operations, while the array
stores the corresponding signed entries
$\langle v,\mathit{seq},\mathit{sig}\rangle$ supplied by
Algorithm~\ref{alg:algo2}. As with the other realizations, the object treats
$\mathit{seq}$ and $\mathit{sig}$ as opaque data and performs no
authentication itself.

The \textsc{Insert}$(\langle v,\mathit{seq},\mathit{sig}\rangle)$ operation atomically
increments \textsc{counter} using a
fetch-and-increment primitive to obtain a unique
timestamp $ts$. The target position is computed
as $ts \bmod (p+1)$, and the tuple
$\langle v,\mathit{seq},\mathit{sig},ts \rangle$ is written to the
corresponding array slot. The atomic increment
serves as the linearization point and establishes
a total order among insert operations. Since the
array has fixed size $p+1$, insertions overwrite
older entries in a cyclic manner once more than
$p+1$ values have been inserted.

The \textsc{ReadP} operation scans all $p+1$
array entries, collects their contents, and
returns the entries sorted in decreasing order of
timestamp; the reader (Algorithm~\ref{alg:algo2}) then verifies each entry over
$\langle v,\mathit{seq}\rangle$ and returns the valid value with the largest
sequence number.  The timestamps ensure that the
relative order of returned values reflects
the logical insertion order, despite possible
overwriting.

This implementation requires storage for exactly
$p+1$ entries and a single atomic counter.
Correctness relies on the phase-based execution
model: all insert operations of a phase are complete before reads of the subsequent phase begin.
Under this assumption, the array contains
precisely the most recent values relevant to
the current phase.
\begin{algorithm}[h]
\caption{Array-Based Atomic Object}
\label{alg:ArrayObject}
\begin{algorithmic}[1]
\Statex
\State \textbf{Shared variables:}
\State \quad atomic$<$Entry$>$ $\textit{A}[0 \ldots p]$ \Comment{Entry $=\langle \mathit{val},\mathit{seq},\mathit{sig},\mathit{ts}\rangle$}
\State \quad  atomic$<$int$>$ $counter \gets 1$
\Statex
\Procedure{Insert}{$\langle v,\mathit{seq},\mathit{sig}\rangle$}
    \State $ts \gets \textit{counter}.\textsc{Fetch-And-Increment}()$
    \State $pos \gets ts \bmod (p+1)$
    \State $\textit{A}[pos] \gets \langle v,\mathit{seq},\mathit{sig},ts \rangle$
\EndProcedure
\Statex
\Procedure{\readP}{$ $}
    \State $S \gets [\ ]$
    \For{$k = 0$ \textbf{to} $p$}
        \State $S.\textsc{append}(\textit{A}[k])$
    \EndFor
        \State $S \gets \textsc{Sort}(S,ts)$ \Comment{sort $S$ by $ts$ from high to low}
    \State \Return $S$ \Comment{entries; reader verifies and selects}
\EndProcedure
\end{algorithmic}
\end{algorithm}

\paragraph{\textbf{Correctness of the Array-Based Atomic Object}}
\label{subsubsec:correctnessofarrayobject}
We first observe that Algorithm~\ref{alg:ArrayObject} is not linearizable
in the standard asynchronous model.
We then prove that it satisfies Byzantine linearizability
under the phase-based lock-step execution model
illustrated in Figure~\ref{fig:phase-lockstep}.

\medskip
\noindent{\textbf{Non-Linearizability in the Asynchronous Model.}}

Each write operation performs an atomic
\textsc{Fetch-And-Increment} on \texttt{counter}
(Line~5), followed by an atomic write to
$A[\textit{pos}]$ .
Thus writes are individually linearizable.

However, a read operation iterates over the array
sequentially.
If writes are allowed to execute concurrently,
a read may observe some entries before a write
and other entries after the same write.

Since the array is circular
($\textit{pos} = i \bmod (p+1)$),
later writes may overwrite earlier positions.
Consequently, a read may return a set of values
that never coexisted in memory at any single time.

Therefore, no unique linearization point can be assigned
to such a read, and Algorithm~\ref{alg:ArrayObject} is not linearizable
in the asynchronous model.

\medskip
\noindent{\textbf{Phase-Based Lock-Step Execution Model.}}

Execution proceeds in totally ordered phases
(\secref{model}).
Within each phase:

\begin{itemize}
\item All active threads execute in SIMT lock-step.
\item Either a write operation executes,
      or a set of read operations executes.
\item No write overlaps with reads in the same phase.
\end{itemize}

Thus, writes occur in dedicated phases,
and reads execute together in separate phases,
as illustrated in Figure~\ref{fig:phase-lockstep}.
\begin{lemma}
\label{lem:array-phase-linearizable}
Under the phase-based lock-step execution model,
Algorithm~\ref{alg:ArrayObject} is linearizable.
\end{lemma}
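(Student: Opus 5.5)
I will assign explicit linearization points and check them against the sequential specification of \snapqueue. The model just stated gives three facts: phases are totally ordered; within a phase the insert group and the read group are serialized (lock-step with divergence serialization, Figure~\ref{fig:phase-lockstep}); and no write overlaps a read in the same phase. Since every operation of Algorithm~\ref{alg:ArrayObject} is assumed to start and finish inside one phase, I will linearize each \textsc{Insert} at its \textsc{Fetch-And-Increment} on \texttt{counter} (Line~5). I will linearize every \textsc{ReadLatest} of a phase at a single point just after the end of the last insert group preceding it, that is, at the instant where its first array load happens. By construction each point lies between invocation and response, so real-time order is respected, in the spirit of Lemma~\ref{lem:lp-localization}.

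\textbf{Step 1: writes.} The FAI values $ts$ are distinct and increase along the linearization order. Since no read runs concurrently with a write group, all stores of a write group have landed before any subsequent read begins. So at a read's linearization point, the array state is exactly the result of applying all inserts linearized so far in timestamp order. Within one write group, several stores may race on slots. However, the group contributes at most $p+1$ inserts (one per process, by the single-insert assumption of Section~\ref{subsec:design}), so their timestamps are $p+1$ consecutive integers and map to distinct residues mod $p+1$. No two inserts of the same group collide, so each slot ends up holding the entry of the largest timestamp congruent to its index.

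\textbf{Step 2: reads.} Let $T$ be the largest timestamp issued before the read. The claim is that slot $k$ holds the insert with timestamp the unique $ts\in\{T-p,\dots,T\}$ with $ts\equiv k \pmod{p+1}$ (or the initial entry if $ts<1$). This follows because timestamps are consecutive from $1$ and each residue class is overwritten in increasing order; Step~1 guarantees that a later group's store always overwrites an earlier one. Hence the $p+1$ slots hold exactly the $p+1$ most recent inserts, or all of them if fewer exist. Sorting by $ts$ yields them in insertion order, which matches the \textsc{ReadLatest} specification. The array does not change during the read's scan, since no insert executes in the read group. Therefore the scan is equivalent to an atomic snapshot, and all reads of the phase may share the same point, consistent with Observation~\ref{obs:intraphase}.

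\textbf{Conclusion and main obstacle.} Ordering all operations by these points gives a legal sequential history that respects $<_H$. By Lemma~\ref{lem:invocation-independence}, this also holds when the inserts are issued by Byzantine processes. I expect the main obstacle to be Step~1's no-collision claim inside a phase. It relies on the bound of $p+1$ inserts per phase and on operations not spanning phases. If a store from an earlier phase could land late, a stale entry might overwrite a newer one. I would rule this out explicitly using the assumption that all inserts of a phase complete before the next phase's reads. The sorting step in \textsc{ReadLatest} is only used to present the result in order; correctness does not depend on the stores happening in slot order.
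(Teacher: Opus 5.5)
Your proof is correct. It shares the paper's skeleton: place each linearization point inside its operation's phase, and use the fact that no insert overlaps a read, so the lock-step scan acts as an atomic snapshot of $A$. But it does strictly more than the paper's proof of this lemma. That proof stops at the snapshot property. It puts each \textsc{Insert} at its store to $A[\mathit{pos}]$ and each read at an instant adjacent to the phase's write instructions, and it never checks that the snapshot is what the \snapqueue\ specification requires. Your Steps~1--2 supply that legality check. The single-insert bound makes each group's timestamps at most $p+1$ consecutive integers, hence pairwise distinct modulo $p+1$. Because groups are serialized, every slot ends up holding the largest timestamp of its residue class, so the array contains exactly the $p+1$ most recent inserts. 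The paper uses the consecutive-timestamps observation only later, in Lemma~\ref{lem:array-latest}, and only to keep the writer's entry in the array. Your choice of the \textsc{Fetch-And-Increment} as the insert's linearization point matches the algorithm's description in the paper, though not the lemma's proof. It is what makes the abstract order of $S$ agree with the timestamp order used by \textsc{Sort}. With store-based points, one would also have to order a group's stores, or break ties among them, by timestamp. Otherwise the $p+1$ most recent inserts in linearization order need not be the ones the array holds.

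Two minor remarks. First, of your two descriptions of a read's point, only ``its first array load'' is valid in general: the end of the last preceding insert group may lie in an earlier phase, outside the read's interval. Second, Lemma~\ref{lem:invocation-independence} does not literally apply, since it presupposes a crash-tolerant linearizable object, and Algorithm~\ref{alg:ArrayObject} is not one outside the phase model. What you actually need for Byzantine callers is that their \textsc{Insert}s also perform both steps within their phase, which is the assumption you name at the end. An insert abandoned between its increment and its store could leave a stale entry in the scanned window.
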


\begin{proof}
Phases are totally ordered.

\textbf{Write.}
A write operation executes within a single phase.
Its linearization point is the atomic write to
$A[\textit{pos}]$.

\textbf{Read.}
A read operation executes the loop in
Lines~11--13 in lock-step.
All array reads occur at fixed instruction
positions within the phase. If the write instructions precede the read instructions in the phase, then the read
observes the updated state. If the read instructions precede the write
instructions in the phase, then the read
observes the prior state. In either case, there exists a unique instant
within the phase either just before or just after
the write instruction at which the entire
read result corresponds to the array state. We define that instant as the linearization point
of the read. Thus, every operation admits a unique linearization
point within its phase.
\end{proof}

\begin{lemma}[Latest Value]
\label{lem:array-latest}
Let the writer be correct. Under the phase-based lock-step execution model,
every read executed by a correct process returns the value of the writer's most
recent write preceding its read phase; once the first write has completed, no
correct read returns $\bot$.
\end{lemma}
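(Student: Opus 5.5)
The argument mirrors Lemma~\ref{lem:ll-latest}, with the immutability of the linked list replaced by the phase structure and the timestamp bookkeeping of Algorithm~\ref{alg:ArrayObject}. Fix a correct read $R$ executing in phase $j$. By Lemma~\ref{lem:array-phase-linearizable}, $R$ has a linearization point inside phase $j$. The array contents it observes are those left after every insert group that precedes the read instructions. Let $W^{*}$, carrying $\langle v^{*},\mathit{seq}^{*}\rangle$, be the writer's most recent write preceding that point. Since the writer inserts in every phase, $W^{*}$ lies either in phase $j$ itself (insert group before read group) or in phase $j-1$.

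\textbf{Step 1 (the writer's entry survives in the array).} Each \textsc{Insert} obtains a unique timestamp $ts$ by fetch-and-increment and writes to slot $ts \bmod (p+1)$. Let $ts^{*}$ be the timestamp of $W^{*}$. Its slot can be overwritten only by an insert with timestamp $ts^{*}+k(p+1)$ for some $k\ge 1$. That would require at least $p+1$ further inserts between $W^{*}$ and $R$. By the single-insert-per-phase assumption (Section~\ref{subsec:design}), a phase contains at most $p+1$ inserts, one from the writer and at most $p$ from readers. Any insert after $W^{*}$ and before $R$'s linearization point belongs to the same phase as $W^{*}$, or, if $W^{*}$ was in phase $j-1$, to the portion of phase $j$ that precedes the reads. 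In the latter case the writer's insert in phase $j$ would itself be a more recent write, contradicting the choice of $W^{*}$, unless phase $j$ is a read-only phase. Hence at most $p$ inserts follow $W^{*}$. This step also relies on the model guarantee that all inserts of a phase complete before the reads of the subsequent phase. So slot $ts^{*} \bmod (p+1)$ still holds $\langle v^{*},\mathit{seq}^{*},\mathit{sig}^{*},ts^{*}\rangle$ when $R$ scans it.

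\textbf{Step 2 (selection).} $R$ scans all $p+1$ slots, so the entry of $W^{*}$ is in $S$ and verifies. Every other entry that verifies is a writer-signed pair. It is either an earlier writer insert or a Byzantine replay of one, so its sequence number is at most $\mathit{seq}^{*}$. Sequence numbers are bound by the signature, and the correct writer's numbers strictly increase, so no valid entry exceeds $\mathit{seq}^{*}$. Selection in Algorithm~\ref{alg:algo2} is by largest $\mathit{seq}$, not by $ts$, so a replayed old entry with a larger timestamp is harmless. Hence $R$ returns $v^{*}$. Once the first write has completed, some such $W^{*}$ exists, so $S$ always contains a valid entry and $\bot$ is never returned.

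\textbf{Main obstacle.} The delicate step is the counting in Step 1. It must be pinned exactly to the intra-phase ordering: the insert group precedes the read group, and no insert of a later phase overlaps $R$. Without this, $p+1$ Byzantine-or-writer inserts could wrap around the $(p+1)$-slot ring. I would first settle this by an explicit case split on whether the writer's insert in phase $j$ precedes $R$'s reads. In both cases I would bound the number of timestamps issued after $ts^{*}$ and before the scan by $p$.
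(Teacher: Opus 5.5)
Your proposal is correct and follows the same route as the paper. There are at most $p+1$ inserts per phase, with consecutive timestamps, so the writer's latest entry cannot be overwritten in the $(p+1)$-slot array before the read scans it; and since the signature binds the strictly increasing $\mathit{seq}$, a replayed entry can never beat $\mathit{seq}^{*}$ in the largest-$\mathit{seq}$ selection. Your case split on whether phase $j$'s insert group precedes the reads is a more explicit version of what the paper gets by assuming dedicated write and read phases, where the relevant entry is simply the writer's entry from the last write phase.
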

\begin{proof}
Within a write phase at most $p+1$ insertions occur, one by the writer and at
most $p$ by reader processes (Section~\ref{subsec:design}); their timestamps
are consecutive, so they occupy distinct slots, and the writer's entry of that
phase is present in the array when the subsequent read phase begins. Only pairs
$\langle v,\mathit{seq}\rangle$ signed by the writer pass verification, and a
Byzantine process may re-insert such a pair but cannot alter its sequence
number. As the writer's sequence numbers strictly increase, no valid entry
carries a number exceeding that of its most recent write, while that write's
entry is present and verifies. Hence the selection of the valid entry with the
largest sequence number returns the writer's most recent value, and after the
first write the array always contains a valid entry, so $\bot$ is never
returned.
\end{proof}

\begin{lemma}[No Future Read]
For every correct thread $t$ and phases $i \le j$,
\[
\Readij{i}{j}{t}{}{v} \neq \bot
\;\Rightarrow\;
\exists \writeij{p}{q}{}{v}
\text{ such that } p \le j .
\]
\end{lemma}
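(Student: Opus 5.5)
The plan is to mirror the proof of Lemma~\ref{thm:ll-no-future} for the linked list, replacing the appeal to Lemma~\ref{lem:ll-latest} with Lemma~\ref{lem:array-latest}. Fix a correct thread $t$ whose read $\Readthreeij{i}{j}{t}$ returns $v \neq \bot$. As the first step, I would use Lemma~\ref{lem:array-latest}, which applies because the writer is correct. It says that $v$ is the value of the writer's most recent write preceding the read's phase. Call this write $\writeij{p}{q}{}{v}$. If instead one wants to avoid assuming a correct writer, the weaker fact that suffices is this: $v$ passed verification, so it belongs to some entry $\langle v,\mathit{seq},\mathit{sig},ts\rangle$ that was present in $A$ when the read executed. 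That entry was stored by some \textsc{Insert}, and its signed pair $\langle v,\mathit{seq}\rangle$ originates from a writer \textsc{Write} of $v$. This holds even when a Byzantine process replays the entry, because replay only copies a pair the writer already produced.

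The second step places the linearization points in phases. By Lemma~\ref{lem:array-phase-linearizable}, every operation of Algorithm~\ref{alg:ArrayObject} has a linearization point inside its phase, and Lemma~\ref{lem:lp-localization} gives $p \le \ell(\writetwoij{p}{q})$ and $\ell(\Readthreeij{i}{j}{t}) \le j$. In the lock-step model, a read can only observe array contents stored before its own linearization point. This is either earlier in the same phase, when the insert group precedes the read group, or in an earlier phase. Hence $\ell(\writetwoij{p}{q}) < \ell(\Readthreeij{i}{j}{t})$, and chaining the three inequalities gives $p \le j$. For the replay variant, note that the original writer \textsc{Write} of $v$ must have started no later than the replaying \textsc{Insert} that copied it. That insert in turn precedes the read, so the same chain applies.

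The main obstacle is justifying that the observed entry was stored at or before the read's linearization point, and not by a write that merely started later. Algorithm~\ref{alg:ArrayObject} is not linearizable asynchronously, so this has to come from the phase model. I would argue it directly from Lemma~\ref{lem:array-phase-linearizable}: the read's linearization point is chosen so that the entire scanned array equals the state at that instant. Every slot it returns was therefore written before that instant, and that instant lies in phase at most $j$.
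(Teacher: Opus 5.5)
Your proof is correct and takes essentially the paper's route: you identify the write via Lemma~\ref{lem:array-latest} and then use the phase model to place it before the read. The paper chains through completion phases ($q < j$ because writes land in phases preceding the read phase, and $p \le q$), whereas you chain through linearization points, $p \le \ell(\writetwoij{p}{q}) < \ell(\Readthreeij{i}{j}{t}) \le j$, as in Lemma~\ref{thm:ll-no-future}; your chain also covers an insert group preceding the read group in the same phase. Your aside that avoids assuming a correct writer is unnecessary, since the paper also relies on a correct writer through Lemma~\ref{lem:array-latest}. Its claim that a verifying pair must come from a writer \textsc{Write} also fails if a Byzantine writer leaks its key; taking the storing \textsc{Insert} itself as the witness write would fix this.
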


\begin{proof}
Since the read is executed by a correct thread and returns $v \neq \bot$,
Lemma~\ref{lem:array-latest} implies that $v$ is the value of the writer's
most recent write preceding the read's phase; denote this write by
$\writeij{p}{q}{}{v}$. Under the phase-based execution model of this
subsection, a write completes within its phase, and the writes are visible to a
read occurs in phases that precede the read phase, so $q < j$. Since every
operation starts no later than the phase in which it completes, $p \le q$,
and therefore $p \le j$.
\end{proof}

\begin{lemma}[No Read from Distant-Past]
Let $\writeij{i}{j}{}{x}$ and $\writeij{k}{l}{}{y}$ be write operations
issued by the correct writer such that $j < k$.
For any correct read $\Readij{p}{q}{t}{}{z}$ with $p > l$,
\[
z \neq x.
\]
\end{lemma}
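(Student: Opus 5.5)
The proof follows the same template as the linked-list version (Lemma~\ref{thm:ll-no-distant}) and rests on the Latest Value lemma for the array object (Lemma~\ref{lem:array-latest}). That lemma gives us that a correct read, once the first write has completed, returns the value of the writer's most recent write preceding its read phase.

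First I fix the phase ordering. By assumption $j<k$ and $p>l$. In the phase-based model of this subsection, a write completes within its phase, and writes occur in phases distinct from reads. So $\writeij{k}{l}{}{y}$ completes in phase $l<p$, which places it strictly before the read phase of $\Readthreeij{p}{q}{t}$. Hence the writer's most recent write preceding the read is either $\writeij{k}{l}{}{y}$ or some later write of the correct writer. Since $j<k$ and the writer increments $\mathit{seq}$ on every write, that write carries a sequence number $\mathit{seq}^*\ge \mathit{seq}_y>\mathit{seq}_x$.

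Next I apply Lemma~\ref{lem:array-latest}. At the start of the read phase, the array holds the writer's entry with sequence number $\mathit{seq}^*$. This holds because at most $p+1$ insertions per phase receive consecutive timestamps and so occupy distinct slots of $A[0\ldots p]$. Moreover, that entry verifies, so the reader's largest-$\mathit{seq}$ selection returns the value bound to $\mathit{seq}^*$, and in particular $z\neq\bot$. By the uniqueness of written values, this returned value differs from $x$, and therefore $z\neq x$.

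The main obstacle is replay. A Byzantine reader may re-insert the old writer-signed triple $\langle x,\mathit{seq}_x,\mathit{sig}_x\rangle$, or any other older triple, into the array, possibly overwriting slots. I would handle this in two steps. First, signature binding means the replayed entry keeps $\mathit{seq}_x<\mathit{seq}^*$, so it can never beat the fresh entry in the selection. Second, replay cannot evict the fresh entry: at most $p$ insertions follow the writer's latest one before the read phase, so under the timestamp-mod-$(p+1)$ placement the writer's latest entry still survives. This second point is exactly the slot argument already made in Lemma~\ref{lem:array-latest}, so I only need to cite it and note that it holds against the general adversary.
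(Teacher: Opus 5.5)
Your proof is correct and follows essentially the same route as the paper's: the array Latest Value lemma reduces the read to the writer's most recent write before the read phase, whose sequence number is at least that of $y$ and hence exceeds that of $x$, and a replayed entry for $x$ keeps its old sequence number and is never selected. Your explicit appeal to the uniqueness of written values, to pass from distinct sequence numbers to distinct values, and your restatement of the slot-survival argument are small additions that make the paper's terser argument slightly more complete.
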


\begin{proof}
Since $l < p$, the write $\writeij{k}{l}{}{y}$ completes before the read's
phase. By Lemma~\ref{lem:array-latest}, the read returns the value of the
writer's most recent write preceding its read phase, which is therefore
$\writeij{k}{l}{}{y}$ or a later write of the writer; as the writer's
sequence numbers strictly increase, the sequence number of the returned
entry is at least that of $y$, which exceeds that of $x$. A Byzantine
process may re-insert the older pair of $x$, but that entry retains its
original sequence number and is never selected by the reader
(Algorithm~\ref{alg:ArrayObject}). Hence $z \neq x$.
\end{proof}
\begin{lemma}[No New-Old Inversion]
Let
\[
\Readij{i}{j}{t}{a}{x}
\quad\text{and}\quad
\Readij{k}{l}{t'}{b}{x}
\]
be two reads with $j < k$.
Then $b \ge a$.
\end{lemma}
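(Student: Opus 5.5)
The plan follows the template of Lemma~\ref{thm:ll-no-new-old} for the linked list, replacing linearization points with phases. The key tool is Lemma~\ref{lem:array-latest}: under the phase-based lock-step model, a correct read returns the value of the writer's most recent write that precedes its read phase. As in the statement of Definition~\ref{def:byzlinz}(c), we assume the writer is correct and writes $x_1, x_2, \ldots$ sequentially with strictly increasing sequence numbers.

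First, order the two reads. Since $j < k$, the first read completes in phase $j$ before the second is invoked in phase $k$. In this model a read executes within a single phase, so the read phase of $\Readthreeij{i}{j}{t}$ strictly precedes that of $\Readthreeij{k}{l}{t'}$.

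Next, compare the sets of writes each read can see. Let $\mathcal{W}_1$ be the set of the writer's writes that complete before the first read's phase, and $\mathcal{W}_2$ the corresponding set for the second read. Because the second read phase comes later, $\mathcal{W}_1 \subseteq \mathcal{W}_2$. By Lemma~\ref{lem:array-latest}, $x_a$ is the value of the latest write in $\mathcal{W}_1$ and $x_b$ is the value of the latest write in $\mathcal{W}_2$. The writer's writes are totally ordered by phase and by sequence number, so the latest element of the larger set is at least as late as the latest element of the smaller one. Hence $b \ge a$.

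I also need to rule out that Byzantine re-insertions or cyclic overwriting make the second read select an older entry. This is already absorbed by Lemma~\ref{lem:array-latest}, which handles replay via sequence numbers and uses the fact that the writer's newest entry survives in the $(p+1)$-slot array. The main obstacle is the boundary case in which no write precedes the first read, or in which a read returns $\bot$ or \texttt{NULL}. I would dispatch it by noting that $\bot$ can only occur before the first write completes, so index $a$ is then undefined or minimal and the inequality holds trivially.
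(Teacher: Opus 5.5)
Your proof is correct, but it takes a different route from the paper's proof of this array-based lemma. The paper argues directly on the array contents. After ordering the two reads, it splits into two cases. If no write occurs between them, both reads see the same array state, so $b=a$. If some write intervenes, the later read sees the updated state, and because inserts are ordered by the \textsc{Fetch-And-Increment} counter it ``cannot observe an older value.'' You instead reduce the claim to Lemma~\ref{lem:array-latest} plus the monotonicity of the set of writer writes preceding a read phase. That is the template the paper uses for the linked-list and wait-free circular-array versions (Lemmas~\ref{thm:ll-no-new-old} and~\ref{lem:wfarray-nnoi}).

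The paper's case split is shorter and does not invoke the Latest Value lemma. However, its second case leans on timestamp order, which by itself does not rule out a stale value: a Byzantine process can re-insert an old writer-signed pair under a fresh, larger timestamp. What actually prevents regression is the largest-$\mathit{seq}$ selection of Algorithm~\ref{alg:algo2}, together with the survival of the writer's newest entry under cyclic overwriting. Routing through Lemma~\ref{lem:array-latest} makes both of these explicit, and it also covers the $\bot$ boundary case, which the paper leaves implicit.

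As in Definition~\ref{def:byzlinz}(c), your argument needs both reads to be by correct threads and the writer to be correct, since that is what Lemma~\ref{lem:array-latest} assumes.
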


\begin{proof}
Since phases are totally ordered,
\[
\ell(\Readthreeij{i}{j}{t}) <
\ell(\Readthreeij{k}{l}{t'}).
\]

Case 1: No write occurs between the two reads.
Both reads observe the same array state.
Hence $b = a$.

Case 2: At least one write occurs between them.
The later read observes the updated array state.
Since writes are ordered by the
\textsc{Fetch-And-Increment} counter,
the later read cannot observe an older value
than the earlier read.
Hence $b \ge a$.
\end{proof}
\begin{lemma}[Intra-Phase Reader Consistency]
All correct reads that complete in the same phase
return the same value.
\end{lemma}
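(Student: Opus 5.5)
I will prove this in the same way as Lemma~\ref{thm:ll-intra-phase} and Lemma~\ref{thm:intra-phase}, adapted to the array-based object under the phase-based lock-step model of this subsection. Fix a phase $j$ and let $\Readthreeij{i}{j}{1},\ldots,\Readthreeij{i}{j}{p}$ be reads by correct threads that complete in phase $j$. In this model an operation of Algorithm~\ref{alg:ArrayObject} executes within a single phase, so each of these reads executes its entire \textsc{ReadP} loop (the scan of $A[0\ldots p]$) in phase $j$. I will take the linearization points to be those assigned in Lemma~\ref{lem:array-phase-linearizable}.

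\textbf{Step 1 (same observed state).} By the SIMT lock-step discipline (\secref{model}), threads that perform the same operation in a phase form one group and execute that operation's instructions synchronously. Different groups are serialized within the phase. Hence all correct readers in phase $j$ execute each load $A[k]$, for $k=0,\ldots,p$, at the same instruction step. Moreover, every insert of the phase lies either entirely before or entirely after the read group; it cannot fall between two readers' loads of the same slot. As in the proof of Lemma~\ref{lem:array-phase-linearizable}, there is therefore a single instant, just before or just after the write instructions, at which the whole read result of every reader equals the array state. Consequently all correct readers obtain the identical multiset of entries $\langle v,\mathit{seq},\mathit{sig},ts\rangle$. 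Byzantine readers in the same phase cannot alter the control flow of other threads or extend the phase. Any insert they issue belongs to the insert group and is therefore ordered consistently for all readers.

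\textbf{Step 2 (deterministic post-processing).} The remaining work is the same for every reader: sorting by $ts$ inside \textsc{ReadP}, then, in Algorithm~\ref{alg:algo2}, running \textsc{Verify} over $\langle v,\mathit{seq}\rangle$ with the writer's public key and selecting the valid entry with the largest sequence number (or $\bot$). Each of these is a deterministic function of the observed entries and of data common to all correct readers. Applying the same deterministic function to identical inputs yields identical outputs, so $x_1=\cdots=x_p$. This argument does not use the writer's correctness, which matches the remark after Definition~\ref{def:byzlinz}.

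\textbf{Main obstacle.} The hard part is justifying Step 1: no write can interleave between two correct readers' loads of the same slot within phase $j$. I will rely on the explicit modelling assumption of this subsection, namely that writes and reads do not overlap within a phase, together with the group serialization of the lock-step model. Without these assumptions the claim fails, as the non-linearizability discussion above shows.
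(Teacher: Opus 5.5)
Your proposal is correct and takes essentially the paper's approach: SIMT lock-step execution of the \textsc{ReadP} scan, together with this subsection's assumption that no write overlaps the reads of a phase, gives every correct reader identical array contents, and the sorting, verification and selection steps are deterministic. The paper states this in three lines by taking the read phase to contain no write at all, whereas you also allow same-phase inserts serialized before or after the read group, which is a harmless refinement.
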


\begin{proof}
Within a read phase, all threads execute the
loop in Lines~11--13 in SIMT lock-step.
No write occurs in that phase.
Hence all reads observe the identical array state
and produce identical outputs.
\end{proof}
\begin{theorem}
Under the phase-based lock-step execution model,
Algorithm~\ref{alg:ArrayObject} implements a Byzantine-linearizable
SWMR register.
\end{theorem}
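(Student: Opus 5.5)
The plan mirrors the proof of Theorem~\ref{thm:ll-byzlin}. The argument has two parts: first establish the Cohen--Keidar condition of Definition~\ref{def:iditDef}, then the four phase-aware ordering conditions of Definition~\ref{def:byzlinz}.

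\textbf{Linearizability of the object.} Lemma~\ref{lem:array-phase-linearizable} shows that, under the phase-based lock-step model, every operation of Algorithm~\ref{alg:ArrayObject} has a unique linearization point inside its phase. I would combine this with Lemma~\ref{lem:invocation-independence}. The purpose is to argue that the history of all completed \textsc{Insert} and \textsc{ReadLatest} calls stays linearizable even when Byzantine readers issue some of the inserts. The justification is that Byzantine processes reach $A$ and \textit{counter} only through these two methods. They also cannot stretch a phase or run outside it, so the separation of write and read groups used in that lemma is untouched.

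\textbf{Witness history.} Next I would build a witness $H'$ for Definition~\ref{def:iditDef}. I order the completed operations of correct processes by their linearization points, which respects real-time order because the points lie inside each operation's phase interval. The correct writer's writes keep their places. If the writer is correct, Lemma~\ref{lem:array-latest} says each correct read returns the value of the writer's most recent write before its read phase, or NULL before any write. So the sequential register history formed by the writer's writes and the correct reads is already legal, and Byzantine operations need not be added.

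The general case, including a Byzantine writer, needs more work. Here I would use the insertion trick from Theorem~\ref{thm:byz-linearizable}. Immediately before each correct read that returns $u$, I insert a Byzantine \textsc{write}$(u)$ unless the preceding operation already writes $u$. This keeps legality and real-time order, and it leaves $H'|_{\text{correct}} = H|_{\text{correct}}$ unchanged.

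\textbf{Ordering conditions.} Conditions (a)--(d) follow from the four lemmas stated just above: No Future Read, No Read from Distant-Past, No New-Old Inversion, and Intra-Phase Reader Consistency. Each is proved against the general adversary, because replayed entries keep their old sequence numbers and are never selected by the reader.

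\textbf{Main obstacle.} The delicate step is showing that the window of $p+1$ cyclic slots really holds the writer's latest entry when a read phase begins. Lemma~\ref{lem:array-latest} covers one write phase with at most $p+1$ inserts and consecutive timestamps. I would check that it still holds across several consecutive write phases with no read phase between them. In that situation older writer entries may be overwritten, but the newest one survives because it is among the last $p+1$ timestamps.

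I would also make explicit that the theorem depends on operations completing within a single phase. This restriction is what the ``No'' in the multiphase column of Table~\ref{tab:summaryofall} records. Without it, a pending \textsc{Fetch-And-Increment} could land its store into a slot after newer entries, and the window argument would break.
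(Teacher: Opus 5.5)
Your proposal is correct and follows the paper's proof, which combines Lemma~\ref{lem:array-phase-linearizable} for linearizability with the four ordering lemmas for Definition~\ref{def:byzlinz}. Your explicit witness for Definition~\ref{def:iditDef}, your check that Lemma~\ref{lem:array-latest} still holds across several consecutive insert phases, and your statement of the single-phase restriction are details the paper leaves implicit.

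One small correction to the correct-writer case. A read whose lock-step read group runs before the writer's first insert returns $\bot$, not \texttt{NULL}, because Algorithm~\ref{alg:algo2} starts from $\mathit{valid}=\{\bot\}$ and no entry verifies. So even with a correct writer those early reads need your inserted Byzantine \textsc{write}$(\bot)$; the claim that no Byzantine operations need to be added does not quite hold.
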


\begin{proof}
Lemma~\ref{lem:array-phase-linearizable}
establishes linearizability.
The four ordering lemmas establish the
conditions required by Definition~\ref{def:byzlinz}.
Hence the construction satisfies Byzantine
linearizability under the phase model.
\end{proof}

\medskip
\noindent{\textbf{Correctness of \snapqueue}}
\label{subsec:correctnessofsnapqueueobject}

We establish that the \snapqueue abstraction is
Byzantine linearizable in the GPU-inspired model.
Specifically, we prove that each implementation of
\snapqueue is linearizable with respect to the
sequential specification, and that its ordered semantics together with
phase-based lock-step execution ensure the properties
required for Byzantine linearizability.

We consider four implementations: the linked-list-based
(Algorithm~\ref{alg:LinkedList}), the wait-free circular array
(Algorithm~\ref{alg:wfarray}), the \hwq\ (Algorithm~\ref{alg:HWQueue}),
and the array-based construction (Algorithm~\ref{alg:ArrayObject}).
The first three are linearizable under concurrent execution, while the
array-based construction is linearizable if its operations do not span
multiple phases.

Since each implementation is linearizable and preserves
the required ordering guarantees under the execution
model, it follows that the SnapObject abstraction is
Byzantine linearizable. Consequently, any of these
implementations can serve as a building block for a
Byzantine-linearizable SWMR register.

%% file: WaitFreeSnapObject.tex
\subsubsection{\textbf{Memory Reclamation for the Linked-List Object}}
\label{app:reclamation}

\noindent The linked-list–based SnapObject \algoref{LinkedList} generates memory which can be reclaimed. Algorithm~\ref{alg:memoryReclaim} presents an optional garbage-collection procedure for this object. Starting from a given node, the procedure traverses the list up to the bound $\Delta = (p+1)(\delta-1)\frac{n}{p}$, where $\delta$ denotes the maximum number of scheduled phases required for a process to complete an operation on the linked-list object. If the list exceeds this bound, the $\Delta$-th node’s \texttt{next} pointer is atomically set to \texttt{NULL} using CAS (line~\ref{lin:temp}), thereby detaching the suffix, which is then reclaimed via \textsc{FreeList} \mth (which recycles the nodes). The CAS guarantees safety under concurrent insertions. Figure~\ref{fig:reclamation} illustrates this truncation. The choice of $\Delta$ is derived below. 

\begin{algorithm}[h]
\small
\caption{Memory Reclamation for Linked-List–Based Atomic Object}
\label{alg:memoryReclaim}
\begin{algorithmic}[1]
\Statex
\State \textbf{Node structure:} $\langle  \textit{val}, \textit{seq}, \textit{sig}, \textit{next}\rangle$

\State \textbf{Constant:} $\Delta = (p+1)(\delta-1)\frac{n}{p}$

\Procedure{GarbageCollection}{$curr$}
        \State $count \gets 1$
        \While{$curr \neq NULL \land count < \Delta$}
            \State $prev \gets curr$
            \State $curr \gets curr.\textit{next}$
            \State $count \gets count + 1$
        \EndWhile
        \If{$curr \neq NULL$}
            \State $oldNext \gets prev.\textit{next}$
            \label{lin:temp}
            \State $prev.\textit{next}.\textsc{CAS}(oldNext, NULL)$
            \State \Call{FreeList}{$oldNext$}
        \EndIf
\EndProcedure

\end{algorithmic}
\end{algorithm}

\vspace{-0.5em}
\noindent\textbf{Computation of the Cleanup Bound $\mathbf{\Delta}$.}
Let us assume that the scheduler guarantees a process that starts its execution in phase $i$ and requires
$\delta$ scheduled phases to complete the linked list methods.
Under unbiased round-robin scheduling, since $p$ reader processes are scheduled per phase, each individual process is scheduled once every $\frac{n}{p}$ phases.
Therefore, a process that starts in phase $i$ executes in the  phases $
i,\;
i + \frac{n}{p},\;
i + 2\frac{n}{p},\;
\ldots,\;
i + (\delta-1)\frac{n}{p}.
$

Hence, it completes in phase $i + (\delta-1)\frac{n}{p}$.

\paragraph{Processes Active in Phase $j$.}

A process that started in phase $s$ is still active in phase $j$ if $
s + (\delta-1)\frac{n}{p} \ge j$.
Solving for $s$ yields $
s \ge j - (\delta-1)\frac{n}{p}$.
Thus, any process that started at or after phase $
j - (\delta-1)\frac{n}{p}$
 may still be active in phase $j$.

\paragraph{Number of Active Start Phases.}

The number of such start phases is $j - \left(j - (\delta-1)\frac{n}{p}\right)
=
(\delta-1)\frac{n}{p}$.
\begin{figure} [h]
    \centering
    \includegraphics[scale=0.41]{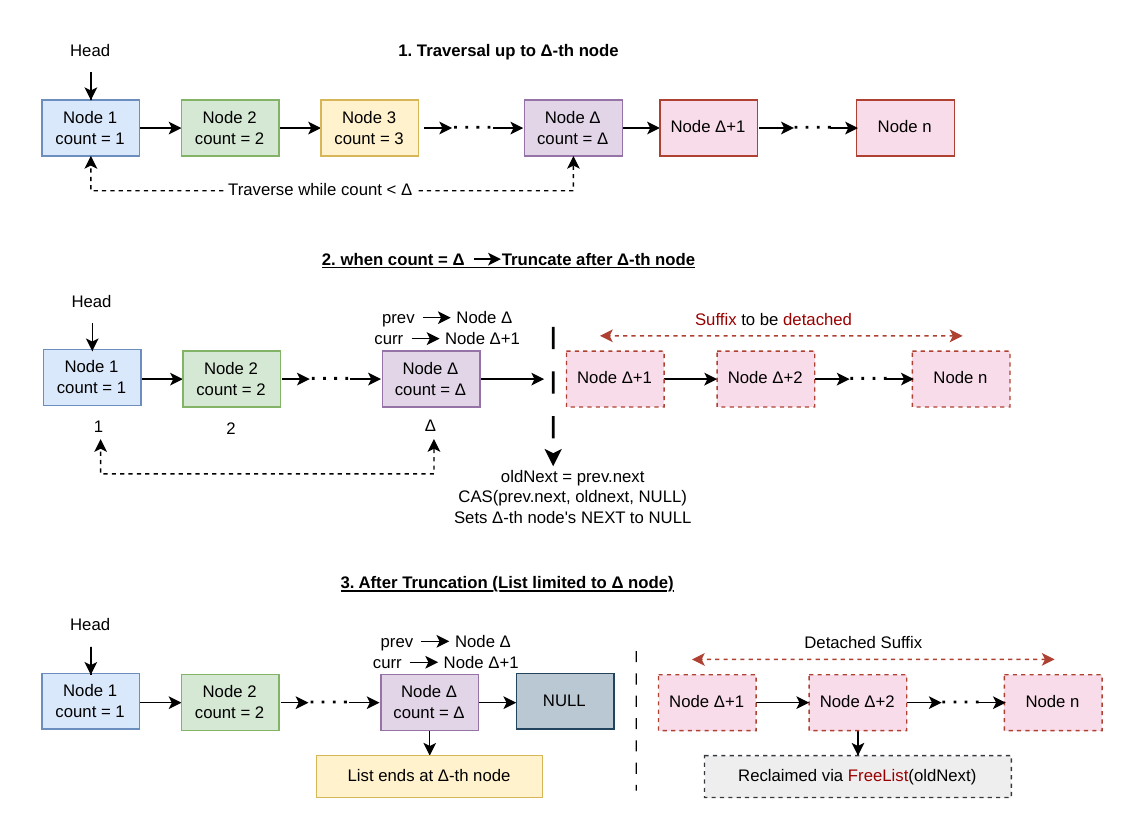}
    \vspace{-1.5em}
    \caption{Memory reclamation for the linked-list–based object with bound $(p+1)(\delta-1)\frac{n}{p}$. The list is traversed until the $\Delta$-th node, after which a CAS operation sets the $\Delta$-th node’s \texttt{next} pointer to \texttt{NULL}, detaching and reclaiming the suffix via \texttt{FreeList(oldNext)}.}

    \label{fig:reclamation}
\end{figure}
\paragraph{Worst-Case Retained Nodes.}

Since at most $p$ reader processes execute per phase, at most $p+1$ nodes may
be accessed (or written) in each phase. The writer would only insert but not read so it will not be accessing any nodes except head. Therefore, the total number of
nodes that may still be accessed by processes active in phase $j$ is $
\Delta
=
(p+1) \cdot (\delta-1)\frac{n}{p}$.

Therefore, the memory complexity when $n$ is much larger than both $p$ and $\delta$ is $O(n)
$.

Retaining $(p+1)(\delta-1)n/p$ nodes suffices to guarantee that no node that may
still be accessed by an active process is reclaimed prematurely. If we define a phase as the smallest unit of progress, namely a single memory operation (one fetch or one store instruction). Under this definition, and as shown in \secref{upperBound}, for $p = 32$ we obtain $\delta = 201$, which yields $\Delta = 207n$.

\subsubsection{\textbf{Wait-Free Circular-Array Based Atomic Object}}
\label{subsubsec:wfarray-object}
 
The linked-list realization (Algorithm~\ref{alg:LinkedList}) is lock-free but
not wait-free: its \textsc{Insert} is a CAS-retry loop, and under a register
biased toward Byzantine writes a Byzantine process may win every round of
contention, so the writer may retry without bound. The array-based realization
(Algorithm~\ref{alg:ArrayObject}) has no retry loop but is correct only when
operations complete within a single phase. We now give a realization, WF-Array,
whose \textsc{Insert} consists of two unconditional shared-memory operations
and which remains correct when operations span multiple phases.
 
\algoref{wfarray} presents the construction. The object maintains an atomic
counter $\mathit{tail}$ (a ticket dispenser) and a shared circular array
$A[0\ldots K-1]$ of atomic registers, each holding an entry
$\langle \mathit{val},\mathit{seq},\mathit{sig},\mathit{tkt}\rangle$, where
$\langle \mathit{val},\mathit{seq},\mathit{sig}\rangle$ is the signed entry
supplied by Algorithm~\ref{alg:algo2} and $\mathit{tkt}$ the ticket assigned by
the object. As with the linked-list realization, the object treats
$\mathit{seq}$ and $\mathit{sig}$ as opaque data and performs no authentication
itself; the ticket serves only to place the entry in the ring and to reject
stale entries from an earlier lap.
 
\medskip\noindent\textit{\em Insert Operation.}
$\textsc{Insert}(\langle v,\mathit{seq},\mathit{sig}\rangle)$ (i) performs a
fetch-and-increment on $\mathit{tail}$ to obtain a unique ticket $t$, and (ii)
writes $\langle v,\mathit{seq},\mathit{sig},t\rangle$ to position $t \bmod K$.
Both steps are unconditional and cannot fail; the store is the linearization
point.

\medskip\noindent\textit{\em ReadLatest Operation.}
$\textsc{\readP}()$ reads $\mathit{tail}$ once to obtain the current logical end
$T$ and scans the $W$ most recently issued tickets, from newest ($T-1$) to
oldest ($T-W$). It keeps an entry only if the slot is non-empty and its ticket
equals the scanned position ($\mathit{tkt}=T-a$), which rejects stale entries
from an earlier lap, and returns the collected entries, newest first. The
reader (Algorithm~\ref{alg:algo2}) then verifies each entry over
$\langle \mathit{val},\mathit{seq}\rangle$ and returns the valid value with the
largest sequence number; if none verifies, the read returns $\bot$.

\paragraph{\textbf{Assumptions.}}
Besides the assumptions of Section~\ref{subsec:CARegAssumption}, the analysis
uses the following.
\begin{enumerate}
  \item[(R1)] \emph{Phase granularity.} Each scheduled phase admits at least
  two shared-memory operations per process, and within a phase the insert group
  executes before the read group (Figure~\ref{fig:phase-lockstep}). Hence an
  insert invoked by a correct process completes in the phase in which it is
  invoked.
  \item[(R2)] \emph{Bounded insert rate.} A process completes at most $x$
  inserts in a phase in which it is scheduled ($x=1$ under
  Section~\ref{subsec:design}). Taking the worst case in which all $p$ readers
  scheduled in a phase are Byzantine, at most $px+1$ tickets are issued per
  phase: $px$ by readers and one by the writer.
  \item[(R3)] \emph{Fair scheduling.} The scheduler is round-robin: each reader
  is scheduled once every $\lceil n/p\rceil$ phases, and the writer participates
  in every phase.
  \item[(R4)] \emph{$\delta$-bounded stores.} The store of an insert
  (line~\ref{lin:wfstore}) is performed within $\delta \le W+1$ scheduled phases
  of its fetch-and-increment, or not at all; a store that never lands is skipped
  by readers, exactly as a crash during the insert.
\end{enumerate}
 
\noindent The window and the array size are set to
\[
  W = px+1,
  \qquad
  K = (W+1)\big\lceil n/p\big\rceil^{2}.
\]
 
\paragraph{\textbf{Wait-Freedom and Linearization.}}
 
\begin{lemma}[Wait-Freedom]
\label{lem:wfarray-wf}
Every \textsc{Insert} completes in exactly two shared-memory operations, and
every \textsc{\readP} in exactly $W+1$ shared-memory operations, regardless of
the scheduler and of the behavior of the other processes.
\end{lemma}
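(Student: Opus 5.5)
The plan is to count shared-memory operations directly from the text of Algorithm~\ref{alg:wfarray}, and to show that neither procedure contains a loop or branch whose length depends on shared state or on other processes. For \textsc{Insert}, the body consists of the \textsc{Fetch-And-Increment} on $\mathit{tail}$ (line~\ref{lin:wffai}) followed by the store to $A[t \bmod K]$ (line~\ref{lin:wfstore}). The computation of $t \bmod K$ is purely local. Both shared operations are atomic hardware primitives that, by their definitions in Section~\ref{Byz-Reg}, return without any success/failure condition. Unlike a CAS, a fetch-and-increment cannot fail, so no retry is ever triggered. This gives exactly two shared-memory operations.

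For \textsc{\readP}, I will count one atomic read of $\mathit{tail}$ (line~\ref{lin:wftail}) and then the \textbf{for} loop over $a=1,\dots,W$. Here $W=px+1$ is a fixed constant, not a quantity read from shared memory, so the iteration count is fixed before the loop starts. Each iteration performs exactly one atomic load of $A[(T-a)\bmod K]$ (line~\ref{lin:wfload}). The ticket test on line~\ref{lin:wfmatch} and the \textsc{append} operate only on local copies, so they add no shared accesses. The total is therefore $1+W$ shared-memory operations.

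Finally, I will argue independence from the environment. The counts above follow from the control flow alone: the only inputs drawn from shared state are $t$ and $T$, and these affect only which slots are addressed, never how many operations are executed. Byzantine processes can reach the object only through its interface (the non-corruptible base objects of Section~\ref{Byz-Reg}). They can change the values that are read, but they cannot block an atomic primitive or make it fail.

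The scheduler determines only \emph{when} a process takes its steps. Wait-freedom is measured in the process's own steps, so delays imposed by the scheduler do not change the count. Assumption (R4) concerns when another process's store lands, and the reader does not wait for such stores. A slot that has not yet been written is simply skipped by the ticket check, so (R4) is not needed for this lemma.

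The only delicate point is to make explicit that no instruction in either procedure waits on, or retries because of, another process. This is what separates this object from the CAS loop of Algorithm~\ref{alg:LinkedList}. I will state it by contrasting the two and noting that the operation counts are the same in every execution.
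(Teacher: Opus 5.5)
Your proposal is correct and matches the paper's proof: one fetch-and-increment plus one store for \textsc{Insert}, one read of $\mathit{tail}$ plus exactly $W$ loads for \textsc{ReadLatest}, and scheduler- and Byzantine-independence because neither procedure has a retry loop or waits on another process. Your additional observations are accurate refinements, not a different route: the ticket filter and the append act only on local copies, $W$ is a fixed constant, and (R4) plays no role in the step count.
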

\begin{proof}
\textsc{Insert} performs one fetch-and-increment (line~\ref{lin:wffai}) and one
store (line~\ref{lin:wfstore}); the fetch-and-increment is unconditional and
concurrent invocations are ordered by the register's arbitration, and the store
targets the single slot fixed by the ticket. \textsc{\readP} performs one read
of $\mathit{tail}$ (line~\ref{lin:wftail}) followed by exactly $W$ loads
(line~\ref{lin:wfload}). Neither procedure contains a retry loop, so the bounds
hold under any scheduler and any Byzantine behavior.
\end{proof}
 
\begin{lemma}[Linearization Points]
\label{lem:wfarray-lp}
Every completed \textsc{Insert} linearizes at its store
(line~\ref{lin:wfstore}), and every completed \textsc{\readP} at its read of
$\mathit{tail}$ (line~\ref{lin:wftail}). Ordering completed operations by these
instants yields a total order that respects real-time precedence.
\end{lemma}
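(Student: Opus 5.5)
Each operation is a finite sequence of atomic shared-memory steps, so the plan is to name one of those steps as the linearization point, show it lies between invocation and response, and check that the induced sequential history meets the \snapqueue\ specification. For a completed \textsc{Insert}, the store at line~\ref{lin:wfstore} is an atomic write to $A[t \bmod K]$. It occurs after the invocation, because the fetch-and-increment at line~\ref{lin:wffai} comes first, and it is the last step before the response. By Lemma~\ref{lem:wfarray-wf} the insert consists of exactly these two steps, so the store is a single well-defined instant inside the operation's interval. For a completed \textsc{\readP}, the read of $\mathit{tail}$ at line~\ref{lin:wftail} is its first shared step, an atomic read, and again lies within the interval. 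By Lemma~\ref{lem:lp-localization}, each of these instants falls in one phase between the start and completion phases.

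Next I would build the total order. Atomic events on shared memory are totally ordered in time, since phases are totally ordered and steps within a phase are serialized by lock-step group execution. Ordering completed operations by their designated instants therefore gives a total order; any ties between simultaneous lock-step reads of $\mathit{tail}$ are broken arbitrarily, which is harmless because reads do not modify state. For real-time precedence, suppose $op_1 <_H op_2$, meaning the response of $op_1$ precedes the invocation of $op_2$. Then $\ell(op_1)$ comes no later than that response and $\ell(op_2)$ comes no earlier than that invocation, so $\ell(op_1) < \ell(op_2)$. This is the same argument as in Lemma~\ref{thm:ll-linearizable}. By Lemma~\ref{lem:invocation-independence}, the argument is unaffected by which invocations are Byzantine.

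The main obstacle is consistency with the specification. The insert's linearization point is its store, yet tickets are assigned at the earlier fetch-and-increment, and a reader selects slots by the value $T$ it read from $\mathit{tail}$, not by the stores that have landed. Two problems follow. First, a store whose ticket is below $T$ may land after the reader's read of $\mathit{tail}$. Second, a ticket below $T$ may not yet be stored when it is scanned, so the reader skips it. My plan is to handle both through (R1), (R2) and (R4). By (R1), a correct insert completes in its own phase, and the insert group runs before the read group, so every correct insert with ticket below $T$ has landed before any same-phase read. By (R4), a pending Byzantine store either lands within $\delta \le W+1$ phases or is treated as a crashed insert. A late-landing Byzantine store is linearized instead at the instant it becomes visible to a scan, or it is dropped if no completed read observes it.

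The lap check $\mathit{tkt}=T-a$ together with $K=(W+1)\lceil n/p\rceil^2$ ensures that no slot in the scanned window is overwritten by a later lap before the scan finishes. This follows from (R2) and (R3): at most $(px+1)\lceil n/p\rceil$ tickets are issued during one scheduling round of a reader. Hence the entries a read returns are exactly the landed entries among the last $W$ tickets at its linearization point, and this is the \textsc{\readP} specification restricted to the window. If full equality with the specification proves too strong, I would fall back to the weaker property the register actually needs: the writer's latest entry is always in the returned set and no entry inserted after $\ell$ appears, as in Lemma~\ref{lem:readlatest-correct}.
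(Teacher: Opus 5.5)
Your first two paragraphs are the paper's proof. Each designated event is a single atomic step lying between invocation and response. Ordering by these instants therefore gives a total order with $op_1 <_H op_2 \Rightarrow \ell(op_1) < \ell(op_2)$, and Lemma~\ref{lem:invocation-independence} covers Byzantine invocations. Your explicit tie-breaking improves slightly on the paper, which tacitly treats the instants as distinct. Two small corrections apply to it. Lock-step stores of the insert group can coincide as well, not only reads of $\mathit{tail}$. And the reason an arbitrary tie-break is harmless is that operations sharing an instant are necessarily concurrent (if $op_1 <_H op_2$ then $\ell(op_1)$ strictly precedes $\ell(op_2)$), not that reads leave the state unchanged.

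Your last two paragraphs prove something the lemma does not claim. It asserts only where the points lie and that the induced order respects real time; it says nothing about legality with respect to the \snapqueue\ specification. The paper obtains the legality it needs at the register level, from Lemma~\ref{lem:wfarray-read} inside Theorem~\ref{thm:wfarray-byzlin}. The repair you sketch there is also inconsistent with the statement. Linearizing a late-landing Byzantine store ``instead'' at some point other than its store contradicts ``every completed \textsc{Insert} linearizes at its store''. Dropping an insert whose store did land removes a completed operation, which no linearization may do. Cut that material; the stated claim needs none of (R1)--(R4).
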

\begin{proof}
Each of the two events is a single atomic memory operation and therefore occurs
at a unique instant between the invocation and the response of its operation.
Ordering by these instants respects real-time order. By
Lemma~\ref{lem:invocation-independence}, this holds regardless of whether
invocations are issued by correct or Byzantine processes.
\end{proof}
 
\begin{lemma}[Ticket Order]
\label{lem:wfarray-order}
The tickets of the correct writer are strictly increasing, and its stores occur
in ticket order. Moreover, if a \textsc{\readP} reads $\mathit{tail}=T$, then
every writer ticket $\tau \le T-1$ has performed its store before that read of
$\mathit{tail}$.
\end{lemma}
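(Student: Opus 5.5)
The plan is to prove the two claims of Lemma~\ref{lem:wfarray-order} separately. Both rest on two facts: the writer is a single correct process that invokes \textsc{Write} sequentially (Algorithm~\ref{alg:algo2}), and by (R1) every insert it invokes completes within the phase in which it is invoked.

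\textbf{Monotonicity and store order.} The writer invokes \textsc{Insert} only after its previous \textsc{Insert} has returned. Its previous store therefore precedes its next fetch-and-increment in real time. Because $\mathit{tail}$ is an atomic counter that only increases, each fetch-and-increment returns a strictly larger value than any earlier one. So if writer tickets $\tau_1,\tau_2$ arise from consecutive inserts, $\tau_1<\tau_2$. The store for $\tau_1$ precedes the fetch-and-increment for $\tau_2$, which in turn precedes the store for $\tau_2$. Induction over the writer's inserts gives both strict monotonicity and stores in ticket order. This step is routine.

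\textbf{Completeness below $T$.} Let a \textsc{\readP} read $\mathit{tail}=T$ at line~\ref{lin:wftail}, and let $\tau\le T-1$ be a writer ticket. Since $\tau<T$, the fetch-and-increment that produced $\tau$ occurred before this read of $\mathit{tail}$. It remains to show that the matching store also precedes the read.
\begin{itemize}
\item[(a)] \emph{Earlier phase.} If the fetch-and-increment happened in a phase strictly before the read's phase, then by (R1) the writer's insert completed in that earlier phase, so its store precedes the read.
\item[(b)] \emph{Same phase.} If the fetch-and-increment happened in the same phase as the read of $\mathit{tail}$, then (R1) says the insert group, including its store, executes entirely before the read group. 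So again the store precedes the read.
\end{itemize}
If the read operation started in an earlier phase and was suspended, its read of $\mathit{tail}$ still occurs at some definite instant in some phase. The same case split applies to that phase, using Lemma~\ref{lem:lp-localization} to place the instant in a single phase.

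\textbf{Main obstacle.} The delicate point is case (b). I must rule out the writer's fetch-and-increment and its store straddling the reader's load of $\mathit{tail}$ within one phase. Under lock-step, divergent groups are serialized (Section~\ref{sec:model}), and (R1) fixes the insert group before the read group. The writer's two shared-memory operations thus both precede every read-group instruction of that phase; this uses the granularity clause of (R1), that a phase admits at least two operations per process. I will state this explicitly. I will also note that (R4) is needed only for Byzantine or suspended inserts, never for the correct writer, so the lemma holds regardless of Byzantine behaviour, by Lemma~\ref{lem:invocation-independence}.
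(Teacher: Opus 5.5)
Your proposal is correct and follows the paper's proof essentially step for step. Sequentiality of the writer gives strictly increasing tickets with stores in ticket order, and for $\tau \le T-1$ you split on whether the writer's fetch-and-increment lies in an earlier phase or in the read's own phase, closing both cases with (R1), just as the paper does. Your closing appeal to Lemma~\ref{lem:invocation-independence} is unnecessary: the argument never uses linearizability, and Byzantine fetch-and-increments only raise $T$.
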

\begin{proof}
The writer is sequential, so its fetch-and-increments, hence its
tickets, strictly increase, and each of its stores precedes the
fetch-and-increment of its next write. Consider a writer ticket $\tau \le T-1$.
Its fetch-and-increment precedes the read of $\mathit{tail}$, since
$\mathit{tail}$ is advanced only by fetch-and-increment and already exceeds
$\tau$ when read. If that fetch-and-increment occurs in an earlier phase, its
store completes in the same phase by (R1) and hence precedes the read. If it
occurs in the read's phase, then by (R1) the insert group precedes the read
group, so the store again precedes the read of $\mathit{tail}$.
\end{proof}
 
\begin{lemma}[Read Correctness]
\label{lem:wfarray-read}
If the writer is correct, then a read by a correct process whose
\textsc{\readP} reads $\mathit{tail}=T$ returns the value of the writer's write
with the largest ticket $\tau^{*}\le T-1$, which is the writer's most recent
write preceding the read's linearization point, and never returns $\bot$.
\end{lemma}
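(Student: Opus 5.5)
The plan is to show three things: the writer's entry with ticket $\tau^{*}$ is in the scan window, it survives the scan as a genuine entry, and it wins the largest-$\mathit{seq}$ selection. Fix a correct read whose \textsc{\readP} reads $\mathit{tail}=T$, and let $\tau^{*}$ be the largest writer ticket with $\tau^{*}\le T-1$. Such a ticket exists once the writer's first write has obtained a ticket before the read; this is the regime in which the statement is meaningful, since the writer participates in every phase.

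\emph{Step 1: $\tau^{*}$ lies in the window.} By (R2) at most $px+1$ tickets are issued per phase, and the writer obtains one in every phase. Consider the phase containing the read of $\mathit{tail}$.
\begin{itemize}
\item If the writer's fetch-and-increment in that phase precedes the read, which (R1) guarantees since the insert group runs first, then $\tau^{*}$ is that ticket.
\item Otherwise $\tau^{*}$ is the writer's ticket from the immediately preceding phase.
\end{itemize}
In either case at most $px$ tickets lie strictly between $\tau^{*}$ and $T$, so $T-\tau^{*}\le px+1=W$. Hence the scan over $a=1,\dots,W$ visits index $a=T-\tau^{*}$.

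\emph{Step 2: the entry is present and not stale.} By Lemma~\ref{lem:wfarray-order}, the store of ticket $\tau^{*}$ has landed before the read of $\mathit{tail}$. I must show that no later store to slot $\tau^{*}\bmod K$ (ticket $\tau^{*}+mK$, $m\ge 1$) overwrites it before the load at line~\ref{lin:wfload}. For this I would combine the read's own bounded duration, $W+1$ operations spread over at most $\delta\le W+1$ scheduled phases by (R3)--(R4), each $\lceil n/p\rceil$ apart, with the per-phase ticket bound $px+1$. Together these bound the number of tickets issued during the read's lifetime by roughly $(W+1)\lceil n/p\rceil\cdot W$, which is below $K=(W+1)\lceil n/p\rceil^{2}$ once $W\le\lceil n/p\rceil$ is accounted for; I may instead need to bound using the phases-per-cycle factor directly. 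So the slot still holds $\tau^{*}$'s entry and the check $\mathit{tkt}=T-a$ accepts it.

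I expect this lap-counting to be the main obstacle: the exact constant in $K$ must absorb both the read's spread over phases and stores delayed by (R4). I would carefully count phases elapsed between the read of $\mathit{tail}$ and its last load, rather than relying on the crude bound above.

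\emph{Step 3: selection.} The entry of $\tau^{*}$ carries the writer's signature and therefore verifies, so the read does not return $\bot$. Every verifying entry carries a writer-signed pair $\langle v,\mathit{seq}\rangle$, and replay cannot change $\mathit{seq}$. Because of the ticket check, every accepted entry has ticket $\le T-1$. A writer-signed pair with sequence number above that of $\tau^{*}$ would have to come from a writer write with a later ticket, hence a ticket $\ge T$ by the choice of $\tau^{*}$. Such a pair could only be present as a Byzantine replay inserted before the read, which would require the writer's later write to have been invoked already. Since the writer is sequential and its tickets increase (Lemma~\ref{lem:wfarray-order}), this cannot happen. So the maximum verified $\mathit{seq}$ is that of $\tau^{*}$, and the read returns its value.

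Finally, because stores follow ticket order and the insert linearizes at its store (Lemma~\ref{lem:wfarray-lp}), $\tau^{*}$ is the writer's most recent write preceding the read's linearization point.
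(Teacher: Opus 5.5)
Your plan follows the paper's own proof step for step: the window bound $T-W\le\tau^{*}\le T-1$ from (R1)--(R3), survival of the entry ``by the choice of $K$'', and the fact that a replay keeps its old $\mathit{seq}$, so the largest-$\mathit{seq}$ selection returns $v^{*}$. Steps 1 and 3 are sound, up to small points below. The genuine gap is Step 2, which you leave open, and more careful lap counting cannot close it. The paper covers this step only with the phrase ``by the choice of $K$''. Your own estimate shows that $K=(W+1)\lceil n/p\rceil^{2}$ is large enough when $W\le\lceil n/p\rceil$. When $\lceil n/p\rceil$ is small compared with $W$, however, the statement is actually false.

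Here is a counterexample. Take $n=p\ge 3$ and $x=1$, so $\lceil n/p\rceil=1$, $W=p+1$ and $K=p+2$. Let the other $p-1$ readers be Byzantine (so $f=n-1$), each inserting once per phase. (R1) guarantees only two shared-memory operations per scheduled phase, so the read's $W+1$ operations may span about $(p+2)/2$ phases, while every phase issues $p$ tickets. Suppose the writer's fetch-and-increment is served first in every phase. Then $T-\tau^{*}=p$, and the load at offset $a=p$ happens only in the read's $\lceil (p+1)/2\rceil$-th phase. But ticket $\tau^{*}+K\le\tau^{*}+2p-1$ is already issued and stored by a Byzantine insert in the read's second phase, so the check $\mathit{tkt}=T-a$ rejects that slot. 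No other writer ticket lies in $[T-W,T-1]$, so if the Byzantine payloads are garbage the read returns $\bot$.

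So Step 2 needs one of two things. One option is an extra hypothesis, for instance $\lceil n/p\rceil\ge W$ (exactly the condition your bound produced) or reads that finish within a bounded number of phases. The other is a larger ring, of order $W^{2}\lceil n/p\rceil$, for example with one factor $\lceil n/p\rceil$ in $K$ replaced by $W$. Either way, the count must also cover a second overwrite source that your Step 2 misses, because you consider only tickets $\tau^{*}+mK$ with $m\ge 1$. Under (R4), a Byzantine insert holding ticket $\tau^{*}-mK$ may perform its store up to $\delta$ scheduled phases after its fetch-and-increment. That store can land after the writer's store of $\tau^{*}$ and destroy the writer's entry in the same way.

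Three smaller points:
\begin{itemize}
\item The read's span is bounded via (R1), not (R4); (R4) concerns insert stores.
\item The ``otherwise'' case in Step 1 is vacuous under (R1). Its claim of at most $px$ intermediate tickets would also be false there (it could be $2px$), so drop it.
\item In Step 3, state explicitly that an insert's payload is fixed at invocation, before its fetch-and-increment. (R4) lets a Byzantine store carrying a ticket $\le T-1$ land after the read of $\mathit{tail}$. So it is the fixed payload, not the timing of the store, that keeps the writer's later pair out of the accepted window.
\end{itemize}
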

\begin{proof}
Since the writer inserts in every phase before reads execute (R1, R3), its
latest ticket $\tau^{*}$ satisfies $T-W \le \tau^{*} \le T-1$, so its entry
lies in the scanned window and, by the choice of $K$, is not yet overwritten
by a later lap. Byzantine processes cannot forge the writer's signature, and
a replayed writer-signed entry keeps its old sequence number, so no valid
entry in $S$ carries a sequence number exceeding $\mathit{seq}^{*}$. Hence
the largest-$\mathit{seq}$ selection returns $v^{*}$, the writer's most
recent value, and never $\bot$.
\end{proof}
\paragraph{\textbf{Byzantine Ordering Properties.}}
We now establish the four conditions of Definition~\ref{def:byzlinz}.
 
\begin{lemma}[No Future Read]
\label{lem:wfarray-nfr}
For every correct thread $t$ and phases $i \le j$,
$\Readij{i}{j}{t}{}{v}\neq\bot \Rightarrow \exists\,\writeij{p}{q}{}{v}$ with
$p \le j$.
\end{lemma}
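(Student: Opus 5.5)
The plan mirrors the linked-list argument (Lemma~\ref{thm:ll-no-future}), now driven by the WF-Array lemmas. Fix a correct thread $t$ whose read $\Readthreeij{i}{j}{t}$ returns $v\neq\bot$, and let $T$ be the value its \textsc{\readP} obtained from $\mathit{tail}$ at line~\ref{lin:wftail}.

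\textbf{Identify the write.} Apply Lemma~\ref{lem:wfarray-read}. Since the read returned a non-$\bot$ value, $v$ is the value of the writer's write with the largest ticket $\tau^{*}\le T-1$. Call this write $\writeij{p}{q}{}{v}$. (If the writer is not assumed correct, use value validity: only writer-signed entries pass verification. A replayed entry still originates from some writer insert, so we may take that original insert instead.)

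\textbf{Order the events.} Use the linearization points of Lemma~\ref{lem:wfarray-lp}. The read linearizes at its read of $\mathit{tail}$. By Lemma~\ref{lem:wfarray-order}, the store of ticket $\tau^{*}$ precedes that read of $\mathit{tail}$, because $\tau^{*}\le T-1$. The write's linearization point is this store, so
\[
\ell(\writetwoij{p}{q}) < \ell(\Readthreeij{i}{j}{t}).
\]
In fact a weaker fact suffices: the fetch-and-increment that issued $\tau^{*}$ precedes the read of $\mathit{tail}$, and that fetch-and-increment already lies inside the write's interval.

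\textbf{Localize to phases.} By Lemma~\ref{lem:lp-localization}, the write's invocation is no later than any of its steps, so $p\le \ell(\writetwoij{p}{q})$. Likewise $\ell(\Readthreeij{i}{j}{t})\le j$. Chaining the three inequalities gives $p\le j$.

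\textbf{Main obstacle.} The delicate step is the middle one when the write spans several phases, or when its fetch-and-increment falls in the read's own phase. Here I would rely only on the invocation preceding the fetch-and-increment, together with (R1), and I would not need the store to have landed. This keeps the conclusion $p\le j$ valid even if the store is delayed under (R4).
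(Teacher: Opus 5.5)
Your proposal is correct and follows essentially the paper's proof. You identify the write via Lemma~\ref{lem:wfarray-read}, obtain $\ell(\writetwoij{p}{q}) < \ell(\Readthreeij{i}{j}{t})$, and chain this with Lemma~\ref{lem:lp-localization}; your appeal to Lemma~\ref{lem:wfarray-order}, or simply to the fetch-and-increment preceding the read of $\mathit{tail}$, only spells out an ordering the paper reads directly off Lemma~\ref{lem:wfarray-read}.

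The one loose point is your parenthetical on a non-correct writer. A Byzantine writer may share its key, so a verifying entry need not come from any writer insert; you would instead take the insert that obtained the loaded ticket $T-a \le T-1$. That case lies outside both your main argument and the paper's, since both rely on a correct writer.
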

\begin{proof}
Since the read is executed by a correct thread and returns $v \neq \bot$,
Lemma~\ref{lem:wfarray-read} implies that $v$ is the value of the writer's most
recent write whose linearization point precedes that of the read; denote this
write by $\writeij{p}{q}{}{v}$. Thus
\[
\ell(\writeij{p}{q}{}{v}) < \ell(\Readij{i}{j}{t}{}{v}).
\]
By Lemma~\ref{lem:lp-localization}, every operation linearizes between its
invocation and its response, so $p \le \ell(\writetwoij{p}{q})$ and
$\ell(\Readthreeij{i}{j}{t}) \le j$. Combining the three inequalities yields
$p \le j$.
\end{proof}
 
\begin{lemma}[No Read from Distant-Past]
\label{lem:wfarray-nrdp}
Let $\writeij{i}{j}{u}{x}$ and $\writeij{k}{l}{v}{x}$ be writes of the correct
writer with $\ell(\writeij{i}{j}{u}{x})<\ell(\writeij{k}{l}{v}{x})$. For any
correct read $\Readthreeij{p}{q}{t}$ with
$\ell(\writeij{k}{l}{v}{x})<\ell(\Readthreeij{p}{q}{t})$, the returned value is
not $x_u$.
\end{lemma}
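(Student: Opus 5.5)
The proof will follow the linked-list version (Lemma~\ref{thm:ll-no-distant}) closely, with Lemma~\ref{lem:wfarray-read} taking the place of Lemma~\ref{lem:ll-latest}. The idea is to show that the read returns a writer value whose sequence number is strictly larger than $\mathit{seq}_u$. Since the writer binds a distinct sequence number to each value it signs, that value cannot be $x_u$.

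\emph{Step 1.} Let $T$ be the value of $\mathit{tail}$ read at $\ell(\Readthreeij{p}{q}{t})$. By Lemma~\ref{lem:wfarray-read}, the read returns the value of the writer's write with the largest ticket $\tau^{*}\le T-1$, and this write is the writer's most recent write preceding the read's linearization point. Writes linearize at their stores (Lemma~\ref{lem:wfarray-lp}), and $\ell(\writeij{k}{l}{v}{x})<\ell(\Readthreeij{p}{q}{t})$. So the write of $x_v$ precedes the read. Hence the selected write $W^{*}$ is either $\writeij{k}{l}{v}{x}$ or a later write of the writer.

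\emph{Step 2.} By Lemma~\ref{lem:wfarray-order}, the writer's tickets and stores are strictly increasing. Its local counter $\mathit{seq}$ is incremented on every \textsc{Write} (Algorithm~\ref{alg:algo2}). Therefore $\mathit{seq}^{*}\ge \mathit{seq}_v>\mathit{seq}_u$, where the last inequality uses $\ell(\writeij{i}{j}{u}{x})<\ell(\writeij{k}{l}{v}{x})$.

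\emph{Step 3.} Rule out replay. A Byzantine process may re-insert the writer-signed triple $\langle x_u,\mathit{seq}_u,sig_u\rangle$, and this entry can fall in the scanned window with a fresh ticket. However, the signature covers $\langle v,\mathit{seq}\rangle$, so any verifying entry carrying $x_u$ also carries $\mathit{seq}_u$. Because $W^{*}$'s entry lies in the window and verifies (Lemma~\ref{lem:wfarray-read}), the largest-$\mathit{seq}$ selection never picks an entry with $\mathit{seq}_u<\mathit{seq}^{*}$. Together with uniqueness of written values, this gives a returned value different from $x_u$.

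The main subtlety is Step 1: the step from ``the write of $x_v$ linearizes before the read'' to ``its ticket is at most $T-1$ and it is visible in the window''. This needs the store-after-FAI ordering, namely that the ticket was drawn before the store, which in turn precedes the read of $\mathit{tail}$. It also needs the window/lap argument that Lemma~\ref{lem:wfarray-read} already packages. I would simply invoke that lemma rather than redo the (R1)--(R4) reasoning.
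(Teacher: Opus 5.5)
Your proof is correct and follows essentially the paper's route. The paper also derives $\tau_v \le T-1$ from the fact that the fetch-and-increment issuing $\tau_v$ precedes its store, which precedes the read of $\mathit{tail}$. It then applies Lemma~\ref{lem:wfarray-read} to conclude that the returned writer value has ticket at least $\tau_v > \tau_u$, so by distinctness of the writer's values it is not $x_u$. The differences are cosmetic: you compare sequence numbers instead of tickets, and you add a replay step that Lemma~\ref{lem:wfarray-read} already covers.
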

\begin{proof}
Let $\tau_u,\tau_v$ be the tickets of the two writes; by
Lemma~\ref{lem:wfarray-order}, $\tau_u<\tau_v$. The store of $x_v$ precedes the
read's linearization point, at which $\mathit{tail}=T$; since the
fetch-and-increment issuing $\tau_v$ precedes that store, $\tau_v \le T-1$. By
Lemma~\ref{lem:wfarray-read} the read returns the value of the largest
writer ticket at most $T-1$, which is $\ge \tau_v > \tau_u$. As the writer's
values are distinct, the returned value is not $x_u$.
\end{proof}
 
\begin{lemma}[No New--Old Inversion]
\label{lem:wfarray-nnoi}
Let $\Readij{i}{j}{t}{a}{x}$ and $\Readij{k}{l}{t'}{b}{x}$ be reads by correct
threads with $j<k$. Then $b \ge a$.
\end{lemma}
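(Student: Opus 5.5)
We prove the statement for the WF-Array realization (Algorithm~\ref{alg:wfarray}) in the same way as Lemma~\ref{thm:ll-no-new-old}, by reducing it to Lemma~\ref{lem:wfarray-read}. Let $T_1$ and $T_2$ be the values of $\mathit{tail}$ read at line~\ref{lin:wftail} by $\Readthreeij{i}{j}{t}$ and $\Readthreeij{k}{l}{t'}$, respectively. These reads are the linearization points of the two operations (Lemma~\ref{lem:wfarray-lp}).

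By Lemma~\ref{lem:lp-localization}, the first linearization point lies in a phase at most $j$ and the second in a phase at least $k$. Since $j<k$, the first read of $\mathit{tail}$ strictly precedes the second. The variable $\mathit{tail}$ is modified only by fetch-and-increment, so it is monotonically non-decreasing, and therefore $T_1 \le T_2$.

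Next we apply Lemma~\ref{lem:wfarray-read} to each read. The first read returns $x_a$, the value of the writer's write with the largest ticket $\tau_1^{*}\le T_1-1$. The second read returns $x_b$, the value of the writer's write with the largest ticket $\tau_2^{*}\le T_2-1$. Every writer ticket at most $T_1-1$ is also at most $T_2-1$, so $\tau_2^{*}\ge\tau_1^{*}$.

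By Lemma~\ref{lem:wfarray-order}, the writer's tickets strictly increase with its sequential writes $x_1,x_2,\ldots$, so ticket order coincides with write index order. Together with the distinctness of written values, $\tau_2^{*}\ge\tau_1^{*}$ therefore gives $b\ge a$.

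The main point to check is that the returned value really is determined by the writer's ticket, and not by some entry Byzantine processes manage to get selected. Lemma~\ref{lem:wfarray-read} already covers this: a replayed entry keeps its old sequence number, and the selection picks the largest sequence number. The remaining thing to confirm is that sequence numbers and tickets of the writer are both monotone in write order, so that "largest $\mathit{seq}$" and "largest ticket" identify the same write. This holds because the writer increments $\mathit{seq}$ before each insert and inserts sequentially.

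Both reads return non-$\bot$ values by Lemma~\ref{lem:wfarray-read}, assuming the first write has completed, so no separate $\bot$ case is needed.
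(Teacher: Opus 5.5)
Your proof is correct and follows essentially the same route as the paper: monotonicity of $\mathit{tail}$ gives $T_1 \le T_2$, Lemma~\ref{lem:wfarray-read} turns this into an inequality between the writer tickets selected by the two reads, and the writer's ticket order yields $b \ge a$. The differences are cosmetic. You order the two linearization points via Lemma~\ref{lem:lp-localization} rather than directly from real-time precedence, and you spell out the agreement between sequence numbers and tickets, and the $\bot$ case, which the paper leaves implicit.
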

\begin{proof}
Since $j<k$, the first read completes before the second is invoked, so
$\ell(\Readthreeij{i}{j}{t})<\ell(\Readthreeij{k}{l}{t'})$. Let $T_1,T_2$ be the
values of $\mathit{tail}$ they read; as $\mathit{tail}$ is advanced only by
fetch-and-increment, it is non-decreasing, so $T_1 \le T_2$. By
Lemma~\ref{lem:wfarray-read} the reads return the writer values with the
largest tickets $\tau_a \le T_1-1$ and $\tau_b \le T_2-1$; hence
$\tau_b \ge \tau_a$, and since the writer writes in ticket order, $b \ge a$.
\end{proof}
 
\begin{lemma}[Intra-Phase Reader Consistency]
\label{lem:wfarray-ipc}
Let $j$ be any phase and let $\Readthreeij{i}{j}{1},\ldots,\Readthreeij{i}{j}{p}$
be reads by correct threads that all complete in phase $j$. Then all return the
same value.
\end{lemma}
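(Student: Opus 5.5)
The plan mirrors the intra-phase arguments for the linked list (Lemma~\ref{thm:ll-intra-phase}) and for CAReg (Lemma~\ref{thm:intra-phase}), adapted to the ticket-based \textsc{\readP} of Algorithm~\ref{alg:wfarray}.

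\emph{Step 1: locate the linearization points.} Fix a phase $j$ and correct reads $\Readthreeij{i}{j}{1},\ldots,\Readthreeij{i}{j}{p}$ completing in $j$. By Lemma~\ref{lem:wfarray-lp}, each read linearizes at its single read of $\mathit{tail}$ (line~\ref{lin:wftail}). By Lemma~\ref{lem:lp-localization}, that point lies in some phase between $i$ and $j$. Since every \textsc{\readP} takes exactly $W+1$ shared-memory operations (Lemma~\ref{lem:wfarray-wf}), I would use the SIMT lock-step discipline of the model (Observation~\ref{obs:intraphase}) together with (R1). The aim is to argue that correct threads completing the same read in phase $j$ are in the same read group and execute line~\ref{lin:wftail} as one lock-step instruction, and likewise each load of line~\ref{lin:wfload}.

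\emph{Step 2: same $T$ and same window contents.} By Step 1, all these reads obtain the same value $T$ of $\mathit{tail}$. They then scan the same indices $(T-a)\bmod K$ for $a=1,\ldots,W$, each load being executed in lock-step. By (R1), the insert group of the phase runs entirely before or after the read group, never interleaved with it, so no store changes a slot between two correct readers' loads of it. Hence every correct reader sees the same slot contents, applies the same ticket filter $\mathit{tkt}=T-a$, and builds the same sequence $S$.

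\emph{Step 3: deterministic post-processing.} The reader code of Algorithm~\ref{alg:algo2} applies \textsc{Verify} to each entry and then \textsc{Latest}, selecting the largest sequence number. Both are deterministic functions of $S$ and of the writer's public key, which is common to all readers. So all correct reads return the same value, $x_1=\cdots=x_p$. No assumption about the writer's correctness is needed: a Byzantine writer can influence $S$, but not make it differ across readers.

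\emph{Main obstacle.} The hard step is Step 1 when a read spans phases: a read completing in $j$ might have read $\mathit{tail}$ in an earlier phase than another read completing in $j$. I would first handle this by arguing from (R1) and the fixed length $W+1$ of \textsc{\readP} that all correct reads completing in phase $j$ in the same warp were invoked together and advanced in lock-step. They are scheduled identically because they belong to the same warp, so their reads of $\mathit{tail}$ coincide. If that argument is not airtight, the fallback is to interpret Definition~\ref{def:byzlinz}(d), as its formula with common superscript $i,j$ suggests, as quantifying over reads sharing both start and end phase, which reduces directly to the lock-step case.
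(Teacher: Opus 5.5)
Your proposal is correct and follows essentially the same argument as the paper. Because exactly one warp is scheduled per phase, all correct reads completing in phase $j$ belong to the same warp. SIMT lock-step then makes them read $\mathit{tail}$ at the same instant, load the same slots and see the same contents, and the ticket filter, \textsc{Verify}, and the largest-$\mathit{seq}$ selection are all deterministic. The paper disposes of your ``main obstacle'' with this same-warp lock-step observation alone, and the statement's common superscript $i$ already fixes a shared start phase, as your fallback notes; so your appeals to (R1) and to the linearization-point lemmas are harmless but not needed.
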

\begin{proof}
Exactly one warp is scheduled per phase, so all reads completing in phase $j$
belong to the same warp. By SIMT lock-step execution (Section~\ref{sec:model}),
these threads read $\mathit{tail}$ at the same instant and obtain the same $T$,
and at each iteration load the same position $(T-a)\bmod K$ and observe the same
contents; the filter of line~\ref{lin:wfmatch}, the signature verification over
$\langle \mathit{val},\mathit{seq}\rangle$, and the selection of the valid value
with the largest sequence number (Algorithm~\ref{alg:algo2}) are deterministic
functions of these values. Hence all such reads return the same value.
\end{proof}
\paragraph{\textbf{Byzantine Linearizability and Resilience.}}
 
\begin{theorem}
\label{thm:wfarray-byzlin}
\algoref{wfarray} implements a Byzantine-linearizable SWMR register.
\end{theorem}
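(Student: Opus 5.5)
The plan mirrors the proof of Theorem~\ref{thm:ll-byzlin} for the linked-list realization, replacing each ingredient with its WF-Array counterpart. First, I establish object-level linearizability of \algoref{wfarray}. Lemma~\ref{lem:wfarray-lp} already fixes the linearization points: the store for \textsc{Insert} and the read of $\mathit{tail}$ for \textsc{\readP}. Ordering by these points gives a total order that respects real-time precedence. By Lemma~\ref{lem:invocation-independence}, this remains true when Byzantine processes issue invocations. An insert whose store never lands is treated like a crashed invocation, as in (R4), and is excluded from $\mathit{complete}(H)$.

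Second, I lift this to the register level and build the witness required by Definition~\ref{def:iditDef}. I take the correct writer's writes in the order of their stores. Each correct read is placed at its read of $\mathit{tail}$. Lemma~\ref{lem:wfarray-read} then says that each such read returns exactly the value of the writer's most recent write linearized before it, and never $\bot$ once a write exists. Before the first write, the read returns the initial value. The sequential history formed by the correct processes' operations alone is therefore already legal. So $H'$ can be taken to be $H|_{\text{correct}}$ ordered by linearization points, with Byzantine operations simply dropped. Because Byzantine inserts cannot change a correct read's return value, no fictitious Byzantine writes need to be inserted. If I want robustness, I can fall back on the value-validity witness from Theorem~\ref{thm:byz-linearizable}, inserting a Byzantine $\textsc{write}(u)$ before each read. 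This is legitimate because only writer-signed values pass verification.

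Third, I discharge the four ordering conditions of Definition~\ref{def:byzlinz} by citing Lemmas~\ref{lem:wfarray-nfr}, \ref{lem:wfarray-nrdp}, \ref{lem:wfarray-nnoi}, and~\ref{lem:wfarray-ipc}. One point needs checking: condition (b) is stated with phase indices $i<j<k<l$, while Lemma~\ref{lem:wfarray-nrdp} uses linearization points. I bridge this with Lemma~\ref{lem:lp-localization}. A write completing in phase $k$ linearizes no later than $k$, and a read starting in phase $l>k$ linearizes no earlier than $l$. The precedence of linearization points therefore follows. Condition (d) holds whether or not the writer is correct, since Lemma~\ref{lem:wfarray-ipc} relies only on lock-step execution.

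I expect the main obstacle to be the reliance of Lemma~\ref{lem:wfarray-read} on the window and lap argument. It requires two facts: that the writer's latest ticket lies in $[T-W,T-1]$, and that its slot has not been overwritten by a later lap when the reader loads it. Overwriting could happen if a read spans many phases while Byzantine tickets advance. Here I would invoke (R3) and (R4) together with $K=(W+1)\lceil n/p\rceil^2$. A read lasts at most $\delta\le W+1$ scheduled phases, each separated by $\lceil n/p\rceil$ phases. At most $W$ tickets are issued per phase. Hence fewer than $K$ tickets are issued during the read, and no slot in the scanned window is recycled. Since that lemma is already stated earlier, the theorem's proof itself just assembles these pieces. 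The case of a Byzantine writer is covered only by conditions (a) through (d) as noted in the Remark after Definition~\ref{def:byzlinz}.
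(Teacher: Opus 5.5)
Your proposal is correct and is essentially the paper's proof: a real-time-respecting total order from Lemma~\ref{lem:wfarray-lp}, preserved under Byzantine invocations by Lemma~\ref{lem:invocation-independence}; legality of the induced register history from Lemma~\ref{lem:wfarray-read}; and the four conditions of Definition~\ref{def:byzlinz} from Lemmas~\ref{lem:wfarray-nfr}--\ref{lem:wfarray-ipc}, with your phase-to-linearization-point bridge for condition (b) a harmless refinement. Do not rely on your side sketch of the lap argument, because it does not close: (R4) bounds how long an insert's store may lag its fetch-and-increment, not how long a read lasts, and $(W+1)\lceil n/p\rceil$ phases at up to $W$ tickets each give $W(W+1)\lceil n/p\rceil$ tickets, which is below $K=(W+1)\lceil n/p\rceil^{2}$ only if $W<\lceil n/p\rceil$ (false whenever $n\le p^{2}$), so simply cite Lemma~\ref{lem:wfarray-read} as the paper does.
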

\begin{proof}
By Lemma~\ref{lem:wfarray-lp} the completed operations admit a total order
consistent with real-time precedence, and by Lemma~\ref{lem:wfarray-read} each
correct read returns the value of the writer's most recent write preceding its
linearization point; hence the induced sequential history is legal, and the
history observed by the correct processes extends to a legal linearizable
history, as required by Definition~\ref{def:iditDef}. By
Lemma~\ref{lem:invocation-independence}, this holds under Byzantine
invocations. Lemmas~\ref{lem:wfarray-nfr}--\ref{lem:wfarray-ipc} establish the
four ordering conditions of Definition~\ref{def:byzlinz} against the general
adversary. Hence, the construction is Byzantine-linearizable.
\end{proof}
 
\begin{theorem}
\label{thm:wfarray-resilience}
If the writer is correct and $C \neq \emptyset$, then the WF-Array construction
is Byzantine-linearizable for all $f \le n-1$, and for every correct reader $r$,
$P(r\text{ returns }\bot)=0$.
\end{theorem}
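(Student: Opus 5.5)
The theorem has two parts: Byzantine linearizability for every $f \le n-1$, and $P(r \text{ returns } \bot)=0$ for every correct reader $r$. Both should follow from results already proved for Algorithm~\ref{alg:wfarray}. The real content is checking that none of those results silently depended on a bound on $f$ or on a probabilistic event.

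\textbf{Resilience in $f$.} I would start from Theorem~\ref{thm:wfarray-byzlin}, which gives Byzantine linearizability with no hypothesis on $f$. Its ingredients are Lemmas~\ref{lem:wfarray-wf}, \ref{lem:wfarray-lp}, \ref{lem:wfarray-order} and~\ref{lem:wfarray-read}, and they use only the following facts:
\begin{enumerate}
\item the writer is correct and participates in every phase;
\item by (R2), at most $px+1$ tickets are issued per phase;
\item by (R1), the insert group precedes the read group within a phase.
\end{enumerate}
Fact 2 was derived assuming the worst case, namely that all $p$ readers scheduled in a phase are Byzantine. That worst case is available once $f \ge p$, so allowing $f$ up to $n-1$ adds no adversarial capability beyond what (R2) already covers. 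In particular, the window $W=px+1$ still contains the writer's latest ticket $\tau^{*}$. Also, $K=(W+1)\lceil n/p\rceil^2$ does not depend on $f$, so the no-overwrite argument of Lemma~\ref{lem:wfarray-read} is unchanged. The hypothesis $C\neq\emptyset$ serves only to make the statement non-vacuous: Definitions~\ref{def:iditDef} and~\ref{def:byzlinz} quantify over correct processes.

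\textbf{$P(\bot)=0$.} Fix any execution and any correct reader $r$ whose read happens after the writer's first completed write. Lemma~\ref{lem:wfarray-read} says the read returns $v^{*}$, the value of the writer's write with largest ticket $\tau^{*}\le T-1$, and never $\bot$. The key point is that this lemma is \emph{deterministic}. It holds for every interleaving the scheduler and the Byzantine processes can produce, including every arbitration of concurrent fetch-and-increments. So the event ``$r$ returns $\bot$'' is empty over the space of admissible executions, and its probability is $0$ under any distribution.

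This contrasts with CAReg's $\frac{\beta}{\beta+1}$ per Byzantine phase. There, Byzantine stores can erase the single signed value. Here a Byzantine store can only fill its own ticket's slot, and the ticket filter at line~\ref{lin:wfmatch} discards stale laps. Replays are also harmless: a replayed entry keeps its old $\mathit{seq}$ and so loses the largest-$\mathit{seq}$ selection, although it is still valid.

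\textbf{Main obstacle.} I expect the hardest part to be the boundary cases of the window argument. The first is a read occurring before any writer insert. This case must be excluded or mapped to the initial value \texttt{NULL}, since the register has not been written yet; I would state the claim for reads after the first write, as Lemma~\ref{lem:ll-latest} does.

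The second is a Byzantine insert that takes its ticket but never stores, or stores late, as permitted by (R4). Such a ticket uses up a window position without filling it. I would first check that the count in (R2) already includes these non-landing tickets, since they are issued in the phase where their fetch-and-increment occurs. Then I would check that (R4)'s $\delta\le W+1$ together with $K$ rules out a delayed Byzantine store from an earlier lap overwriting the slot of $\tau^{*}$. If that fails, the fallback is the observation that the writer's store for $\tau^{*}$ lands in the reading phase itself by (R1). Any later overwrite of that slot would need ticket $\tau^{*}+K$, which cannot have been issued before $\mathit{tail}$ reaches $\tau^{*}+K > T$.
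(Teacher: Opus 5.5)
Your main argument matches the paper's proof. The claim $P(r\text{ returns }\bot)=0$ is read off from Lemma~\ref{lem:wfarray-read}, which holds in every admissible execution. The claim $f\le n-1$ holds because Byzantine processes affect the analysis only through the per-phase ticket bound $px+1$ of (R2). That bound is already the worst case and is absorbed by $W$ and $K$, neither of which depends on $f$, so one correct reader suffices.

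Two corrections to your final paragraph.

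First, the read-before-any-write case is vacuous. By (R1) and (R3) the writer inserts in every phase and its insert precedes the read group, so every read of $\mathit{tail}$ already follows a writer store. You should not restrict the statement to reads after the first write, since that would prove less than the theorem claims.

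Second, your fallback does not cover the threat it is meant to back up. It only shows that slot $\tau^{*}\bmod K$ has not been overwritten by a \emph{later} lap ($\tau^{*}+K$) at the instant $\mathit{tail}$ is read. The threat you named is different: a delayed (R4) store of an \emph{earlier}-lap ticket $\tau^{*}-mK$ lands after the writer's store. The filter at line~\ref{lin:wfmatch} then rejects that entry, possibly leaving no valid entry in the window. In addition, the $W$ loads of \textsc{ReadLatest} may occur in later scheduled phases than its read of $\mathit{tail}$, while $\mathit{tail}$ keeps advancing.

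Both cases can only be excluded by comparing $K$ with the number of tickets issued during an (R4) delay and during a read. That comparison is the real content of your obstacle. The paper's proof, like your main line, delegates it to Lemma~\ref{lem:wfarray-read} (``by the choice of $K$''). So drop the fallback and cite the lemma.
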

\begin{proof}
The second claim is Lemma~\ref{lem:wfarray-read}. For the first, no step of
the analysis bounds the number of Byzantine readers: Byzantine interference
enters only through the number of tickets issued per phase, at most $px+1$, and
this is absorbed by the choice of $W$ and $K$, including the worst case in which
all $p$ readers of a phase are Byzantine. Correctness of a read depends only on
the writer being correct (Lemma~\ref{lem:wfarray-read}), so a single correct
reader suffices, giving $f \le n-1$.
\end{proof}

%% file: queueanalysis.tex
\label{subsubsec:resilienceofsnapqueue}
Let $W$ denote the writer and
$R = \{r_1,\dots,r_n\}$ the set of readers.
Let $F \subseteq R$ be the set of Byzantine readers with $|F| = f$,
and let $C = R \setminus F$ denote the set of correct readers.

Assume $W$ is correct and participates in every phase.

\medskip
\noindent
{\textbf{Probability of Returning $\bot$.}}

Let $P(\bot)$ denote the probability that a correct read returns $\bot$.

\begin{theorem}
If $W$ is correct, then for every execution and every $r_i \in C$, $P(r_i \text{ returns } \bot) = 0$.
\end{theorem}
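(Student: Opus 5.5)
The plan is to reduce the statement to the per-realization ``Latest Value'' results already established, so that $P(r_i \text{ returns } \bot)=0$ follows deterministically rather than from any probabilistic calculation. Since the statement quantifies over every execution, I will show that in every execution in which $W$ is correct, no correct read that linearizes after the writer's first completed write returns $\bot$. The probability is then $0$ under any distribution over schedules, including biased schedulers, because the event is empty. I will adopt the convention of the paper that reads linearizing before the first write return the initial value \texttt{NULL}, which is distinct from $\bot$.

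The first step is a counting argument that is common to all realizations. Fix a correct read by $r_i$ and let $W^{*}$ be the writer's most recent write preceding the read's linearization point. Because the writer participates in every phase (Section~\ref{subsec:CARegAssumption}) and each process inserts at most once per phase (Section~\ref{subsec:design}), at most $p$ entries, or $px$ under (R2), can be inserted after the entry of $W^{*}$ and before the read's linearization point. This bound holds regardless of how many of those inserters are Byzantine. The window of $p+1$ (respectively $W=px+1$) entries returned by \textsc{ReadLatest} therefore contains the entry of $W^{*}$.

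The second step is authentication. The entry of $W^{*}$ is signed by the correct writer over $\langle v^{*},\mathit{seq}^{*}\rangle$, so it passes \textsc{Verify} in Algorithm~\ref{alg:algo2}. Hence $valid$ contains an element other than $\bot$, and \textsc{Latest} returns a non-$\bot$ value. Here I will simply invoke Lemma~\ref{lem:ll-latest}, Lemma~\ref{lem:wfarray-read}, Lemma~\ref{lem:array-latest}, and the corresponding statement for Algorithm~\ref{alg:HWQueue} in the paragraph ``Byzantine Ordering Properties''. Each of these already states that once the first write has completed, no correct read returns $\bot$.

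The main obstacle is justifying that the writer's entry lies inside the window when operations span several phases. Two things can go wrong. First, in the circular array, a later lap could overwrite the entry; this is handled by the choice of $K$ and by (R4). Second, in the Herlihy--Wing variant, a store may still be pending; this is handled by the $\Delta$ padding. I will handle this obstacle by deferring to the realization-specific lemmas above rather than re-deriving the bound. The only thing I will verify is that none of those lemmas uses a bound on $f$, so the conclusion holds for every $r_i\in C$ and every $f\le n-1$.
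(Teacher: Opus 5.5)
Your argument is correct and is essentially the paper's own proof. The correct writer contributes a validly signed entry in every phase, and at most $p$ (resp.\ $px$) further insertions can follow it before the read. Hence the window returned by \textsc{ReadLatest} contains an entry that verifies, and \textsc{Latest} cannot return $\bot$. Routing the window-containment step through the per-realization ``latest value'' lemmas is just a more careful version of the paper's abstract-level count.

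One caveat: drop the aside that this covers biased schedulers. Those lemmas rest on the fair-scheduling assumptions of Section~\ref{sec:queue}, namely the writer completing an insert in every phase, (R3), and the choice of $K$, and the paper itself notes that a biased scheduler can delay the linked-list writer indefinitely.
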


\begin{proof}
In every phase, the writer invokes
$\textsc{Insert}(\langle v,\mathit{seq},\mathit{sig}\rangle)$, where
$\mathit{sig}$ verifies over the pair $\langle v,\mathit{seq}\rangle$ under
$W$'s public key.

At most $p+1$ insert operations occur per phase, given that $p$ reader
processes and the writer are present per phase and each process can call
insert only once in a phase duration.
The operation $\textsc{ReadLatest}$ returns the $p+1$ most recent inserted
entries (a larger window in the \wf\ realizations).
Since the writer inserts a validly signed entry in every phase,
the returned set always contains at least one entry
that verifies under $W$'s public key.

By unforgeability of signatures,
no Byzantine reader can produce an entry
that verifies under $W$'s public key.
Hence every correct reader selects a valid writer-generated value, namely the
one carrying the largest sequence number.

Therefore, $
\forall r \in C :
P(r \text{ returns } \bot) = 0.
$
\end{proof}

\noindent
{\textbf{Resilience Factor.}}

\begin{theorem}
If $W$ is correct and $C \neq \emptyset$,
then the SnapObject-based construction
is Byzantine-linearizable for all $f \le n - 1$.
\end{theorem}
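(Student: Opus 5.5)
The plan is to reduce the resilience claim to the per-realization Byzantine-linearizability theorems already proved, and then check that none of their arguments uses a bound on $f$ other than $C \neq \emptyset$. There are three steps.

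\textbf{Step 1: fix the realization and collect its correctness results.} Fix one of the four realizations of \snapqueue. For the linked list this is Theorem~\ref{thm:ll-byzlin}; for the wait-free circular array it is Theorem~\ref{thm:wfarray-byzlin}; for the atomic array it is the array theorem under the phase model; and the \hwq\ follows by the same carry-over argument. Each of these establishes two things:
\begin{itemize}
\item object-level linearizability, via Lemma~\ref{lem:invocation-independence}, which is independent of who invokes the methods;
\item a ``latest value'' lemma (Lemma~\ref{lem:ll-latest}, Lemma~\ref{lem:array-latest}, Lemma~\ref{lem:wfarray-read}), from which the four conditions of Definition~\ref{def:byzlinz} and the Cohen--Keidar witness of Definition~\ref{def:iditDef} are derived.
\end{itemize}

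\textbf{Step 2: audit where Byzantine processes enter the argument.} I will check that Byzantine processes appear in the proofs only through the number of insertions per phase. By the single-insert assumption of Section~\ref{subsec:design} (or (R2) for the wait-free array), at most $p$ reader insertions, respectively $px$, occur per phase. This count already covers the worst case in which every scheduled reader of a phase is Byzantine, and that worst case is independent of $f$. So for any $f \le n-1$, the writer's most recent authenticated entry still lies in the $p+1$ (or $W$) window. Unforgeability plus the sequence number bound into the signature then make the largest-$\mathit{seq}$ selection return the writer's most recent value, even against replay. This is exactly the preceding theorem ($P(\bot)=0$), and I will invoke it directly.

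\textbf{Step 3: explain why $C \neq \emptyset$ is the only requirement.} Byzantine linearizability is a condition on $H|_{\text{correct}}$. If $C=\emptyset$ and the writer is correct, the only correct operations are writes, and the condition holds vacuously. The meaningful content therefore needs at least one correct reader, and $f\le n-1$ guarantees this. Lock-step consistency (Observation~\ref{obs:intraphase}) concerns only the correct readers within a phase, so it too holds for any number of Byzantine co-scheduled threads.

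\textbf{Main obstacle.} The delicate point is the audit in Step 2. I must confirm that no argument implicitly counts on some correct readers being scheduled, for example through the memory-reclamation bound $\Delta$ or the array size $K$ chosen in terms of $n$ and $p$. I would handle this by observing that $\Delta$ and $K$ are computed from $p+1$ (or $px+1$) accesses per phase under round-robin scheduling, and these counts treat all scheduled processes identically, correct or Byzantine. Byzantine threads also cannot stretch phases or obtain extra scheduling slots. With $\Delta$ and $K$ thus independent of $f$, the theorem follows for all $f\le n-1$.
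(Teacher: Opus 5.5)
Your argument is correct and rests on the same key fact as the paper's proof. Byzantine readers influence the construction only through the number of insertions per phase, which the scheduler caps at $p$ (or $px$) independently of $f$. Hence the writer's signed entry always lies in the window returned by \textsc{ReadLatest}.

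The organization differs. The paper argues directly: it fixes a correct reader, re-derives that every returned window contains a writer-signed entry that passes verification, and concludes that correctness needs only one correct reader, i.e.\ $f \le n-1$. You instead reduce to the per-realization theorems (Theorem~\ref{thm:ll-byzlin}, Theorem~\ref{thm:wfarray-byzlin}, and the array and \hwq\ analogues) and audit them for hidden dependence on $f$. That is closer to the paper's proof of Theorem~\ref{thm:wfarray-resilience} than to its proof of this statement.

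The paper's route is shorter, but on its own it only yields non-$\bot$ reads. It leaves implicit the passage from ``some valid entry is in the window'' to freshness under replay, the ordering conditions of Definition~\ref{def:byzlinz}, and the witness of Definition~\ref{def:iditDef}. Your reduction makes that passage explicit through the latest-value lemmas.

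Your check that $\Delta$ and $K$ are derived from per-phase access counts, which treat correct and Byzantine threads alike, covers a dependence the paper's proof never mentions. It also covers the \hwq, whose scan length is $\Delta+p+1$ rather than $p+1$. Finally, your observation that the case $C=\emptyset$ holds vacuously shows that $C\neq\emptyset$ is what gives the statement content, not something the argument actually uses.
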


\begin{proof}
Suppose $|C| \ge 1$ and let $r \in C$.

In every phase, the writer inserts a value that verifies under its public key.
The operation $\textsc{ReadLatest}$ returns the $p+1$ most recent
insertions from the underlying SnapObject.
Reader $r$ verifies the signature of each returned value using
$W$'s public key and discards any value whose signature is invalid.
By unforgeability of digital signatures, no Byzantine process can
produce a value that verifies under $W$'s public key.
Since $W$ is correct and continuously inserts properly signed values,
every $p+1$ values returned by $\textsc{ReadLatest}$
contains at least one value produced by $W$.
Thus, among the returned candidates, at least one value passes
signature verification.

Thus correctness depends only on the existence of
at least one correct reader,
which is equivalent to $f \le n - 1$.
\end{proof}

\noindent\emph{Remark (replay).} Unlike the CAReg construction of
Section~\ref{sec:atReg}, whose freshness properties hold only against a
non-replaying adversary (Appendix~\ref{sec:CCAeg}), the register of
Algorithm~\ref{alg:algo2} satisfies the ordering properties of
Definition~\ref{def:byzlinz} against the general adversary, under every
realization of \snapqueue: a Byzantine process may re-insert an older
writer-signed entry into the object, but the entry retains its original
sequence number and is therefore never selected by the reader.

%% file: SignatureAware.tex
\subsubsection{\textbf{Design Features}}
Following assumptions are made for the algorithmic solution:
\begin{enumerate}
    \item Shared Memory
    \item \ARegister (\AReg)
    \item Signatures
\end{enumerate}

\subsubsection{\textbf{Algorithmic Solution}}
\label{subsec:ideal_algo_sol}
\paragraph{Algorithm Overview.}
The signature-aware memory object maintains a single atomic register
\texttt{reg}, initially set to \texttt{NULL}. The object exports two operations:
\texttt{Write} and \texttt{Read}.
\begin{algorithm}[h]
\caption{Signature-Aware Single-Memory Object}
\label{alg:ideal}
\begin{algorithmic}[1]
\Statex
\State \textbf{Shared variable:}
\State \quad $atomic<memory>\ reg \gets \langle NULL, 0 \rangle$ \Comment{$reg = \langle val, seq \rangle$}
\Statex
\Procedure{Write}{$\langle v, seq, sig \rangle$}
    \If{$\textsc{Verify}(\langle v, seq \rangle, sig)\ \wedge\ seq > reg.seq$}
        \State $reg \gets \langle v, seq \rangle$
    \EndIf
\EndProcedure
\Statex
\Procedure{Read}{$ $}
    \State \Return $reg.val$
\EndProcedure
\end{algorithmic}
\end{algorithm}


A \texttt{Write} operation takes as input a triple $\langle v, seq, sig \rangle$, where $v$ is the value to be written, $seq$ is the writer's sequence number, and $sig$ is a digital signature over the pair $\langle v, seq \rangle$, exactly as produced by the writer in Algorithm~2. The memory object maintains, alongside the stored value, the sequence number of the most recently accepted write. Upon invocation, the memory verifies the signature using a deterministic verification procedure \texttt{Verify} and additionally checks that the supplied sequence number strictly exceeds the stored one. If both checks pass, the register is atomically updated to $\langle v, seq \rangle$; otherwise, the write has no effect.

A \texttt{Read} operation simply returns the value currently stored in \texttt{reg}.

The key property of this construction is that both authentication and freshness are enforced internally by the memory object. Signature verification ensures that Byzantine processes cannot install forged values, while the monotonicity check on the stored counter ensures that they cannot re-install old ones: since the sequence number is bound inside the signature, a Byzantine process can neither forge a fresh counter nor strip the counter from a replayed pair, so a replay necessarily carries a stale sequence number and is rejected. Consequently, the register state can be modified only by the writer's genuinely new values, and all guarantees of this construction hold against the general adversary of Section~\ref{sec:model}.

\subsubsection{\textbf{Correctness for Idealized Signature-Aware Memory}}
\label{subsubsec:correctnessofsigawarememory}
We show that Algorithm~\ref{alg:ideal} implements a Byzantine-linearizable
single-writer multi-reader (SWMR) register according to Definition~\ref{def:byzlinz}.

\medskip
\noindent\textbf{Proof Structure.}
The proof follows the same structure as for CAReg.
We first establish object-level linearizability.
We then prove writer authenticity.
Finally, we show that the ordering properties required
for Byzantine linearizability follow from atomic
register semantics.

\medskip
\noindent\textbf{Object-Level Linearizability}

\begin{lemma}[Signature-Aware Register Linearizability]
\label{lem:sigmem-linearizable}
All completed read and write operations on the
signature-aware memory object (Algorithm~\ref{alg:ideal})
are linearizable in the Byzantine GPU-inspired
shared-memory model.
\end{lemma}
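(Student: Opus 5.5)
We will follow the pattern used for CAReg (Lemma~\ref{lem:careg-linearizable}) and for the linked list (Lemma~\ref{lem:ll-linearization}): assign each completed operation of Algorithm~\ref{alg:ideal} a single atomic linearization point, then show that ordering by these points gives a legal sequential history respecting real-time order. The object state is the pair $\langle val, seq\rangle$ held in the single atomic location \texttt{reg}, and the sequential specification is the guarded register. A \texttt{Write}$(\langle v,seq,sig\rangle)$ sets the state to $\langle v,seq\rangle$ if \textsc{Verify} succeeds and $seq$ exceeds the stored counter, and otherwise leaves it unchanged. A \texttt{Read} returns the stored $val$.

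\textbf{Linearization points.} For \texttt{Read}, the linearization point is the single atomic load of \texttt{reg}. For \texttt{Write}, we rely on the idealization that the check-and-update on \texttt{reg} is one atomic step of the memory object. Concretely, \textsc{Verify} is a deterministic local computation on the arguments, and the comparison $seq > reg.seq$ together with the assignment is treated as a conditional atomic update, in effect a CAS on the pair. The linearization point of a successful write is this atomic update. A rejected write linearizes at its atomic read of \texttt{reg} that failed the comparison, or, if \textsc{Verify} already failed, at any instant inside its interval, since it has no effect on the state.

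\textbf{Legality and real-time order.} Ordering the completed operations by these instants gives a total order, and it extends $<_H$ because every point lies between invocation and response (Lemma~\ref{lem:lp-localization}). We then argue by induction along this order that the value of \texttt{reg} at each instant equals the abstract state after applying the prefix of writes. As a result, every read returns the value of the most recent accepted write, or \texttt{NULL} if none exists. Finally, Lemma~\ref{lem:invocation-independence} extends this to invocations issued by Byzantine processes. Byzantine processes reach \texttt{reg} only through these two methods, so their arbitrary arguments are simply writes that get rejected or, if they carry a valid signature and a fresh counter, accepted.

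\textbf{Main obstacle.} The delicate step is justifying atomicity of the verify–compare–store sequence in \texttt{Write}. If the comparison and the store were separate steps, two concurrent accepted writes could interleave so that a smaller $seq$ overwrites a larger one. My plan is to state explicitly that in the idealized model the guard and the store form one atomic memory operation. If that assumption must be weakened, I would instead implement the update as a CAS retry loop on \texttt{reg} and linearize at the successful CAS. This retry loop terminates because only the single correct writer produces fresh valid counters, and that writer is sequential.
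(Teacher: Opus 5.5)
Your proposal is correct and follows essentially the same argument as the paper: linearize each read at its atomic load of \texttt{reg} and each accepted write at its atomic update of \texttt{reg}, order the operations by these instants, and invoke Lemma~\ref{lem:invocation-independence} to cover invocations issued by Byzantine processes. You go further than the paper in two places. You place rejected writes explicitly, and you state as an assumption that the guard on Line~4 and the store on Line~5 form one atomic step; the paper glosses over this by calling the update atomic and appealing to the linearizability of atomic read/write registers. Your observation that a separate check and store would let a stale accepted write overwrite a fresher one is a genuine point the paper leaves implicit.
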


\begin{proof}
Algorithm~\ref{alg:ideal} maintains a single atomic register \texttt{reg}.
A successful \textsc{Write} operation atomically updates
\texttt{reg} (Line~5).
A \textsc{Read} operation atomically reads \texttt{reg}
(Line~9).
A write whose signature verification fails
(Line~4) has no effect on the state.

The linearization point of a successful write
is the atomic assignment to \texttt{reg},
and the linearization point of a read
is the atomic read of \texttt{reg}.

Since atomic read/write registers are linearizable
under crash failures, and linearizability is preserved
under Byzantine invocations by Lemma ~\ref{lem:invocation-independence},
Algorithm~\ref{alg:ideal} is linearizable.
\end{proof}

\medskip
\noindent\textbf{Value Validity}

\begin{lemma}[Writer Authenticity]
\label{lem:sigmem-authenticity}
If a correct read returns a value $v \neq \texttt{NULL}$,
then $v$ was written by the authorized writer.
\end{lemma}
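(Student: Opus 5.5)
The statement to prove is Lemma~\ref{lem:sigmem-authenticity}: a correct read that returns $v \neq \texttt{NULL}$ returns a value produced by the authorized writer. I would argue by an invariant on the object's state rather than on individual reads.

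\emph{Step 1: state invariant.} I will show by induction over the linearization order of Lemma~\ref{lem:sigmem-linearizable} that whenever \texttt{reg} $= \langle v, seq\rangle$ with $v \neq \texttt{NULL}$, the pair $\langle v, seq\rangle$ was signed by the authorized writer. The base case holds because \texttt{reg} starts at $\langle \texttt{NULL}, 0\rangle$. For the inductive step, the only instruction that modifies \texttt{reg} is the assignment in Line~5. That line executes only when $\textsc{Verify}(\langle v, seq\rangle, sig)$ succeeds in Line~4. Since the system model makes the object non-corruptible, no Byzantine process can alter \texttt{reg} by any other route. Reads and rejected writes leave the state unchanged, so they preserve the invariant.

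\emph{Step 2: from a successful verification to the writer.} A successful $\textsc{Verify}$ under the writer's public key means that $sig$ is a valid signature on $\langle v, seq\rangle$. By unforgeability of the signature scheme, as used in Lemma~\ref{lem:writer-authenticity}, no process other than the writer can produce such a signature on a pair the writer never signed. So either the writer itself invoked the write, or a Byzantine process replayed a triple the writer had signed. In both cases the value $v$ was written by the authorized writer.

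\emph{Step 3: the read.} A correct \textsc{Read} atomically returns $reg.val$ at its linearization point, as shown in Lemma~\ref{lem:sigmem-linearizable}. If the returned value is not \texttt{NULL}, the invariant gives the claim.

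I expect the main obstacle to be stating the replay case precisely. Authenticity here means that the value originated with the writer, not that the writer's own operation installed it, matching the parenthetical remark in Lemma~\ref{lem:writer-authenticity}. I would add one line noting that the sequence-number check in Line~4 in fact rejects replays, so after the first write only fresh writer writes change \texttt{reg}. This point is not needed for authenticity itself. If the writer is Byzantine, the "authorized writer" is still the key holder, so the statement remains true, though vacuous in strength.
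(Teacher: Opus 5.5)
Your proof is correct and follows the paper's argument: \texttt{reg} changes only through the assignment in Line~5, which is guarded by the signature check in Line~4, so by unforgeability only writer-signed values can ever be stored. Your explicit induction over the state of \texttt{reg} and your replay discussion make that one-line argument precise. One caveat on your closing aside: by the paper's Observation, a Byzantine writer may share its private key. In that case the claim holds only in the paper's sense of ``writer-generated'' (a signature that verifies under the writer's public key), not literally as ``the writer itself produced it.''
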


\begin{proof}
By Algorithm~\ref{alg:ideal} (Lines~4--5), the register is updated
only if the supplied signature verifies successfully
under the authorized writer's public key.
Since Byzantine processes cannot forge valid signatures,
only values generated by the authorized writer
can modify \texttt{reg}.
\end{proof}

\begin{lemma}[Replay Resistance]
\label{lem:sigmem-freshness}
No write invocation by a Byzantine process modifies \texttt{reg}. In particular, re-installing a previously signed pair has no effect, and \texttt{reg.seq} is strictly increasing over successful writes.
\end{lemma}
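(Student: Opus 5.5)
The plan is a case analysis on the two branches of the \textsc{Write} procedure of Algorithm~\ref{alg:ideal}, combined with an invariant on \texttt{reg.seq}. The state of \texttt{reg} changes only at Line~5, and Line~5 runs only when both guards of Line~4 hold: $\textsc{Verify}(\langle v, seq\rangle, sig)$ succeeds under the writer's public key, and $seq > \texttt{reg.seq}$. Lemma~\ref{lem:sigmem-linearizable} lets us treat each guard check and its update as one atomic step. Under that reading, the strict-increase claim is immediate: every successful write replaces \texttt{reg.seq} with a strictly larger value. Since the initial value is $0$, \texttt{reg.seq} is a strictly increasing sequence over successful writes.

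To show that Byzantine invocations have no effect, I would fix a \textsc{Write}$(\langle v, seq, sig\rangle)$ issued by a Byzantine process and split into two cases according to whether the signature verifies.

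\emph{Case 1: the signature fails.} Then the guard of Line~4 is false and \texttt{reg} is unchanged.

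\emph{Case 2: the signature verifies.} By unforgeability (as already used in Lemma~\ref{lem:sigmem-authenticity}), the pair $\langle v, seq\rangle$ was signed by the authorized writer during one of its own writes. So the triple is a replay of some writer triple with sequence number $seq$; the Byzantine process cannot change $seq$ without invalidating $sig$. Now compare with the writer's own invocation of that same triple. Because the writer is sequential and its local counter increments before each signing, that invocation occurred before the Byzantine one. When the writer's write linearized, it set $\texttt{reg.seq} \ge seq$, unless the guard had already rejected it because $\texttt{reg.seq} \ge seq$ held at that point. Either way, by the strict-increase invariant, \texttt{reg.seq} never decreases afterwards. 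Hence at the Byzantine invocation $\texttt{reg.seq} \ge seq$, the guard $seq > \texttt{reg.seq}$ fails, and \texttt{reg} is unchanged.

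The main obstacle is the timing hidden in Case 2. A Byzantine process could obtain the signed triple before the writer's own \textsc{Write} of it has linearized, for example by reading the writer's pending argument. This is ruled out by the model's restriction that a Byzantine thread cannot read another thread's private registers or local memory (Section~\ref{sec:model}). Consequently, a signed triple becomes observable to others only through \texttt{reg}, which exposes only $\langle val, seq\rangle$ and never $sig$, and only after the writer's successful write. I would state this explicitly as the justification that any Byzantine use of a valid triple is ordered after the writer's own write of it. Without this step, the "in particular" claim about replays would be the only part of the lemma that does not follow directly from the guard.
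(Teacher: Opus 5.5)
Your proof is correct and is essentially the paper's argument: the Line~4 guard, unforgeability, the sequence number bound into the signature, and monotonicity of \texttt{reg.seq} together force any replayed triple to fail $seq > \texttt{reg.seq}$. Your added step is that a Byzantine process cannot hold a valid triple before the writer's own \textsc{Write} of it takes effect, since it cannot read the writer's local memory and \texttt{reg} never stores $sig$; this justifies explicitly what the paper only asserts as ``every writer-issued write with that number has either already been applied or superseded.''
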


\begin{proof}
A write modifies \texttt{reg} only if $\textsc{Verify}(\langle v, seq \rangle, sig)$ succeeds and $seq > \texttt{reg.seq}$ (Line~4). By unforgeability, a Byzantine process cannot produce a valid signature over any pair the writer has not signed; the only verifiable triples available to it are copies of the writer's past writes. Each such triple carries the sequence number the writer bound to it at signing time, which cannot be altered without invalidating the signature. Since the writer's sequence numbers are strictly increasing and every writer-issued write with that number has either already been applied or superseded, a replayed triple satisfies $seq \le \texttt{reg.seq}$ and fails the guard. Hence every successful write originates from a fresh invocation by the writer, and \texttt{reg.seq} strictly increases.
\end{proof}

\subsubsection{Byzantine Linearizability}

We now show that Algorithm~\ref{alg:ideal} satisfies the ordering
properties required by Definition~\ref{def:byzlinz}.

Because the object is a single atomic register,
all writes are totally ordered by their linearization points,
and each read returns the value written by the most recent
preceding write.

Thus, the proofs of
\emph{No Future Read},
\emph{No Read from Distant-Past},
\emph{No New--Old Inversion}, and
\emph{Intra-Phase Reader Consistency}
are identical to those given for CAReg
(Lemmas~\ref{thm:no-future}--\ref{thm:intra-phase}),
replacing references to CAReg with Algorithm~\ref{alg:ideal}
and using Lemma~\ref{lem:sigmem-linearizable}
in place of Lemma~\ref{lem:careg-linearizable}.

We therefore omit the details.
\begin{theorem}
Algorithm~\ref{alg:ideal} implements a Byzantine-linearizable
SWMR register.
\end{theorem}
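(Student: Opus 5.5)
The plan follows the same template as the proof of Theorem~\ref{thm:byz-linearizable} for CAReg and Theorem~\ref{thm:ll-byzlin} for the linked list. Three ingredients are needed: object-level linearizability, value validity together with freshness, and the four ordering conditions of Definition~\ref{def:byzlinz}.

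\textbf{Step 1: linearizability.} Invoke Lemma~\ref{lem:sigmem-linearizable}. Every completed operation of Algorithm~\ref{alg:ideal}, including those issued by Byzantine processes (via Lemma~\ref{lem:invocation-independence}), linearizes at its atomic access to \texttt{reg}. This gives a total order $S$ on completed operations that respects real-time precedence.

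\textbf{Step 2: the witness for Definition~\ref{def:iditDef}.} By Lemma~\ref{lem:sigmem-freshness}, no Byzantine invocation changes \texttt{reg}. Consider a write that passes the guard at Line~4. Its triple carries the writer's signature over $\langle v,seq\rangle$ with $seq$ exceeding the stored value. Since replays fail the guard, this must be a fresh write of the writer. Every other write, including every failed or Byzantine write, leaves the state unchanged, so we treat it as a no-op.

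Take $H'$ to be the projection of $H$ onto correct processes together with the successful writes, placed at their linearization points. In the sequential history induced by $S$, each read returns \texttt{reg.val} at its linearization point. That value is exactly the value of the most recent successful, hence writer, write preceding it, or the initial \texttt{NULL} if there is none. So the history is legal for the register specification, and $H'|_{\text{correct}} = H|_{\text{correct}}$.

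One subtlety needs care: the writer's own write may fail the guard. This can happen only if its $seq$ is not fresh, which is impossible for a correct writer whose counter strictly increases and whose previous writes are the only successful ones. Hence every correct write succeeds and takes effect at its linearization point.

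\textbf{Step 3: ordering conditions.} Each condition follows by reusing the CAReg arguments with Lemma~\ref{lem:sigmem-linearizable} in place of Lemma~\ref{lem:careg-linearizable}.
\begin{itemize}
\item \emph{No Future Read}: repeat the argument of Lemma~\ref{thm:no-future} with Lemma~\ref{lem:lp-localization}.
\item \emph{No Read from Distant-Past}: use Lemma~\ref{thm:no-distant-past}. The non-replaying hypothesis is no longer needed, because Lemma~\ref{lem:sigmem-freshness} guarantees that a replayed older value can never re-enter \texttt{reg}, so the most recent preceding write is the writer's later write.
\item \emph{No New--Old Inversion}: follows since \texttt{reg.seq} is strictly increasing over successful writes (Lemma~\ref{lem:sigmem-freshness}) and reads at later linearization points see a state with larger or equal $seq$.
\item \emph{Intra-Phase Reader Consistency}: same-phase correct readers execute the single atomic read of \texttt{reg} in lock-step and see the same state, as in Lemma~\ref{thm:intra-phase}.
\end{itemize}

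The main obstacle is Step~2: justifying that dropping the general-adversary restriction is sound. This requires arguing that Lemma~\ref{lem:sigmem-freshness} covers writes concurrent with the writer's own pending write. I would handle this by noting that the guard and the assignment at Lines~4--5 are executed atomically by the object, as the idealized model assumes. A replay racing with the writer therefore carries a sequence number at most that of an already-applied writer write and is rejected.
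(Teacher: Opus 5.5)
Your proposal is correct and takes essentially the same approach as the paper, which likewise combines Lemma~\ref{lem:sigmem-linearizable}, writer authenticity, and the CAReg ordering arguments (Lemmas~\ref{thm:no-future}--\ref{thm:intra-phase}) re-run with Lemma~\ref{lem:sigmem-linearizable} in place of Lemma~\ref{lem:careg-linearizable}. You are more explicit than the paper: you build the Definition~\ref{def:iditDef} witness directly, and you use Lemma~\ref{lem:sigmem-freshness} to discharge the non-replaying hypothesis of Lemmas~\ref{thm:no-distant-past} and~\ref{thm:no-new-old}, both of which the paper leaves implicit.
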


\begin{proof}
Lemma~\ref{lem:sigmem-linearizable}
establishes object-level linearizability.
Lemma~\ref{lem:sigmem-authenticity}
ensures writer authenticity.
The required ordering properties follow
from atomic register semantics as argued above.
Hence, Algorithm~\ref{alg:ideal} is Byzantine-linearizable.
\end{proof}

\subsubsection{\textbf{Resilience of the Idealized Signature-Aware Memory}}
\label{subsubsec:resilienceofsigawarememory}
Let $W$ denote the writer and
$R = \{r_1,\dots,r_n\}$ the set of readers.
Let $F \subseteq R$ be the set of Byzantine readers.

Assume $W$ is correct.

\medskip
\noindent\textbf{Probability of Returning $\bot$.}

\begin{theorem}
If $W$ is correct, then for every execution and every
$r \in R \setminus F$,
\[
P(r \text{ returns } \bot) = 0.
\]
\end{theorem}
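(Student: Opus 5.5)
Since $P(r \text{ returns } \bot)=0$ is asserted for every execution, I will prove the stronger deterministic claim: once the writer has completed its first write, no correct read in any execution of Algorithm~\ref{alg:ideal} returns $\bot$ (or \texttt{NULL}). The probability then vanishes trivially. As in the \snapqueue\ analysis, the statement only concerns reads after the register has been initialised by the writer; before the first write the returned value is the initial \texttt{NULL}, which is the register's initial value $v_0$ and not the failure value $\bot$. I will make this convention explicit.

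\textbf{Step 1: invariant on \texttt{reg}.} By induction over the atomic updates of \texttt{reg} (Line~5), every state of \texttt{reg} after the first successful write has the form $\langle v, seq\rangle$, where $\langle v,seq\rangle$ was signed by the authorized writer. The update is guarded by $\textsc{Verify}(\langle v,seq\rangle,sig)$. By Lemma~\ref{lem:sigmem-authenticity} and Lemma~\ref{lem:sigmem-freshness}, no Byzantine invocation ever changes \texttt{reg}, whether it presents a forged triple or a replayed one. Hence the only transitions are fresh writer writes, and each of them installs a valid signed pair. In particular, no write can ever reset \texttt{reg} to $\langle \texttt{NULL},0\rangle$ or to any unsigned content. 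This is the key difference from CAReg, where a Byzantine overwrite was exactly the source of $\bot$.

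\textbf{Step 2: the writer's first write succeeds.} The correct writer starts with $seq=1>0=\texttt{reg.seq}$ and supplies a valid signature. Its write is wait-free, being a single atomic register operation, so it takes effect. Subsequent writer writes carry strictly increasing sequence numbers. By Lemma~\ref{lem:sigmem-freshness}, \texttt{reg.seq} only ever holds writer-issued numbers, so each subsequent writer write also passes the guard.

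\textbf{Step 3: reads.} A correct \textsc{Read} atomically reads \texttt{reg} (Lemma~\ref{lem:sigmem-linearizable}) at a linearization point after the first write. By Step~1 it obtains a writer-signed value $v\neq\bot$. This holds for every scheduler, whether biased or not, and for every number of Byzantine processes $f\le n-1$, because none of these affects the invariant. Therefore $P(r\text{ returns }\bot)=0$.

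The main obstacle is Step~1's claim that replays cannot install anything. In particular, it must handle a Byzantine process racing the writer with a replay carrying the writer's current $seq$ before the writer's own update lands. I would resolve this by noting that the verify-and-update in Lines~4--5 is atomic in the idealized object. Even if such a replay is accepted, it installs the writer's genuinely signed current pair, which is a valid non-$\bot$ value, so the conclusion is unaffected even where Lemma~\ref{lem:sigmem-freshness} is applied loosely.
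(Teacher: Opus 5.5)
Your proof is correct and follows essentially the same route as the paper: the \textsc{Verify} guard means \texttt{reg} only ever holds writer-signed content, so every correct read returns a writer-generated value rather than $\bot$. Your explicit treatment of the initial \texttt{NULL} state, and your observation that a replay racing the writer would still install a writer-signed pair, make the argument somewhat more careful than the paper's one-line claim that the register state always equals the writer's most recent value.
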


\begin{proof}
The memory abstraction updates its state only if
a write request verifies under $W$'s public key.
Invalid write attempts have no effect.

Hence the register state always equals
the most recent value written by $W$.
Since Byzantine readers cannot alter the state,
every correct read returns a writer-generated value.

Therefore,
\[
\forall r \in R \setminus F :
P(r \text{ returns } \bot) = 0.
\]
\end{proof}

\medskip
\noindent\textbf{Resilience Factor.}

\begin{theorem}
If $W$ is correct and $R \setminus F \neq \emptyset$,
then the construction tolerates
\[
f \le n - 1.
\]
\end{theorem}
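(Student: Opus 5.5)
The plan is to mirror the resilience argument already given for the \snapqueue\ construction, with the signature-aware memory doing the work that the bounded window did there. First I would fix an arbitrary set $F$ of Byzantine readers with $|F| = f \le n-1$, so that some correct reader $r \in R \setminus F$ exists. The goal is then to show that every guarantee established for Algorithm~\ref{alg:ideal} uses no bound on $f$, and holds for every such $r$.

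The first step is to invoke Lemma~\ref{lem:sigmem-linearizable} together with Lemma~\ref{lem:invocation-independence}. These give linearizability of the object no matter how many invocations are issued by Byzantine processes, since Byzantine processes interact with \texttt{reg} only through \textsc{Write} and \textsc{Read}. Second, by Lemma~\ref{lem:sigmem-authenticity} and Lemma~\ref{lem:sigmem-freshness}, no Byzantine \textsc{Write}, whether forged or a replay, ever changes \texttt{reg}. Hence the register state is a function of the writer's invocations alone, independent of $|F|$ and of how the adversary distributes its processes across phases. Third, the previous theorem gives $P(r \text{ returns } \bot) = 0$ for every correct $r$, again without any assumption relating $f$ to $n$ or $p$. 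The ordering conditions of Definition~\ref{def:byzlinz} carry over from the CAReg proofs, which only refer to correct readers and the correct writer. Therefore all properties hold as long as at least one correct reader exists to which they apply, which is exactly $R \setminus F \neq \emptyset$, i.e.\ $f \le n-1$.

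The only subtle point, and the step I would treat most carefully, is arguing that Byzantine writes cannot delay or starve the writer's writes. In CAReg, contention is what produced a dependence on $f$. Here I would note that \textsc{Write} contains no retry loop: the guard and the atomic assignment execute once. So the writer's update is applied in every phase regardless of concurrent Byzantine invocations, and a failed Byzantine \textsc{Write} linearizes as a no-op that removes nothing from the state. This also shows that the bound is tight: if $f = n$, there is no correct reader, and the statement is vacuous.
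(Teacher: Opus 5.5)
Your proposal is correct and follows essentially the paper's argument: the memory rejects every write that is not a fresh writer-signed update, so the register state is determined solely by $W$'s writes, and correctness needs only one correct reader, i.e.\ $f \le n-1$. Your explicit appeal to Lemma~\ref{lem:sigmem-freshness} to rule out replays is slightly more careful than the paper's proof, which cites only unforgeability (a replay is not a forgery); note also that your $f=n$ remark shows only that the statement becomes vacuous there, not that the bound is tight.
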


\begin{proof}
By construction, the memory object accepts write operations
only if they are properly authorised by $W$.
Since $W$ is correct, all accepted writes originate from $W$.
By the unforgeability of authentication, no Byzantine process can
successfully perform a write that is accepted by the memory object.

Readers, whether correct or Byzantine, have no ability to modify
the stored value. Therefore, the state of the register is determined
exclusively by the sequence of writes performed by $W$.

The correct reader $r$ observes only values that were previously
written by $W$. Hence every value returned by a correct reader
corresponds to a real write operation, and the sequential order
of the register is induced solely by $W$'s writes.

Therefore correctness requires only that
$R \setminus F \neq \emptyset$,
which is equivalent to $f \le n - 1$.
\end{proof}